\documentclass[11pt]{article}
\usepackage{subfiles}
\usepackage[english]{babel}
\usepackage[a4paper,top=3 cm,bottom=3 cm,inner=2.5 cm,outer=2.5 cm]{geometry} 
\usepackage{graphicx}
\usepackage{bm}
\usepackage{mathtools}
\usepackage{amsmath}
\usepackage{amsfonts}
\usepackage{subcaption}
\usepackage{mathbbol}
\usepackage{enumitem}
\usepackage{cancel}

\setlist[itemize]{label=\textbullet}
\usepackage{float} 
\usepackage[table]{xcolor} 
\usepackage{booktabs} 
\usepackage[utf8]{inputenc} 
\usepackage{hyperref}
\usepackage{amsmath, amssymb, amsthm} 
\usepackage{eurosym}
\usepackage{tikz}
\theoremstyle{plain}
\newtheorem{theorem}{Theorem}[section] 
\newtheorem{lemma}[theorem]{Lemma}
\newtheorem{proposition}[theorem]{Proposition}
\newtheorem{definition}[theorem]{Definition}
\newtheorem{corollary}[theorem]{Corollary}

\theoremstyle{definition}
\usepackage{wasysym}
\hypersetup{pdfborder={0 0 0}} 
\addtolength{\skip\footins}{2 em} 
\usepackage{url} 
\setlength{\parindent}{0pt}
\usepackage{titlesec}

\setcounter{secnumdepth}{4}

\titleformat{\paragraph}
{\normalfont\normalsize\bfseries}{\theparagraph}{1em}{}
\titlespacing*{\paragraph}
{0pt}{3.25ex plus 1ex minus .2ex}{1.5ex plus .2ex}

\titleformat{\subparagraph}
{\normalfont\normalsize\bfseries}{\theparagraph}{1em}{}
\titlespacing*{\paragraph}
{0pt}{3.25ex plus 1ex minus .2ex}{1.5ex plus .2ex}

\usepackage{mathrsfs}

\usepackage{braket}
\usepackage{physics}

\newcommand{\supp}[1]{\text{supp}(#1)}
\newcommand{\dvol}[1]{\mathrm{d}#1}

\newcommand{\WF}[1]{{\text{WF}{#1}}}

\usepackage{csquotes}
\usepackage[
backend=bibtex,
style=numeric, 
sorting = nyt,
isbn = false,
url = false,
maxbibnames = 4,
giveninits = true,
date=year,
]{biblatex}

\begin{document}
\par 
\bigskip 
\LARGE 
\noindent 
\textbf{Regularized interacting scalar quantum field theories} 
\bigskip \bigskip
\par 
\rm 
\normalsize 
 
\large
\noindent 
{\bf Nicola Pinamonti$^{1,2,a}$} \\
\par
\small

\noindent$^1$ Dipartimento di Matematica, Universit\`a di Genova - Via Dodecaneso, 35, I-16146 Genova, Italy. \smallskip

\noindent$^2$ Istituto Nazionale di Fisica Nucleare - Sezione di Genova, Via Dodecaneso, 33 I-16146 Genova, Italy. \smallskip

\smallskip
\smallskip

\noindent E-mail: 
$^a$nicola.pinamonti@unige.it,

\normalsize
${}$ \\ \\
 {\bf Abstract} \ \ 
In this paper we consider self interacting scalar  quantum  field theories over a $d$ dimensional Minkowski spacetime with various interaction Lagrangians which are suitable functions of the field.
The interacting field observables are represented as power series over the free theory by means of 
perturbation theory.
The object which is employed to obtain this power series is
the time ordered exponential of the interaction Lagrangian which is the $S$-matrix of the theory and thus itself a power series in the coupling constant of the theory.
We analyze a regularization procedure which makes the $S$-matrix
convergent to well defined unitary operators.
This regularization depends on two parameters. One describes how much the high frequency contributions in the  propagators are tamed and a second one which describes how much the large field contributions are suppressed in the interaction Lagrangian. 
We finally discuss how to remove the parameters in lower dimensional theories and for specific interaction Lagrangians.
In particular, we show that in three spacetime dimensions for a $\phi^4_3$ theory one obtains sequences of unitary operators which are weakly-$*$ convergent 
to suitable unitary operators 
in the limit of vanishing parameters. 
The coefficients of the asymptotic expansion in powers of the coupling constant of all the possible limit points coincide and furthermore agree with the predictions of perturbation theory.
Finally we discuss how to extend these results to the case of a $\phi^4_4$ theory were the final results turns out to be very similar to the three dimensional case.

\color{black}

\bigskip
${}$

\tableofcontents

\section{Introduction}

{\bf Motivations and outlook}
\\
The concrete complete construction of (self)interacting quantum field theories is to these days available only in few cases or for low dimensional spacetimes, see \cite{GlimmJaffe, Rivasseau}. 
In the generic case, interacting fields can nevertheless be treated as formal power series in the coupling parameters.
Prediction obtained from truncated perturbative series are in extreme good agreement with observation. One of the most celebrated successes of quantum field theories treated with perturbative methods is the prediction of the fine-structure constant of quantum electro dynamics which agrees with experiments up to a very high level of precision. 
Nevertheless, a full understanding of the theoretical and mathematical aspects of a complete interacting quantum field theories is to these days not always available.

Powerful methods to analyze interacting quantum fields are those provided by the 
renormalization group program, first introduced by Wilson \cite{Wilson1, Wilson} and Polchinski \cite{Polchinski:1983gv}. According to this framework the study of features of the interacting field theories is accomplished introducing suitable regularizations governed by few parameters and analyzing the way in which properties of the theory change under rescaling of the parameters or analyzing the limits where these regularization parameters vanish.

Motivated by the renormalization group program, 
in this paper we study regularized 
theories which give origin to convergent perturbative expansions of interacting quantum field observables in terms free fields and, in some cases, we discuss the limits in which the regularization is removed. 

We are considering uncharged massive ($m>0$) quantum scalar field theories which are self interacting and which propagates on a $d$ dimensional Minkowski spacetime.
We use perturbation theory to describe the self interacting scalar field over the free case (outside the adiabatic limit) and we discuss the convergence of the corresponding power series in coupling parameters.

We shall use the methods of pAQFT to present our results 
\cite{BruDutFre09,RejznerBook, DutschBook}.  
This framework has roots in the works of Brunetti and Fredenhagen \cite{BrunettiFredenhagen00} (see also \cite{DuetschFredenhagen04}) and in the works of Hollands and Wald \cite{HollandsWald2001, HW02, HW05} which combine the analysis of covariant quantum field theories on curved spacetimes \cite{BFV03} with renormalization methods studied by Epstein and Glaser \cite{EpsteinGlaser} and Steinmann \cite{Steinmann} and which are 
used to implement the analysis of Bogoliubov \cite{BogoliubovShirkov} and  Stueckelberg \cite{Stueckelberg,Stueckelberg1}
in perturbation theory. 
In this formalism field observables are described as functionals over field configuration $\phi\in\mathcal{C}$, see \eqref{eq:configurationspace} and the set of functionals $\mathcal{F}$ given in \eqref{def:microcausal}. The quantum features of the analyzed theories are present in the product used to multiply these objects \eqref{eq:prodcut}. Furthermore, with a suitable involution (a subset of) $\mathcal{F}$ has the structure of a $*$-algebra. Once a state $\omega$ is chosen, by the Gelfand Naimark Segal (GNS) construction we can represent them as operators over a suitable Hilbert space. 

The construction of these algebras for linear (free) theory is to these days well understood \cite{BFV03} and interacting field theories can be obtained with perturbation theory built on top of it.
Although the interacting fields obtained in this way are at best understood as formal power series in the coupling parameters, various implementation of renormalization group ideas can be studied also within that framework \cite{BruDutFre09}.

\bigskip
{\bf Setup of the problem}\\
The perturbative construction of interacting fields is realized by the celebrated Bogolibov map \cite{BogoliubovShirkov}, which  is  used to represent interacting fields in therms of fields of the free theory.
More specifically, the Bogoliubov map permits to construct
expectation values of time ordered products $\mathcal{T}(\phi_1^I, \dots ,\phi_n^I)$  of interacting fields $\phi^I$ in a state $\omega$  as suitable expectation values of objects constructed with elements of the free theory
\[
\langle \mathcal{T}(\phi_1^I, \dots ,\phi_n^I)\rangle_\omega = \langle S^{-1}\mathcal{T}(S, \phi_1,\dots, \phi_n)\rangle_\omega.
\]
Here, the free theory we are working with is the massive Klein-Gordon field and its Lagrangian density $\mathcal{L}_0$ is given in \eqref{eq:freeLagrangian}.
Furthermore, the $S$-matrix of the theory, 
\begin{equation}\label{eq:S-matrix-intro}
S = \mathcal{T} \exp (\mathrm{i} L_I) = \sum_{n\geq 0} \frac{\mathrm{i}^n}{n!} \mathcal{T}(L_I,\dots , L_I) 
\end{equation}
is defined as the time ordered exponential of the interaction Lagrangian 
\begin{equation}\label{eq:LI-intro}
L_I = \lambda\int  f(\phi(x)) g(x) \dvol x
\end{equation}
where $\lambda$ is the coupling parameter, $g$ is a compactly supported smooth function and $f$ is sufficiently regular 
function (usually a polynomial). 
The power series defining $S$ is in the generic case only a formal power series.
Actually, the construction of $S$ given above in \eqref{eq:S-matrix-intro}, and thus of the Bogoliubov map, is plagued by problem of various nature up to these days solved only in part. 

The first class of problems we would like to mention are of ultraviolet nature and are well understood in the literature.
In particular, in the functional formalisms $L_I$ given in \eqref{eq:LI-intro} is an element of the set of local functionals $ \mathcal{F}_{\text{loc}} \subset \mathcal{F}$ given in \eqref{eq:localfunctionals}. The time ordered product among $n$ local functionals is constructed with operations similar to the one introduced in  \eqref{eq:star-product} with the Feynman propagator ($\Delta_{F,0}$ given in \eqref{eq:DeltaF})  at the place of the two-point function $w$.
The time ordered product constructed in this way, 
gives origin to 
 product of distributions (various combination of the Feynman propagator) which is well defined only on $C^{\infty}_0(M_d^{n}\setminus D)$ where $D$ is the total diagonal in $M_d^{n}$ which is a regions of codimension $d$ in $M_d^{n}$.
The naive extension of these distributions to the full spacetime gives origin to divergences. 
These divergences can be cured using certain renormalization prescription.
In mathematical terms, a renormalization prescription consists in the procedure of extending  distribution keeping the scaling degree.
Among the various renormalization prescription available in the literature, we recall here the recursive procedure of Epstein and Glaser \cite{EpsteinGlaser} and Steinmann \cite{Steinmann} and its recent implementation on curved spacetime \cite{BrunettiFredenhagen00, HollandsWald2001,HW02}.
For the scope of this paper we recall also the zero momentum method of Bogoliubov-Parasiuk-Hepp-Zimmermann (BPHZ) \cite{BogoPara, Hepp, Zimmermann},
we refer to \cite{BrunettiFredenhagen00, HollandsWald2001, HW02} for further details. 

The second class problems in the construction of the $S$ matrix as in \eqref{eq:S-matrix-intro}, are of infrared nature and they become manifest when the limits $g$ to $1$ are taken (adiabatic limits). These kind of problems are understood in the literature  with the help of Einstein causality, see e.g.  \cite{BrunettiFredenhagen00, HollandsWald2001}. 

Even after having solved all these issues, the convergence of the power series in $\lambda$ which defines $S$ in \eqref{eq:S-matrix-intro} is in general not at disposal.

In particular, if $f$ in $L_I$ in \eqref{eq:LI-intro} is a polynomial of degree larger than $2$, and if the spacetime dimensions are larger than $3$, $S$ can be constructed only as a formal power series in the coupling parameters $\lambda$. This formal power series is the  best case scenario only an asymptotic expansion.

The program of constructive quantum field theory has however constructed the interacting quantum field theories in the case of low dimensional theories, see the book \cite{GlimmJaffe} and references therein.
Usually there the analysis is conducted in an Eculidean spacetime and the Lorentzian case is obtained by means of a suitable analytic continuation done by means of the Osterwalder Schrader theorem \cite{OsterwalderSchrader}.
In the two-dimensional case this has been accomplished for $\lambda{\phi^4_2}$ in 
\cite{GJ1, GJ2, GJ3}.
Similar analysis for the Sine-Gordon model can be found in 
\cite{Froe76, FS}, see also the recent results
\cite{BFM1, Bahns:2016zqj, BahnsPinamontiRejzner}.

In the case of three dimensions, Borel summability of the perturbative series has been established in 
\cite{MagnenSenor}, further results about the three dimensional case can be found in \cite{GJ3d-2, GJ3d}.
Recent results have been obtained by methods of stochastic quantization by Hairer \cite{Hairer}, 
see also the results of \cite{Gubinelli}.
The combination of these ideas with methods of pAQFT has been established in \cite{DDRZ}.
Recent results about  $\phi^4$ on the sphere in  three dimensional case has been obtained in \cite{Bailleul:2023wyv}.
Recent results in the subcritical case have been studied in \cite{Duch:2023uhu}.
In a vary nice paper, Buchholz and Fredenhagen \cite{Buchholz-Fredenhagen}, see also \cite{BDFK22, BF21}, 
established a minimal set of axioms and properties satisfied by the $S$ matrices of various interaction Lagrangian which follow form Einstein causality and from the requirement of validity of the equation of motion of the theory.  Imposing these requirements, the representation of the set of all possible $S$ matrices exists by abstract arguments.

\bigskip
{\bf Strategy used in this paper}\\
In this paper we aim to construct theories and which are described by interaction Lagrangian of polynomial type (In \eqref{eq:LI-intro}, we chose $f(\phi)=P(\phi)$ where $P$ is a polynomial in the field).
We furthermore, add two suitable regularizations which permits to obtain an $S$-matrix described by a power series which converge in a suitable sense. Eventually we discuss in which sense the regularization can be removed, keeping some good properties in the objects obtained in the limit.

The first regularization which we also employ produces essentially a mean field theory (see the nice works of Kopper on this topic
\cite{Kopper:2022kge}) which is obtained 
making the substitution of the field configuration $\phi \in \mathcal{C}$ (defined in \eqref{eq:configurationspace}) to 
$G_\Lambda * \phi$
in the interaction Lagrangian
where $G_\Lambda$ is a suitable rapidly decreasing function in space, see below in equation \eqref{eq:G-lambda} for its precise form.
This regularization correspond to adding a suitable cutoff in the propagators of the theory. 
En passant we observe that this regularization arises naturally if one assumes the spacetime to be non commutative \cite{Doplicher:2019qlb}.

To remove this kind of regularization (and a similar regularization used to cure infrared divergence) a recursive procedure is usually employed in the literature. 
We recall here the recursive procedure (multiscale analysis)  of Gallavotti \cite{Gallavotti} used to construct interacting fields, see also \cite{Gallavotti1, Gallavotti2}. 
A nice application of this multiscale analysis has been used to get Borel summability of planar $\phi^4_4$ theories \cite{Porta}.
Other renromalization group related works are those of Balaban \cite{Balaban1, Balaban2, Balaban3, Balaban4}, see also Dimock \cite{Dimock}. A review about these and various other methods can be found in the book of Rivasseau \cite{Rivasseau} for an introduction, see also the interesting review paper \cite{RivasseauRev}.

The second regularization we are using tames the large values of the interacting fields. 
Combining the two we have an interaction Lagrangian of the form 
\begin{equation}\label{eq:LI-cutoffs-intro}
L_{I,\Lambda_1, \Lambda_2} = \lambda\int_M e^{-\Lambda_1 (G_{\Lambda_2}*\phi(x))^2} P(G_{\Lambda_2}*\phi(x))    g(x)  \dvol x, \qquad g \in C^\infty_0(M)
\end{equation}
where $P$ is a suitable polynomial and $*$ the ordinary convolution. This interaction Lagrangian 
depends on two positive regularization parameters $\Lambda_1$, $\Lambda_2$.
The key observation is that now $L_{I,\Lambda_1, \Lambda_2}$ can be rewritten as
\begin{equation}\label{eq:LI-vertex}
L_I = \lambda \int_M \int_\mathbb{R}  e^{\mathrm{i} a \phi(x)} f(a) g(x) \dvol a \dvol{x}
\end{equation}
where $f$ is a suitable Schwartz function. 
The time ordered products of the $S$ matrix constructed out of this interaction Lagrangian involves only regularized products of vertex operators $V_a(g) = \int_M g(x) e^{\mathrm{i} a \phi(x)} \dvol x$. Furthermore, the bounds satisfied by products of vertex operator do not grows too heavily with the perturbative order $n$ and they can be bounded by $Ce^{n}$. On the contrary, in the case of polynomial interaction of $\lambda\phi^4$ type
the number of Feynman diagrams used to describe interacting fields at order $n$ for the $\lambda\phi^4$ case grows as $(4n)!/((2n)!2^{2n})$.

\bigskip
{\bf Main results}\\
With this observation the first main result present in this paper is the following Theorem
whose proof is obtaining combining Theorem \ref{th:convergence-Smatrix} of section 
\ref{se:operators-in-hilbertspace}
and Theorem \ref{th:convergence-unitary} of section \ref{se:regularzedSmatrix}.

\begin{theorem}
In a $d\geq 2$ dimensional Minkowski spacetime $M$, consider the power series in $\lambda$ defining $S(L_I)$, the $S$-matrix constructed with the regularized propagators for an interaction Lagrangian, $L_I$ in \eqref{eq:LI-vertex}
described by a Schwartz function of the field and localized in a compact spacetime region.
When tested on a field configuration $\phi$, the power series of $S(L_I)$ is absolutely convergent, uniformly in the field configuration. 
In the GNS representation of the Minkowski vacuum, $(\mathcal{H},\pi,\Psi)$, the power series defining $\pi(S(L_I))$ converges in the strong operator topology to a unitary operator. 
\end{theorem}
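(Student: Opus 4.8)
The plan is to view $S(L_I)$ as a power series whose $n$-th coefficient is an integral, over the vertex parameters $a_1,\dots,a_n$ and the spacetime points $x_1,\dots,x_n$, of time ordered products of the elementary vertex operators $e^{\mathrm{i}a\phi(x)}$, and to control these products by a bound that is \emph{uniform in the perturbative order} once the $1/n!$ of the exponential is accounted for. First I would substitute \eqref{eq:LI-vertex} into \eqref{eq:S-matrix-intro} and use the closed Weyl-type formula for the time ordered product of exponentials of the (regularized) free field,
\[
\mathcal{T}\bigl(e^{\mathrm{i}a_1\phi(x_1)},\dots,e^{\mathrm{i}a_n\phi(x_n)}\bigr)
=\exp\Bigl(-\sum_{i<j}a_i a_j\,\Delta_{F,\Lambda}(x_i-x_j)\Bigr)\,e^{\mathrm{i}\sum_k a_k\phi(x_k)},
\]
where $\Delta_{F,\Lambda}$ is the regularized Feynman propagator. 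Evaluated on a configuration $\phi$ the last factor has modulus one, so the modulus of the whole expression is $\exp\bigl(-\operatorname{Re}\sum_{i<j}a_i a_j\,\Delta_{F,\Lambda}(x_i-x_j)\bigr)$, manifestly independent of $\phi$.

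The crucial estimate is on this real exponent. Here I would use that $\operatorname{Re}\Delta_{F,\Lambda}$ equals (half) the symmetric two point function, which is a kernel of positive type, so that $\sum_{i,j}a_i a_j\operatorname{Re}\Delta_{F,\Lambda}(x_i-x_j)\ge 0$ for real $a_i$; isolating the coincidence terms then gives
\[
-\operatorname{Re}\sum_{i<j}a_i a_j\,\Delta_{F,\Lambda}(x_i-x_j)\le \tfrac{1}{2}\,c_0\sum_i a_i^2,\qquad c_0:=\operatorname{Re}\Delta_{F,\Lambda}(0)<\infty,
\]
the finiteness of $c_0$ being exactly what the propagator regularization buys, and the constant $c_0$ being \emph{independent of $n$}. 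Integrating the resulting factorized bound $\prod_i e^{c_0 a_i^2/2}$ against $\prod_i\lvert f(a_i)\rvert$ and $\prod_i\lvert g(x_i)\rvert$ produces $(\lambda\,\lVert g\rVert_1\,C_f)^n$ with $C_f=\int_{\mathbb R}\lvert f(a)\rvert e^{c_0 a^2/2}\,\di a$, the latter integral being finite because the large field regularization makes $f$ decay faster than $e^{-c_0 a^2/2}$. Hence the $n$-th term is majorized by $(\lambda\lVert g\rVert_1 C_f)^n/n!$, whose sum is $e^{\lambda\lVert g\rVert_1 C_f}$; this yields absolute convergence, uniformly in $\phi$, and is the content of Theorem \ref{th:convergence-Smatrix}.

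For the operator statement I would pass to the GNS representation of the vacuum, where each $\pi(e^{\mathrm{i}a\phi(x)})$ is a Weyl unitary $W(a\,G_\Lambda(\,\cdot-x))$. The key simplification is that on each chamber where the times $x_1^0,\dots,x_n^0$ are distinct the time ordered product is just the operator product of the $W$'s taken in order of decreasing time, hence again a Weyl unitary of norm one; therefore $\lVert\pi(\mathcal{T}(V_{a_1}(g),\dots,V_{a_n}(g)))\rVert\le \lVert g\rVert_1^{\,n}$ and, after the $a$ integration, $\lVert\pi(n\text{-th term})\rVert\le (\lambda\lVert f\rVert_1\lVert g\rVert_1)^n/n!$. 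The series defining $\pi(S(L_I))$ then converges in operator norm, a fortiori in the strong topology, to a bounded operator. To identify the limit as unitary I would invoke the Bogoliubov unitarity relations $\bar S(L_I)\star S(L_I)=S(L_I)\star\bar S(L_I)=\mathbb{1}$, which hold order by order as algebraic identities; since $L_I$ is Hermitian one has $\pi(\bar S(L_I))=\pi(S(L_I))^{*}$, and since both series converge absolutely in norm their Cauchy product may be rearranged so that the order by order identities pass to the limit, yielding $\pi(S(L_I))^{*}\pi(S(L_I))=\pi(S(L_I))\pi(S(L_I))^{*}=\mathbb{1}$. This is Theorem \ref{th:convergence-unitary}.

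The main obstacle is the estimate of the second paragraph: obtaining a bound on the time ordered products whose constant is independent of the order $n$. A naive use of $\lvert\Delta_{F,\Lambda}\rvert\le M$ produces a constant growing like $Mn$ in the exponent and an $a$ integral that diverges for large $n$; it is only the positive type property of $\operatorname{Re}\Delta_{F,\Lambda}$ that replaces this by the $n$ independent $c_0$, and it is the interplay between the two regularizations (the propagator regularization keeping $c_0$ finite, the large field regularization giving $f$ the Gaussian decay that dominates $e^{c_0 a^2/2}$) that makes the majorant summable. Once this is in place, the operator convergence and the unitarity of the limit follow from the unitary structure of the vertices and from promoting the formal Bogoliubov identity to a genuine operator identity, which is routine given norm convergence.
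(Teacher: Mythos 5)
Your first half (the uniform convergence of the series tested on field configurations) is essentially the paper's own proof of Theorem \ref{th:convergence-Smatrix}: same reduction to the modulus of the time-ordered integrand \eqref{eq:timeordered-product-vertex}, same identification $\operatorname{Re}\Delta_{F,\Lambda}=w_s$, and the same use of positivity of $w_s$ to trade the off-diagonal terms for the diagonal ones. Your version of the key estimate is in fact slightly more direct than the paper's Lemma \ref{le:estimate}, which first doubles the points via Cauchy--Schwarz for the state $\ev{\cdot}_0$ and only then invokes positivity of the quadratic form; you apply positivity of $\sum_{i,j}a_ia_j w_s(x_i,x_j)$ straight to the exponent and get the same constant $c_0=W$. (One caveat you share with the statement as quoted: a generic Schwartz $f$ need \emph{not} satisfy $\int|f|e^{c_0a^2/2}\,\dvol a<\infty$; this Gaussian-weighted integrability is the actual hypothesis of Theorem \ref{th:convergence-Smatrix}, and you correctly identify it as the condition that matters.)

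The operator half contains a genuine error. You assert that $\pi(e^{\mathrm{i}a\phi(x)})$ is a Weyl \emph{unitary}, and hence that on each time-ordered chamber the represented integrand has norm one, giving the bound $(\lambda\|f\|_1\|g\|_1)^n/n!$. This is false in the functional formalism used here: the vertex functionals are normal ordered with respect to $w_s$, so that $\pi\bigl(e^{\mathrm{i}\Phi(h)}\bigr)=e^{\frac{1}{2}w_s(h,h)}\,\pi(W_+(h))$ (compare \eqref{eq:stateW+} and the remark following Theorem \ref{th:convergence-Smatrix}, which states exactly this point). The representative of a sharp vertex therefore has operator norm $e^{a^2W/2}$, not $1$, and the norm of the chamber integrand is $\prod_k e^{a_k^2W/2}$ --- the very factor your own ``main obstacle'' paragraph insists cannot be dropped. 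Your claimed bound with $\|f\|_1$ is thus wrong; the correct bound reinstates $C_f=\int|f(a)|e^{a^2W/2}\dvol a$ in place of $\|f\|_1$. The gap is repairable, since $C_f<\infty$ is the standing hypothesis, and with that repair your route (causal factorization on chambers plus termwise norm bounds) is a legitimate alternative to the paper's argument, which instead pulls everything back through the $*$-homomorphism $r_\Lambda$, estimates $\|\pi(\sum_n S_n)\pi(W_+(\tilde f))\Psi_\omega\|$ on the dense domain of Lemma \ref{le:pullback}, and concludes strong convergence. Two further points need attention if you pursue your route: (a) the smearing $aG_\Lambda(\cdot-x)$ is a sharp-time distribution, not an element of $C^\infty_0(M)$, so writing $\pi(e^{\mathrm{i}a\phi(x)})$ at all requires either the paper's $r_\Lambda$ machinery or an extension of the Weyl algebra to sharp-time test functions (harmless for the massive field, but it must be said); (b) Weyl-type integrands are not norm-measurable, so each order must be defined as a strong (vector-valued) integral --- the operator-norm bound on each term survives this, so after the repair you would actually obtain norm convergence of the series, which is stronger than, and consistent with, the strong convergence proved in Theorem \ref{th:convergence-unitary}. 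Your unitarity argument (order-by-order Bogoliubov relations plus convergence) is essentially the paper's, which deduces $1=r_\Lambda(S^*_\Lambda\star_\Lambda S_\Lambda)$ and applies $\pi_\omega$.
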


We then pass to discuss the limit in which the regularization parameters are removed. 
This is done in the case of low spacetime dimensions. In the two-dimensional case we study in section \ref{se:two-d-limits}
modification of the interaction Lagrangian typical of the Sine-Gordon model. In two dimensions the UV singularities shown in the propagator are particularly mild and there is no need of using any renormalization. 
In the three dimensional case, we get the second main result of this paper which is summarized in the following Theorem which descends from Theorem \ref{th:3d-limits-noLambda} and Theorem \ref{eq:th-*weak-convergence} proved in section \ref{se:limit3d}.

\begin{theorem}
In the three dimensional case, the regularization used in the propagators and in the deformation of $\phi^4$ to a Schwartz function of $\phi$ can be removed, 
taking the limits of vanishing $\Lambda_i$ in $S(L_{I,\Lambda_1,\Lambda_2})$  with $L_{I,\Lambda_1,\Lambda_2}$ in \eqref{eq:LI-cutoffs-intro}, in the appropriate direction, giving origin to convergent sequences of unitary operators (in the weak-$*$ topology) in the GNS representation of the Minkowski vacuum.
The coefficients of the expansion in powers of the coupling constant $\lambda$
of the limit point 
coincide with the coefficients obtained in the perturbative construction given with the BPHZ renormalization scheme.
\end{theorem}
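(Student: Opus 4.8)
The plan is to obtain the statement by combining the two results from which it descends: Theorem \ref{th:3d-limits-noLambda}, which controls the removal of the cutoffs coefficient by coefficient in $\lambda$ and matches the outcome with BPHZ, and Theorem \ref{eq:th-*weak-convergence}, which upgrades this to weak-$*$ convergence of the full unitaries. I would therefore split the work into a \emph{perturbative} task (convergence of every $\lambda$-coefficient to a finite limit agreeing with the BPHZ prescription) and a \emph{non-perturbative} task (weak-$*$ convergence of the operators $\pi(S(L_{I,\Lambda_1,\Lambda_2}))$ to a unitary). For both it is convenient to exploit the vertex representation \eqref{eq:LI-vertex}: the order-$n$ coefficient of $S(L_{I,\Lambda_1,\Lambda_2})$ is then a finite-dimensional integral over the vertex positions $x_1,\dots,x_n$ and the dual variables $a_1,\dots,a_n$ of products of \emph{regularized} Feynman propagators, weighted by the Schwartz coefficient functions $f_{\Lambda_1}$ coming from \eqref{eq:LI-cutoffs-intro}. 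Since the per-cutoff object is already a genuine unitary (Theorem \ref{th:convergence-unitary}) and the series converges uniformly in the field configuration (Theorem \ref{th:convergence-Smatrix}), the whole problem reduces to understanding the joint limit $\Lambda_1,\Lambda_2\to 0$.

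For the perturbative task I would use that $\phi^4_3$ is super-renormalizable. Power counting in $d=3$ gives superficial degree of divergence $\omega=3-V-E/2$ for a graph with $V$ vertices and $E$ external legs, so the only UV-divergent subgraphs are the mass-type self-energies (one- and two-loop), while the coupling and the field strength are not renormalized. In the vertex language these divergences appear exclusively as coincidence-limit factors of the regularized two-point function entering the Gaussian weights in the $a$-variables; they are local and of mass type. The role of the second regularization is precisely to produce, through the large-field factor $e^{-\Lambda_1(G_{\Lambda_2}*\phi)^2}$, a tunable local (Wick-ordering and mass) counterterm hidden in $f_{\Lambda_1}$. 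Choosing the \emph{appropriate direction} means selecting a relation $\Lambda_1=\Lambda_1(\Lambda_2)$ so that this counterterm cancels exactly the divergent coincidence limits as $\Lambda_2\to 0$, leaving a finite limit for each order-$n$ coefficient, understood as a quadratic form on the dense domain of finite linear combinations of exponential vectors.

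To identify the finite limit with the BPHZ coefficients I would invoke the uniqueness, up to local finite renormalization, of the extensions of the relevant distributions that preserve the scaling degree. Since the divergent structure is finite and local, the matching reduces to a finite computation: the local terms implicitly subtracted by the tuned $\Lambda_i$-direction differ from the BPHZ zero-momentum subtractions only by finite local terms, and these are fixed by imposing the normalization conditions (renormalized mass and vacuum normalization) built into the prescription. This yields agreement of the asymptotic expansions order by order. For the non-perturbative task I would place all the unitaries $\pi(S(L_{I,\Lambda_1,\Lambda_2}))$ in the closed unit ball of $B(\mathcal{H})$; since the GNS Hilbert space of the Minkowski vacuum is separable, Banach--Alaoglu gives weak-$*$ sequential compactness and metrizability, so along the prescribed direction every sequence has weak-$*$ convergent subsequences. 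The uniform-in-$\phi$ bounds furnish equicontinuity of the matrix elements on the exponential-vector domain, which lets me promote subsequential limits to a genuine limit and pass the relations $S S^{*}=S^{*}S=1$ through it, so that the limit is again unitary; combined with the perturbative step, all limit points share the BPHZ asymptotic expansion.

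The main obstacle is the simultaneous double limit: I expect the delicate point to be not the existence of the subtracted (finite) coefficients but the \emph{uniformity} of the bounds as $\Lambda_2\to 0$. The per-cutoff estimates controlling products of vertex operators carry constants that deteriorate as the propagator cutoff is removed, precisely because the coincidence-limit two-point function diverges; keeping these bounds summable while the $\Lambda_1$-tuning removes the divergence --- so that the convergence is strong enough on the dense domain to yield a unitary weak-$*$ limit rather than merely finite order-by-order coefficients --- is the technical heart of the argument, and is where the careful choice of the path $\Lambda_1(\Lambda_2)$ and sharp estimates on the regularized propagators' coincidence behaviour are indispensable.
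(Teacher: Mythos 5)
Your overall decomposition (a perturbative, order-by-order task plus a compactness argument for the unitaries) matches the paper's, which combines Theorem \ref{th:3d-limits-noLambda}, Corollary \ref{cor:perturbation-equal} and Theorem \ref{eq:th-*weak-convergence}. However, there is a genuine gap in your perturbative step: you claim that the large-field factor $e^{-\Lambda_1 (G_{\Lambda_2}*\phi)^2}$ itself produces the local counterterms, and that the ``appropriate direction'' $\Lambda_1=\Lambda_1(\Lambda_2)$ is what makes these cancel the divergent coincidence limits as $\Lambda_2\to 0$. This mechanism cannot work. Expanding the Gaussian factor, the quadratic term it induces has coefficient $O(\Lambda_1)$, which vanishes along any admissible path, whereas the mass counterterm actually required, $\delta m(\Lambda_2)(x)=-6\mathrm{i}\lambda^2\int \Delta_{F,\Lambda_2}^3(x,y)g(y)\dvol y$, \emph{diverges} as $\Lambda_2\to0$ (and has the wrong phase to be produced by a real, negative suppression term); no tuning of $\Lambda_1$ against $\Lambda_2$ reconciles the two. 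In the paper the counterterms are not generated by the regulator at all: they are inserted by hand into the polynomial $P$ in \eqref{eq:LIPphi43}, in closed form precisely because $\phi^4_3$ is super-renormalizable (one divergent two-point graph, the sunrise, plus two vacuum graphs). The role of the path $\Lambda_1=\Lambda$, $\Lambda_2=1/|\log\Lambda|$ is entirely different from what you assign to it: since $\Delta_{F,\Lambda_2}(x,x)\sim 1/\Lambda_2=|\log\Lambda|$, the corrections coming from the Gaussian regulator carry factors bounded by $C_{nk}\,\Lambda^n|\log\Lambda|^k$, which vanish in the limit; this is what Corollary \ref{cor:perturbation-equal} establishes, after Theorem \ref{th:3d-limits-noLambda} handles the iterated limit ($\Lambda_1\to0$ first, turning the Fourier weight into a Dirac delta, then $\Lambda_2\to0$ by BPHZ). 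Without the explicit $\Lambda_2$-dependent counterterms in $P$, your order-$n$ coefficients simply diverge and there is nothing to match to the BPHZ expansion.

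Two further points in your non-perturbative step overreach. First, you cannot ``pass the relations $SS^{*}=S^{*}S=1$ through'' a weak-$*$ limit: multiplication is not jointly weak-$*$ continuous, and a weak-$*$ limit of unitaries is in general only a contraction (the paper asserts unitarity of the limit point after Banach--Alaoglu, but it does not, and could not, derive it from continuity of the product). Second, your plan to promote subsequential limits to a genuine limit via equicontinuity of matrix elements contradicts what is actually provable here: the paper explicitly warns that different subsequences of $\{(\Lambda_1/n,\Lambda_2/\log n)\}_n$ may have different limit points, and the theorem only claims convergence of suitable subsequences; what all limit points share is the asymptotic expansion in $\lambda$, and an asymptotic series does not determine the operator. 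Equicontinuity gives you compactness, which you already have, not uniqueness.
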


The extension of the results about the limits of vanishing $\Lambda_i$ to the four dimensional Minkowski case is not straightforward essentially for two problems. 
The first problem is the fact that $\lambda\phi^4_4$ is not super-renormalizable and the renormalized interaction Lagrangian is itself known only as a power series of the coupling constant.
The second problem is in the fact that the renormalization theory in this case requires to renormalize the kinetic terms. Hence contributions proportional to $\phi\Box\phi$ needs to be added to the renoramalized interaction Lagrangian. These kind of terms cannot be easily obtained using only one type of vertex operator as in \eqref{eq:LI-vertex}. 
In section \ref{se:limit4d} we discuss how to overcome these problems.
In particular the first problem us solved considering a sequence of interaction Lagrangians labelled by $N$ obtained truncating the series of the renormalization constant by $N$. 
The second problem is solved using the wavefunction renormalization as a rescaling of the field and inserting the corresponding cutoffs in a redefined set of renormalzaiton cosntant. 
The results obtained for this case are similar to those of the $\lambda\phi^4_3$ model and are summarized in Theorem \ref{eq:th-*weak-convergence-4d}.

The results obtained in this paper requires to have $g$ smooth and of compact support in $L_I$. 
Haag's theorem forbids to take the adiabatic limit, namely the limits where $g$ tends to $1$ and to get sensible unitary operators for $S$ in that limit. 

Adiabatic limit studied for the Bogoliubov map applied to local fields or their product is treated 
in perturbation theory using Einstein causality.
Actually Einstein causality permits describe this limit as a well defined direct limit \cite{BrunettiFredenhagen00, HollandsWald2001, HW02}.

We can nevertheless consider the adiabatic limits for the expectation values of suitable observables. In section \ref{se:adiabatic-limits}
we discuss this problem employing a Mayer expansion of the logarithm of the $S$-matrix and we use Kirkwood-Salzburg equations and their consequences to bound the coefficients of the Mayer coefficients. 
The result of this section are obtained adapting results typical of quantum statistical mechancis e.g. Theorem of Penrose and Ruelle \cite{Penrose, Ruelle}, see also the lecture notes \cite{Procacci}. 
We finally observe that the results obtained in these paper do not make heavy use of the flatness of the spacetime in which we are working.
For this reason an extension of these results to a generic curved background seems to be straightforward.

\bigskip
{\bf Structure of the paper}\\
The paper is organized as follows.
In Section \ref{se:algebra-propagator} we recall some facts about pAQFT, we introduce the regularized propagators and we discuss some properties of the algebra generated by Vertex operators with regularized products.
Section \ref{se:theorems} contains the theorems about convergence of the $S$-matrices for regularized interaction Lagrangians. 
In particular, in sub Section \ref{se:regularzedSmatrix} we discuss the form of the interaction Lagrangian we shall use, we analyze their regularized time ordered products and we prove weak convergences of the series defining the $S$-matrix.
In sub Section \ref{se:operators-in-hilbertspace} we analyze the convergence of the sum defining $S$ in the strong operator topology over the Hilbert space of the vacuum of the theory.
In Section \ref{se:limits-no-cutoffs} we discuss the limits where the regulariztion is removed in two, three and four spacetime dimensions.
Finally in Section \ref{se:adiabatic-limits} we make some consideration about the adiabatic limits.

\section{Algebras and regularized propagators}
\label{se:algebra-propagator}

In this section we introduce a concrete realization of the field observables we are working with. 
We use the functional formalisms first introduced in \cite{BruDutFre09, RejznerBook, DutschBook} to analyze a scalar quantum field theory which propagates on a $d$ dimensional Minkowski spacetime $(M,\eta)$ with the signature $(-,+,\dots,+)$ with $d\geq 2$.
The free quantum field theory we start with is a real quantum scalar field theory of mass $m>0$. The Langrangian density of this theory for a field $\varphi$ on $M$ is
\begin{equation}\label{eq:freeLagrangian}
\mathcal{L}_0 = -\frac{1}{2} \partial_\mu\varphi\partial^\mu\varphi  - \frac{1}{2} m^2 \varphi^2. 
\end{equation}
The observables of both the classical and quantum theory are concretely described by functionals over field configurations. The set of the {\bf field configurations} we are working with are denoted by
\begin{equation}\label{eq:configurationspace}
\mathcal{C} :=  C^{\infty}(M,\mathbb{R}),
\end{equation}
where the possible values taken by $\phi$ are assumed to be real because we are considering uncharged fields. 
Notice furthermore that no equation of motion are imposed at this stage.      

The observables we are working with are concretely described as complex valued functionals over the field configurations $\mathcal{C}$ .  
The functionals we are considering
are called {\bf microcausal functionals} and they are defined as 
\begin{equation}\label{def:microcausal} 
\mathcal{F}:= \{A:\mathcal{C}\to\mathbb{C}\, |\, 
A^{(n)}(\phi)\in \mathcal{E}'(M^n),
\WF(A^{(n)}) \cap \left(\mathscr{V}_+^{n} \cup
\mathscr{V}_-^{n} 
\right)= \emptyset ,\,
n\in \mathbb{N}, \phi\in \mathcal{C}\}.
\end{equation}
Elements of $\mathcal{F}$ 
are smooth and compactly supported, in the sense that for every $F$ in $\mathcal{F}$, its $n$-th order functional derivative $F^{(n)}$ exists as a compactly supported distribution. 
Furthermore the wave front set of these distributions is restricted to be microcausal.
Actually, in \eqref{def:microcausal},
 $\mathscr{V}_{+}$ and $\mathscr{V}_{-}$ are the subsets of the contangent space $T^{*}M$ whose elements $(x,k)$ have covectors $k$ which are respectively future and past directed and 
$\mathscr{V}_+^n \subset  (T^{*}M)^n$ denotes its cartesian product. The condition on the wave front set ensures that it is possible to multiply these objects with products constructed with two-point function of Hadamard type \cite{HollandsWald2001, HW02, BFK}.

We equip $\mathcal{F}$ with a topology  
constructed in such a way that $F_n$ converges to $F$ if and only if for every $k$, and for every $\varphi\in\mathcal{C}$, $F_n^{(k)}(\varphi)$ converges to $F^{(k)}(\varphi)$ in the topology of $\mathcal{E}'(M^n)$ namely the topology of compactly supported distributions. $\mathcal{F}$ is thus a topological vector space.  

Later, in some cases, we shall also use the {\bf regular functionals} 
\[
\mathcal{F}_{\text{reg}} := \{ A\in \mathcal{F}\, | \, A^{(n)}\in C^\infty_0(M^n)\}
\]
 which are defined as the subset of $\mathcal{F}$ formed by elements whose functional derivative are compactly supported smooth functions.
Furthermore, {\bf local functionals} 
\begin{equation}\label{eq:localfunctionals}
\mathcal{F}_{\text{loc}} :=
\{A\in \mathcal{F} \,|\, A^{(1)}\in C^{\infty}_0(M),\, \supp A^{(n)}\subset d_n\subset M^n , \, n>1 \} 
\end{equation}
where $d_n:=\{(x,\dots, x)\in M^n |\, x\in M\}$ is the thin diagonal in $M^n$.

\bigskip
Relevant functionals which we shall work with are
the {\bf Wick monomials}
\[
\Phi^n(g)(\phi) := \int \phi(x)^n g(x)  \dvol x, \qquad
g\in C^\infty_0(M), \phi\in \mathcal{C}, 
\]
the {\bf Weyl operators}
\[
W(g)(\phi) := e^{\mathrm{i} \Phi(g)(\phi)}, \qquad
g\in C^\infty_0(M), \phi\in \mathcal{C}, 
\]
the smeared {\bf vertex operators}, 
\[
V_a(g)(\phi) := \int e^{\mathrm{i}a\phi(x)} g(x)  \dvol x, \qquad a\in \mathbb{R},\, g\in C^\infty_0(M),\,\phi\in \mathcal{C},
\]
and their integral against regular  functions of the parameter $a$.
We want to use these objects as generators of the algebras of quantum observables. 
We thus introduce suitable operations among these objects.
We need in particular to introduce certain products which are 
constructed out of suitable two-point functions $w
\in \mathcal{D}'(M\times M)$ with the requirement that the 
the symmetric part of $w$ is real, and that the the antisymmetric part of $w$ is imaginary.
On suitably regular functionals $F,G$ the product we are going to use is
\begin{equation}\label{eq:star-product}
F\star_w G := \mathcal{M} e^{\int_{M^2 } w(x,y)\frac{\delta}{\delta \phi(x)}\otimes \frac{\delta}{\delta \phi(y)}  \dvol x\dvol y} F \otimes G. 
\end{equation}
The operation $\mathcal{M}$ is the pullback on functionals of the map $\phi\in \mathcal{C}\mapsto (\phi,\phi)\in\mathcal{C}^2$,
while $\delta/\delta \phi(x)$ denotes the functional derivative. 
The involution is 
\begin{equation}
    F^{*}(\phi) = \overline{F(\phi)}.
\end{equation}
If $\mathfrak{w}$ is the two-point function of an Hadamard state \cite{Ra96, BFK, KayWald}, 
the product $F\star_{q\mathfrak{w}}G$ applied to any two elements of $F,G\in \mathcal{F}$ gives origin to a power series in $q$ whose coefficients are well defined elements of $\mathcal{F}$ \cite{BruDutFre09}, in general this power series is not convergent.
However, the power series is finite if at least one of the factors  $F$ and $G$ is a Wick monomial, furthermore, this power series is convergent if the factor are Weyl operators. As we shall see in section \ref{sse:regularized-product-vertex} the power series in $q$ converges also when the factors are Vertex operator and $w$ is at least continuous. 
To avoid these complications we introduce the following algebra which is the {\bf polynomial algebra} of fields and that we shall use to get the GNS representations in suitable states. 
\begin{definition}\label{def:algerbas}
Let $w\in \mathcal{D}'(M\times M)$ be a distribution which has real symmetric part and imaginary antisymmetric part.
We denote by $\mathcal{A}^w$ or by 
$(\mathcal{A}^w,\star_w,*)$ the smallest subset of $\mathcal{F}$ which 
is a $*$-algebra with respect to the product $\star_w$ and the $*$ operation
and which contains the identity and 
$\Phi(g)$ for $g\in C^{\infty}_0(M)$.
\end{definition}

There are other relevant algebras which can be obtained with the product $\star_w$ and with the $*$ operation introduced above. 
In particular, another relevant algebra we shall use later
is the Weyl algebra $\mathcal{W}^w$ which is the smallest subset of $\mathcal{F}$ which contains $W(h)$ with all possible $h$ and it forms an algebra with respect to $\star_w$ and $*$. 
We shall give a precise definition in section \ref{se:operators-in-hilbertspace}.
Finally, when $w$ is at least continuous, we shall use $\mathcal{V}^{w}$ which is the $*$-algebra generated by smeared vertex operator. We shall define and use it in section \ref{sse:regularized-product-vertex}.

\subsection{Regularized propagators and deformed algebras}
In the spirit of the works of Wilson and Polchinski \cite{Wilson, Polchinski:1983gv},
we want to construct an interacting quantum field theory for suitable interaction Lagrangian with the regularized propagators we are going to introduce. 
We are working in a Minkowski spacetime, however, the regularization we are going to introduce breaks Lorentz invariance. The regularization  we are considering consists in a suitable cut off of the high frequencies.  
The {\bf cutoff function} $\chi$ we shall use in this paper is constructed choosing a $\hat{\chi}\in\mathcal{S}(\mathbb{R}^{d-1};\mathbb{R})$  which  is the Fourier transform of $\chi \in C^{\infty}_0(\mathbb{R}^{d-1})$.
The function $\chi$ is smooth and compactly supported, furthermore $\chi$ is positive and symmetric
and $\|\chi\|_1=1$, in this way 
 $\hat{\chi}$ is real and $\hat\chi(0)=1$.
 A similar but not equivalent cutoff ($e^{-\frac{|\mathbf{p}|^2}{2}}$) has been used  in the context of effective quantum field theory on non commutative spacetime \cite{Doplicher:2019qlb}.
Furthermore, analogous cutoffs of high frequencies have been used extensively in the context of Euclidean quantum field theory see \cite{Kopper:2022kge}.
The {\bf regularized two-point function} we are working with is constructed with an integral kernel which takes the form
\begin{equation}\label{eq:Delta+}
\Delta_{+,\Lambda}(t,\mathbf{x}) = \frac{1}{(2\pi)^{d-1}}\int_{\mathbb{R}^{d-1}}  \frac{1}{2\omega}  e^{\mathrm{i}\mathbf{x}\mathbf{p}} e^{-\mathrm{i}\omega t} 
\hat{\chi}(\Lambda \mathbf{p})^2\,
\dvol^{d-1}\mathbf{p},
\qquad (t,\mathbf{x})\in M, \Lambda >0,
\end{equation}
where $\omega=\sqrt{|\mathbf{p}|^2+m^2}$ and where the parameter $\Lambda$  describes the intensity of the regularization. 
For strictly positive $\Lambda$, $\Delta_{+,\Lambda}$ is a smooth function. 
In the limit $\Lambda\to 0$ it converges to a distribution which is the two-point function of the vacuum state of the free massive Klein Gordon field of mass $m$.  

The associated Feynman propagator obtained time ordering $\Delta_{+,\Lambda}$, is 
\begin{equation}\label{eq:DeltaF}
\Delta_{F,\Lambda}(t,\mathbf{x})  = \frac{1}{(2\pi)^d}\int_{\mathbb{R}^d} \frac{-\mathrm{i}}{-p_0^2 +\mathbf{p}^2 +m^2+\mathrm{i}0^+} 
\hat{\chi}(\Lambda \mathbf{p})^2\,\dvol^dp ,
\qquad (t,\mathbf{x})\in M, \Lambda >0.
\end{equation}
The function $\Delta_{F,\Lambda}$ is continuous, see e.g. Proposition 4.1 in \cite{Doplicher:2019qlb} for a proof when $e^{-\Lambda^2 |\mathbf{p}|^2}$ is used at the place of $\hat{\chi}(\mathbf{p})^2$.
The functions $\Delta_{+,\Lambda}$ and  $\Delta_{F,\Lambda}$ are used to construct elements of $\mathcal{D}'(M\times M)$ using convolution and the standard pairing. In particular 
\[
\tilde{\Delta}_{+,\Lambda}(h,g) = \langle h, \Delta_{+,\Lambda}*g\rangle = \int_{M^2} h(x) \Delta_{+,\Lambda}(x-y) g(y) \dvol x \dvol y
\]
and similarly for $\Delta_{F,\Lambda}$.  
With a little abuse of notation we shall drop the symbol $\tilde{\,}$ and we shall denote these distributions with the symbol of their integral kernel.
The product $\star_{\Delta_{+,\Lambda}}$ constructed as in \eqref{eq:prodcut} will be denoted by $\star_\Lambda$, while the product obtained in the limit of vanishing $\Lambda$ is $\star_{\Delta_+}$ and we shall denote it simply by $\star$.

Products constructed with the propagators $\Delta_{+,\Lambda}$ and $\Delta_{F,\Lambda}$ can be 
related to products constructed at $\Lambda=0$ by means of the pullback on functionals of the map $\iota_\Lambda$ acting on field configurations we are going to introduce. 
The map $\iota_\Lambda:\mathcal{C}\to\mathcal{C}$ is defined as 
\begin{equation}\label{eq:iota}
\iota_\Lambda(\phi) := G_\Lambda * \phi
\end{equation}
where $*$ denotes the ordinary convolution on $M$ and where the function $G_{\Lambda}\in C^{\infty}(M)$ is
\begin{equation}\label{eq:G-lambda}
G_\Lambda(x_0,\mathbf{x}) := 
\delta(x_0) \chi(\mathbf{x})
, \qquad 
x_0\in\mathbb{R}, \,\mathbf{x}\in \mathbb{R}^{d-1},
\end{equation}
where $\delta$ denotes the Dirac delta function and $\chi\in C^{\infty}_0(\mathbb{R}^3)$ is the cutoff function chosen at the beginning of this section (it is symmetric, positive and real valued and $\|\chi\|_1=1$). 
The convolution of a compactly supported function with a smooth function is again smooth and the convolution with the Dirac delta is the identity, hence, the action of $\iota_\Lambda$ is closed in $\mathcal{C}$.

The pullback of $\iota_\Lambda$ on functionals is denoted by 
\begin{equation}
\label{eq:rlambdapullback}
r_{\Lambda} F(\phi) := F(\iota_\Lambda \phi), \qquad F\in\mathcal{F}.
\end{equation}
\begin{lemma}
The action of $r_\Lambda$, the pullback of $\iota_\lambda$ on $\mathcal{F}$, is internal in $\mathcal{F}$.
\end{lemma}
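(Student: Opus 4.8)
The plan is to verify directly that $r_\Lambda F$ satisfies the three defining conditions of $\mathcal{F}$ in \eqref{def:microcausal}: that it is Bastiani smooth as a functional on $\mathcal{C}$, that each derivative $(r_\Lambda F)^{(n)}(\phi)$ is a compactly supported distribution, and that its wave front set avoids $\mathscr{V}_+^n\cup\mathscr{V}_-^n$. First I would record a chain rule. The map $\iota_\Lambda=G_\Lambda*\,\cdot\,$ is linear and continuous on $\mathcal{C}$ (as already observed, convolution of the compactly supported $G_\Lambda$ with a smooth $\phi$ stays in $\mathcal{C}$), so $r_\Lambda F=F\circ\iota_\Lambda$ is smooth whenever $F$ is, and differentiating the composition gives, for directions $\psi_1,\dots,\psi_n$,
\[
\langle (r_\Lambda F)^{(n)}(\phi),\psi_1\otimes\cdots\otimes\psi_n\rangle=\langle F^{(n)}(\iota_\Lambda\phi),\iota_\Lambda\psi_1\otimes\cdots\otimes\iota_\Lambda\psi_n\rangle .
\]
Since $G_\Lambda$ is even (both $\delta$ and $\chi$ are symmetric) the transpose of $\iota_\Lambda$ is again convolution by $G_\Lambda$, so this reads
\[
(r_\Lambda F)^{(n)}(\phi)=G_\Lambda^{\otimes n}* F^{(n)}(\iota_\Lambda\phi),
\]
the convolution acting slot-wise on $M^n$.

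Compact support is then immediate. For each fixed $\phi$ the configuration $\iota_\Lambda\phi$ lies in $\mathcal{C}$, so $F^{(n)}(\iota_\Lambda\phi)\in\mathcal{E}'(M^n)$ because $F\in\mathcal{F}$; and $G_\Lambda^{\otimes n}\in\mathcal{E}'(M^n)$ since $G_\Lambda$ is supported in the compact set $\{0\}\times\mathrm{supp}\,\chi$. Using $\mathrm{supp}(u*v)\subseteq\mathrm{supp}\,u+\mathrm{supp}\,v$, each $(r_\Lambda F)^{(n)}(\phi)$ is again compactly supported.

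The crux is the wave front set. I would write the convolution as the pushforward $\mu_*\big(F^{(n)}(\iota_\Lambda\phi)\otimes G_\Lambda^{\otimes n}\big)$ along the addition map $\mu(x,y)=x+y$ on $M^n\times M^n$, which is a submersion and is proper on the compact product of the two supports. Combining the tensor-product wave front set bound with the pushforward calculus, and noting that the transpose of $d\mu$ sends $\zeta$ to $(\zeta,\zeta)$, the mixed terms (those carrying a zero covector in one leg) drop out and one is left with
\[
\mathrm{WF}\big(G_\Lambda^{\otimes n}* F^{(n)}(\iota_\Lambda\phi)\big)\subseteq\{(x+y,\zeta):(x,\zeta)\in\mathrm{WF}(G_\Lambda^{\otimes n}),\,(y,\zeta)\in\mathrm{WF}(F^{(n)}(\iota_\Lambda\phi))\}.
\]
In particular every covector $\zeta$ occurring in the singular directions of the convolution already occurs in $\mathrm{WF}(F^{(n)}(\iota_\Lambda\phi))$, which avoids $\mathscr{V}_+^n\cup\mathscr{V}_-^n$ by microcausality of $F$. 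Hence the same holds for $(r_\Lambda F)^{(n)}(\phi)$, and $r_\Lambda F\in\mathcal{F}$.

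The main obstacle I anticipate is precisely this wave front set estimate: one must be sure that convolution with the fixed kernel $G_\Lambda^{\otimes n}$ cannot create new singular covector directions, which is exactly the content of the "equal covector $\zeta$ in both legs" formula above. I would also remark, though it is more than the statement needs, that because $\hat G_\Lambda(p_0,\mathbf{p})=\hat\chi(\mathbf{p})$ is Schwartz in the spatial momentum and constant in $p_0$, the surviving covectors are in fact forced to be purely temporal (every slot with $\mathbf{k}_i\neq 0$ is suppressed); this extra regularity — for instance $(r_\Lambda F)^{(1)}$ turning out to be smooth — is what the later sections exploit. The remaining items, Bastiani smoothness and continuity in the topology of $\mathcal{F}$, I would treat as a routine application of the chain rule, so they are not the hard part.
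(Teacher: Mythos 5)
Your proof is correct and takes essentially the same route as the paper: after noting smoothness and compact support (from the compact support of $\chi$), the only substantive point is microcausality of $(r_\Lambda F)^{(n)}(\phi)$, which both you and the paper obtain from wave front set calculus showing that convolution with $G_\Lambda^{\otimes n}$ cannot create singular covector directions beyond those already present in $\WF(F^{(n)})$. The paper phrases this as an application of H\"ormander's Theorem 8.2.13 to the composition of $F^{(n)}$ with the kernel $\mathfrak{g}^{\otimes n}$ of $\iota_\Lambda^{\otimes n}$, whereas you use the equivalent convolution bound derived from the tensor-product and pushforward calculus; the content, including your (correct) side remark that the surviving covectors would have to be purely temporal, is the same.
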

\begin{proof}
We observe that, for every $F\in \mathcal{F}$, $r_\Lambda F$ is a well defined functional, furthermore, it is smooth and of compact support because the cutoff function $\chi$ used in $\iota_\Lambda$ in \eqref{eq:iota} is of compact support.
It is also microcausal because $\mathfrak{g}(h,g) = \langle h, G_{\Lambda}*g\rangle$, 
with $\mathfrak{g}\in \mathcal{D}'(M^2)$ and $h,g\in C^{\infty}_0(M)$,
is such that  
$\WF(\mathfrak{g})\cap (\mathscr{V}_+^2 \cup \mathscr{V}_+^2)  = \emptyset$ hence by an application of Theorem 8.2.13  of \cite{HormanderI} the composition of the distribution $ 
F^{(n)} \in \mathcal{E}'(M^n)$
with $\mathfrak{g}^{\otimes n}$ seen as a distribtion $\mathcal{E}'(M^n\otimes M^n)$ has a wavefront set which is microcausal, 
$\WF(F^{(n)}\circ \mathfrak{g}^n )\cap (\mathscr{V}_+^n \cup \mathscr{V}_+^n)  = \emptyset$.
Hence the composition of
 a microcausal functional with 
 $\iota_\Lambda$ is again microcausal.
\end{proof}

The map $r_{\Lambda}:\mathcal{F}\to\mathcal{F}$ it can be promoted to a $*$-homomorphism of algebras. 
From now on, we shall denote $\star_{\Delta_{+,\Lambda}}$ by $\star_\Lambda$ and $\star_{\Delta_{+}}$ by $\star$. 
\begin{lemma}\label{le:*-homomorphism}
Consider the restriction to 
$\mathcal{A}^{\Delta_{+,\Lambda}}$
of 
$r_\Lambda$ given in \eqref{eq:rlambdapullback}, which is the pullback to $\mathcal{F}$ of the map $\iota_\Lambda$ acting on $\mathcal{C}$  given in \eqref{eq:iota}. For every 
$F,G\in\mathcal{A}^{\Delta_{+,\Lambda}}$
it holds that
\begin{equation}\label{eq:cond-*-homomorphisms}
    r_\Lambda(F\star_\Lambda G) = r_\Lambda(F)\star r_\Lambda(G), \qquad 
r_\Lambda(F)^*=\overline{r_\Lambda(F)}=r_\Lambda(F^*),
\end{equation}
hence $r_\Lambda$ defines a $*$-homomorphisms 
\begin{equation}\label{eq:*-homomorphism}
r_\Lambda (\mathcal{A}^{\Delta_{+,\Lambda}},\star_\Lambda, *) \to (
\mathcal{A}^{\Delta_{+}},\star , * )
\end{equation}
where the product $\star_\Lambda$ is constructed with $\Delta_{+,\Lambda}$
 and $\star$ with $\Delta_+=\Delta_{+,0}$.
\end{lemma}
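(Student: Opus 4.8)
My plan is to reduce both identities in \eqref{eq:cond-*-homomorphisms} to a single structural fact about the propagators, after which everything follows from the chain rule. The key identity I would establish first is
\begin{equation*}
\Delta_{+,\Lambda} = G_\Lambda * \Delta_+ * G_\Lambda,
\end{equation*}
checked most transparently in Fourier space: the spatial Fourier transform of $G_\Lambda$ is $\hat{\chi}(\Lambda\mathbf{p})$ (the factor $\delta(x_0)$ contributing $1$ in the energy variable), so convolving $\Delta_+$ once on each side multiplies its integral kernel by $\hat{\chi}(\Lambda\mathbf{p})^2$, which is precisely the difference between \eqref{eq:Delta+} and its $\Lambda=0$ limit. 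Since $\chi$ is symmetric, real and normalized, $G_\Lambda$ is even and real, so convolution with $G_\Lambda$ is self-adjoint; in pairing form this reads $\Delta_{+,\Lambda}(h,g) = \Delta_+(\iota_\Lambda h, \iota_\Lambda g)$ for test functions $h,g$.

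Next I would record how functional derivatives transform under the pullback $r_\Lambda$. Because $\iota_\Lambda$ is linear, the chain rule gives $(r_\Lambda F)^{(1)}(\phi)[\psi] = F^{(1)}(\iota_\Lambda\phi)[\iota_\Lambda\psi]$, and iterating one finds $(r_\Lambda F)^{(n)}(\phi) = G_\Lambda^{\otimes n} * F^{(n)}(\iota_\Lambda\phi)$; that is, each of the $n$ arguments of $F^{(n)}$ evaluated at the shifted configuration $\iota_\Lambda\phi$ gets convolved with $G_\Lambda$. Here no analytic subtlety arises: since $\mathcal{A}^{\Delta_{+,\Lambda}}$ is generated by $\Phi(g)$, every $F,G$ is a polynomial functional, so $F^{(n)}$ and $G^{(n)}$ vanish for large $n$ and the star product is the \emph{finite} sum $(F\star_w G)(\phi) = \sum_n \tfrac{1}{n!}\langle w^{\otimes n}, F^{(n)}(\phi)\otimes G^{(n)}(\phi)\rangle$, so all interchanges below are legitimate.

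I would then combine the two ingredients. Evaluating $r_\Lambda(F\star_\Lambda G)$ replaces $\phi$ by $\iota_\Lambda\phi$ and uses the kernel $\Delta_{+,\Lambda}$, whereas $r_\Lambda F \star r_\Lambda G$ uses the kernel $\Delta_+$ but, by the previous step, has each argument of $F^{(n)}(\iota_\Lambda\phi)$ and $G^{(n)}(\iota_\Lambda\phi)$ convolved with $G_\Lambda$. Carrying out the inner integrations and applying the factorization $\int \Delta_+(x_i,y_i)\, G_\Lambda(z_i-x_i)\, G_\Lambda(w_i-y_i)\,\dvol{x_i}\,\dvol{y_i} = \Delta_{+,\Lambda}(z_i,w_i)$ in each of the $n$ slots turns the second expression into the first, term by term, which proves the first identity. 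The involution identity is immediate: $G_\Lambda$ is real-valued, so $\iota_\Lambda$ preserves $\mathcal{C}$, whence $r_\Lambda(F^*)(\phi) = \overline{F(\iota_\Lambda\phi)} = \overline{(r_\Lambda F)(\phi)} = (r_\Lambda F)^*(\phi)$.

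Finally, to confirm that $r_\Lambda$ genuinely maps into the target algebra, I would note that $r_\Lambda\Phi(g) = \Phi(G_\Lambda * g)$ with $G_\Lambda * g \in C^\infty_0(M)$, so generators of $\mathcal{A}^{\Delta_{+,\Lambda}}$ are sent to generators of $\mathcal{A}^{\Delta_+}$; multiplicativity then forces the entire image into $\mathcal{A}^{\Delta_+}$, yielding the $*$-homomorphism \eqref{eq:*-homomorphism}. The only genuinely delicate point is the propagator factorization together with the self-adjointness of convolution by $G_\Lambda$; once that is in hand, the rest is bookkeeping that is rigorous precisely because the sums terminate on the polynomial algebra.
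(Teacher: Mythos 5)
Your proof is correct and follows essentially the same route as the paper: both rest on the propagator factorization $\Delta_{+,\Lambda}=G_\Lambda * \Delta_+ * G_\Lambda$ (equivalently, $\langle G_\Lambda * h,\,\Delta_+ * G_\Lambda * g\rangle=\langle h,\,\Delta_{+,\Lambda} * g\rangle$), on the reality and symmetry of $G_\Lambda$ for the involution, and on the fact that $r_\Lambda\Phi(g)=\Phi(G_\Lambda * g)$ sends generators of $\mathcal{A}^{\Delta_{+,\Lambda}}$ to generators of $\mathcal{A}^{\Delta_+}$. The paper simply reduces the check to products of linear fields and leaves it as direct inspection, whereas you carry out that inspection for general polynomial elements via the chain rule on the finite star-product expansion --- a more detailed write-up of the same argument.
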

\begin{proof}
Products of linear fields and the identity
generate the algebras $\mathcal{A}^{\Delta_{+,\Lambda}}$ and $\mathcal{A}^{\Delta_{+}}$.
It is thus sufficient to check that conditions \eqref{eq:cond-*-homomorphisms} holds on products of linear fields. These considerations follow by direct inspection observing that $G_{\Lambda}$ given in equation \eqref{eq:G-lambda} is real and that
\[
\Phi(g)(\iota_\Lambda \phi) = \Phi(g)(G_\Lambda*\phi) = \Phi(G_\Lambda * g)(\phi)
\]
furthermore, $\langle G_{\Lambda} * h, \Delta_+*G_{\Lambda} * g \rangle=\langle h, \Delta_{+,\Lambda} * g \rangle$. 
\end{proof}

\subsection{Regularized product rules among vertex operators}\label{sse:regularized-product-vertex}

For our purposes we want to be able to multiply more singular objects than those present in $\mathcal{F}^p$. Actually, we need to multiply Vertex and Weyl operators. 
Furthermore, we also need to use time ordered products among these objects. 
The product formula for vertex operator can be obtained when a regularization is taken into account.
We have the following product rule
\begin{equation}\label{eq:prodcut}
e^{\mathrm{i}\Phi(f)}\star_\Lambda e^{\mathrm{i}\Phi(g)} = e^{\mathrm{i}\Phi(f+g)} e^{-\langle f,\Delta_{+,\Lambda} g\rangle}, \qquad f,g\in C^{\infty}_0(M).
\end{equation}
Which can be obtained analyzing by direct inspection the convergence of the power series in $q$
which defines the product $\star_{q\Delta_{+,\Lambda}}$
 near $q=1$.

In view of the fact that 
the integral kernel of $\Delta_{+,\Lambda}$ is smooth and that the integral kernel of $\Delta_{F,\Lambda}$ is continuous, the product formula among the exponential of the smeared linear fields given in equation \eqref{eq:prodcut} and the corresponding product formula constructed with the time ordered propagator can be extended to vertex operators. 
Thus obtaining the following {\bf $\star$-product} among vertex operators
\begin{equation}\label{eq:star-product-vertex}
{\prod_{i=1}^n}^{\star_\Lambda} V_{a_i}(g_i) = \int_{M^n} \prod_{i=1}^n   e^{\mathrm{i} a_i \phi(x_i)} e^{-\sum_{l<j}a_la_j\Delta_{+,\Lambda}(x_l,x_j)  }g_i(x_i) \dvol x^i, \qquad g_i\in C^{\infty}_0(M)  .
\end{equation}
Similarly for the {\bf time ordered products}
\begin{equation}
\label{eq:timeordered-product-vertex}
{\prod_{i=1}^n}^{\cdot_{T,\Lambda}} V_{a_i}(g_i) = \int_{M^n} \prod_{i=1}^n   e^{\mathrm{i} a_i \phi(x_i)} e^{-\sum_{l<j}a_la_j\Delta_{F,\Lambda}(x_l,x_j)  }g_i(x_i) \dvol x^i , \qquad g_i\in C^{\infty}_0(M) .
\end{equation}
Similar operations can be obtained substituting $\Delta_{+,\Lambda}$ with different two-point functions. Actually, we shall use, {$w=w_s+\frac{\mathrm{i}}{2}\Delta_{\Lambda}$, where $w$ is some} symmetric smooth function 
which is obtained from the two-point function of an Hadamard state
$\Omega\in\mathcal{D}'(M^2)$ \cite{KayWald, Ra96}   
with both entries convoluted with 
$G_{\Lambda}$ given in \eqref{eq:G-lambda}
\[
w(h,g) = \Omega(G_\Lambda(h),G_\Lambda(g)).
\]
$w_s$ is its symmetric part,  $\Delta_\Lambda$ the antisymmetric part which is proportional to the causal propagator of the theory convoluted two times with $G_{\Lambda}$ given in \eqref{eq:G-lambda}.

We analyze now certain bounds satisfied by the products of various vertex operators
where, according to the discussion presented above,  the integral kernel of  $w$ is by construction continuous and, its symmetric part $w_s$ is real and positive. 

Notice that $w_F = w_s +\frac{\mathrm{i}}{2} \Delta_\Lambda+\mathrm{i} \Delta_{\Lambda,A} $ and
$w = w_s +\frac{\mathrm{i}}{2} \Delta_\Lambda$.
Hence, 
since $w_s$ is real we have the following
\begin{equation}\label{eq:stima-prod}
\begin{aligned}|V_{a_1}(g_1)\cdot_{T,w}\dots \cdot_{T,w} V_{a_n}(g_n)| 
&
\leq  
\int_{M^n} \prod_i| g_i(x_i)| \dvol x^i e^{- \sum_{l<j} a_l a_j w_s(x_l,x_j)}
\\
|V_{a_1}(g_1)\star_w\dots \star_w V_{a_n}(g_n)| 
&
\leq  
\int_{M^n} \prod_i|g_i(x_i)| \dvol x^i e^{- \sum_{l<j} a_l a_j w_s(x_l,x_j)}
\end{aligned}
\end{equation}
We furthermore observe that the right hand side of the previous inequalities coincide and can be written as the expectation value of certain commutative product of vertex operator. 
In particular
\begin{equation}\label{eq:commutative-product}
\ev{ V_{a_1}(g_1)\cdot_{w_s}\dots \cdot_{w_s} V_{a_n}(g_n)}_0
:=
\int_{M^n} \prod_ig_i(x_i) \dvol x^i e^{- \sum_{l<j} a_l a_j w_s(x_l,x_j)}
\end{equation}
where $\ev{F}_0=F(0)$, $F\in \mathcal{F}$ is for the evaluation on the vanishing field configuration and it defines a state on the commutative algebra generated by the identity and the vertex operators with the product $\cdot_{w_s}$. 
This product $\cdot_{w_s}$ is the commutative product constructed as $\star_{w_s}$ in \eqref{eq:prodcut}. 
We use a different symbol to stress that it is commutative. 
    
Actually, here $w_s$ is symmetric, furthermore $w_s$ is real, 
$a_i$ are also real, and $g_i$ are positive hence, using Cauchy-Schwarz inequality for the state $\ev{\cdot}_0$ gives for $g_i\geq 0$
\begin{align*}
|\ev{V_{a_1}\cdot_{w_s}\dots \cdot_{w_s} V_{a_n}}_0|^2
&\leq
|\ev{V_{a_1}\cdot_{w_s}\dots \cdot_{w_s} V_{a_n}\cdot_{w_s}
V_{-a_1}\cdot_{w_s}\dots \cdot_{w_s} V_{-a_n}}_{0}|
\\
=  
\int_{M^{2n}}  \!\!\!\!\!\!
e^{- \sum\limits_{l<j} a_l a_j w_s(x_l,x_j)}
&e^{- \sum\limits_{l<j} a_l a_j w_s(y_l,y_j)}
e^{\sum\limits_{l,j} a_l a_j w_s(x_l,y_j)}
\left(\prod_{i=1}^n g_i(x_i)g_i(y_i)  \right)
\dvol x \dvol y,
\end{align*}
here and below, if not explicitly stateted, $V_{a_i}=V_{a_i}(g_i)$ for suitable compactly supported smooth test functions.
Since $w_s$ is also positive we can estimate this using a technique similar to inverse conditioning of Fr\"ohlich used in other context to get a better estimate \cite{Froe76}. We have actually the following Lemma

\begin{lemma}\label{le:estimate}
Let $w_s$ be a positive, symmetric, translation invariant distribution whose integral kernel is described by a continuous bounded real functions on $M^2$. Assume $0\leq g\in C^{\infty}_0(M)$. Consider $a_{i}\in \mathbb{R}$ with $i\in \{1,\dots, n\}$.
Denote by $W=w_s(x,x)$. The following estimate holds uniformly on the field configurations $\phi \in \mathcal{C}$.
\[
|V_{a_1}(g_1)\cdot_{w_s}\dots \cdot_{w_s} V_{a_n}(g_n)|
\leq  
e^{ \sum\limits _{i } a_i^2 \frac{W}{2}} \prod_{i=1}^n \|g_i\|_1, \qquad g_i\in C^{\infty}_0(M;\mathbb{R}^{+}).
\]
\end{lemma}
\begin{proof}
Let 
$\Psi(z) = \sum_{i=1}^n \left( a_i\delta(z-x^i) - a_i\delta(z-y^i)  \right) $ where $\delta$ is the Dirac delta function be a distribution of compact support on $M$.
By direct inspection we have that 
\[
|V_{a_1}\cdot_{w_s}\dots \cdot_{w_s} V_{a_n}|^2 
\leq
\int_{M^{2n}}  \!\!\!\!\!\!
e^{ -\frac{1}{2} w_s(\Psi,\Psi)  }
e^{ +\frac{1}{2} \sum_i a_i^2  w_s(x^i,x^i) +\frac{1}{2} \sum_i a_i^2  w_s(y^i,y^i)}
\left(\prod_{i=1}^n g_i(x_i)g_i(y_i) \dvol x^i\dvol y^i \right)
\]
Since $w_s$ is positive and $\Psi$ is real valued we have 
\[
|V_{a_1}\cdot_{w_s}\dots \cdot_{w_s} V_{a_n}|^2  
\leq  
\int_{M^{2n}}  \!\!\!\!\!\!
e^{ \frac{1}{2} \sum_i a_i^2  w_s(x^i,x^i) +\frac{1}{2} \sum_i a_i^2  w_s(y^i,y^i)}
\left(\prod_{i=1}^n g_i(x_i)g_i(y_i) \dvol x^i\dvol y^i \right).
\]
Finally since $w_s$ is invariant under translations and its integral kernel is continuous we have $w_s(x^i,x^i)=W$ for every $i$, hence
\[
|V_{a_1}\cdot_{w_s}\dots \cdot_{w_s} V_{a_n}|^2  
\leq  
e^{  \sum_{i=1}^n a_i^2 W}
\int_{M^{2n}}  \prod_{i=1}^n g_i(x_i)g_i(y_i) \dvol x^i\dvol y^i 
\leq  
e^{  \sum_{i=1}^n a_i^2 W} \prod_{i=1}^n\|g_i\|_1^{2}
\]
\end{proof}
Using Lemma \ref{le:estimate}, the inequality given in \eqref{eq:stima-prod} and recalling the definition of the commutative product \eqref{eq:commutative-product}, if $|a_i|\leq a$ for every $i$, and if $g_i=g$ for every $i$ we can now estimate
\begin{align*}
|V_{a_1}(g)\cdot_{T,w}\dots \cdot_{T,w} V_{a_n}(g)| 
&\leq  
e^{n\frac{a^2 W}{2}} \|g\|_1^{n}
\\
|V_{a_1}(g)\star_w\dots \star_w V_{a_n}(g)| 
&\leq  
e^{n\frac{a^2 W}{2}} \|g\|_1^{n}.
\end{align*}

With these estimates at disposal, we can now consider the following 
\begin{definition}\label{def:Vw}
    Consider $w\in \mathcal{D}'(M^2)$ whose integral Kernel is continuous and such that $w_s = \text{Re}(w)$ is translation invariant, symmetric and it defines a positive product. 
    We denote by $\mathcal{V}^{w}$ or by $(\mathcal{V}^{w}, \star_w , *)$ the $*$-algebra which contains all possible products of ${\prod_{i=1}^n}^{\star_{w}}\int V_{a}(g_i) f_i(a) \dvol a $ with $g_i\in C^\infty_0(M)$ and $f_i\in \mathcal{S}$ with the property 
    \[
    \left|\int_\mathbb{R} |f_i(a)|e^{\frac{a^2 W}{2}}\right| < \infty.
    \]
    where $W=w_s(x,x)$.
\end{definition}
The products of various $\int V_{a}(g) f(a) \dvol a$ with the requirements of Definition \ref{def:Vw} is well posed thanks to the estimates given in Lemma \ref{le:estimate}.

Consider $w_s=\text{Re} \Delta_{+,\Lambda}$ and $w_F= \Delta_{F,\Lambda}$, we have that both 
$\mathcal{V}^{w_s} = (\mathcal{V}^{w_s},\cdot_{w_s}, *)$ and that $\mathcal{V}^{w_F} = (\mathcal{V}^{w_F},\cdot_{T,w}, *)$ are commutative $*$-algebras.
Furthermore, the evaluation on vanishing field configuration used above
\[
\ev{F}_0 = F(0), \qquad F\in\mathcal{F} 
\]
defines a state both on the commutative algebra $\mathcal{V}^{w_s}$ and on $\mathcal{V}^{w}$ if $w$ is also positive.

\section{Interaction Lagrangians, S-matrices and convergence }\label{se:theorems}

\subsection{The set of interaction Lagrangians and their S-matrices}\label{se:regularzedSmatrix}

We are interested in considering an interacting quantum field theory.
The smeared interaction Lagrangian is 
\[
L_I=\int_M \mathcal{L}_I(x) g(x)\dvol x \in \mathcal{F}
\]
where $g\in C^\infty_0(M;\mathbb{R})$ is an adiabatic cutoff which is inserted to make the interaction Lagrangian of compact support  and where $\mathcal{L}_I$ is  written in terms of the field $\phi$.
We discuss how to regularize this theory to get convergent $S$ matrices which approximate the sought $S$ matrix.

We are interested in considering the case in which 
\[
\mathcal{L}_I(x) :=  \hat f(\phi(x))
\]
where $\hat f\in \mathcal{S}(\mathbb{R};\mathbb{R})$ is a real valued Schwartz function, real valued is necessary to ensure formal self adjointness of $L_I$. 

Later on  we shall discuss some limits in which $\hat f$ becomes a polynomial. In this paper we shall not consider the case in which $\mathcal{L}_I$ contains also products of derivatives of the fields. 

Considering $f$ the inverse Fourier transform of $\hat f$ we obtain that $L_I$ can be expressed in terms of the vertex operators in the following way
\begin{equation}\label{eq:interaction-Lagrangian}
L_I(f,g) :=  \int_\mathbb{R}  V_a(g) f(a)  \dvol a
= 
\int_M\!  \int_\mathbb{R}\!\!  e^{\mathrm{i}a\phi(x)} f(a) g(x) \dvol a\dvol x
, 
\quad f\in \mathcal{S}(\mathbb{R}),\, \overline{f(a)}=f(-a),\, g\in C^{\infty}_0(M).
\end{equation}
Since $f$ is a Schwartz function and $g$ a compactly supported smooth function we have that $\mathcal{L}_I \in \mathcal{F}_{\text{loc}}\subset \mathcal{F}$.
Furthermore, $g$ is real valued and 
 $f$ 
 is chosen in such a way that $L_I(f,g)$ is formally selfadjoint hence ($\overline{f}(a)=f(-a)$).
The function $f$ is furthermore assumed to decay sufficiently rapidly to have that $L_I(f,g)\in \mathcal{V}^{w}$ according to definition \ref{def:Vw} where $w$ is the regularized two-point function of the theory under investigation. 

With this choice, quantum products and or time ordered products of $L_I$ 
can now be obtained using the products rules
\eqref{eq:star-product-vertex}
and 
\eqref{eq:timeordered-product-vertex}
given in terms of Vertex operators and their estimates descends from Lemma \ref{le:estimate}.

We start introducing the $S$ matrix of the theory, the time ordered exponential where the time ordering is built with $w_F$.
\begin{equation}\label{eq:SV}
\begin{aligned}
S_w(L_I) &:= \sum_{n\geq 0}S_{w,n}(L_I) := \sum_{n\geq 0} \frac{\mathrm{i}^n}{n!} 
\mathcal{T}(\underbrace{L_I\otimes \dots \otimes L_I}_{n})
:=\sum_{n\geq 0} \frac{\mathrm{i}^n}{n!} 
L_I^{\cdot_{T,w} n}
\\
&= \sum_{n\geq 0} \frac{\mathrm{i}^n}{n!} 
\int_{\mathbb{R}^n} \dvol a \;
{\prod_i^n}^{\cdot_{T,w}}
f(a_i) V_{a_i}(g)
\end{aligned}
\end{equation}
where $S_{w,n}$ is the $n$-th order contribution in $L_I$ of $S_w(L_I)$ and 
where $\mathcal{T}$ is the time ordering map implicitly defined by the previous relation.

We observe that with our choices
$L_I(f,g)$ is assumed to be an element of  $\mathcal{V}^{w_F}$ and thus the power series in $\lambda$ which defines $S_w$
has coefficients in $\mathcal{V}^{w_F}$.
We have the following lemma about convergence of $S_w(L_I)$.

\begin{theorem}\label{th:convergence-Smatrix}
Consider the symmetric part of the two-point function $w_s$ invariant under translation and such that $w_s(x,x)=W$. Consider  $g\in C^{\infty}_0(M)$, $g\geq 0$, and  $f\in\mathcal{S}'(\mathbb{R})\cap{C^\infty}(\mathbb{R})$ such that $\overline{f}(a)=f(-a)$ and
\[
\int_{\mathbb{R}} |f(a)|e^{\frac{a^2}{2} W} \dvol a \leq A < \infty
\]
for some constant $A$.
The sum which defines S-matrix constructed with the interaction Lagrangian $L_I(f,g)$ given in \eqref{eq:interaction-Lagrangian}  is
\[
S_w(L_I(f,g)) = \sum_{n\geq 0} \frac{\mathrm{i}^n}{n!} \underbrace{L_I(f,g) \cdot_{T,w}\dots \cdot_{T,w} L_I(f,g)}_{n}
\]
and when tested on a generic $\phi\in \mathcal{C}$ it converges absolutely.
Furthermore, the following estimate holds
\begin{align*}
|S_w(L_I(f,g))(\phi)| 
\leq \exp\left({ \|g\|_1\int_{\mathbb{R}} |f(a)|e^{\frac{a^2}{2} W} \dvol a }\right)
= \exp\left({ A \|g\|_1 }\right)
\end{align*}
hence, the sum defining $S_w$ converges uniformly on $\phi$.
\end{theorem}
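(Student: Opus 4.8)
The plan is to reduce everything to the per-order bound already encoded in Lemma \ref{le:estimate}, together with the elementary fact that the oscillatory factors $e^{\mathrm{i}a_i\phi(x_i)}$ appearing in the time ordered products of vertex operators have modulus one. First I would insert the explicit expression $L_I(f,g)=\int_\mathbb{R} V_a(g)f(a)\,\dvol a$ into the definition \eqref{eq:SV} of $S_w$, so that the $n$-th term becomes an $n$-fold $a$-integral of $f(a_1)\cdots f(a_n)$ against the time ordered product ${\prod_i}^{\cdot_{T,w}} V_{a_i}(g)$ evaluated on $\phi$. Taking the modulus and pulling it through both the sum over $n$ and the $a$-integrals (to be justified a posteriori by the very bound we are about to establish), the quantity $|S_w(L_I(f,g))(\phi)|$ is dominated by $\sum_{n}\frac{1}{n!}\int_{\mathbb{R}^n}\prod_i|f(a_i)|\,\bigl|{\prod_i}^{\cdot_{T,w}}V_{a_i}(g)(\phi)\bigr|\,\dvol a$.

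The central step is to control the modulus of the time ordered product uniformly in $\phi$. Using the explicit kernel \eqref{eq:timeordered-product-vertex} and the splitting $w_F=w_s+\frac{\mathrm{i}}{2}\Delta_\Lambda+\mathrm{i}\Delta_{\Lambda,A}$, the factors $e^{\mathrm{i}a_i\phi(x_i)}$ drop out under the modulus and only the real part $w_s$ of the Feynman kernel survives in the exponent; this is precisely the inequality \eqref{eq:stima-prod}. Its right-hand side equals the nonnegative number $\ev{V_{a_1}(g)\cdot_{w_s}\dots\cdot_{w_s}V_{a_n}(g)}_0$ of \eqref{eq:commutative-product}, where $g\geq 0$ makes the integrand nonnegative. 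To this I would apply Lemma \ref{le:estimate} with all $g_i=g$, obtaining the bound $\bigl(\prod_i e^{a_i^2 W/2}\bigr)\,\|g\|_1^{\,n}$, whose key feature is that it factorizes across the index $i$.

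With this estimate the $n$-fold $a$-integral factorizes into an $n$-th power, giving $\bigl(\int_\mathbb{R}|f(a)|e^{a^2 W/2}\,\dvol a\bigr)^n\|g\|_1^{\,n}\leq (A\|g\|_1)^n$ by the hypothesis on $f$. Summing the dominating series $\sum_{n}\frac{1}{n!}(A\|g\|_1)^n=e^{A\|g\|_1}$ yields simultaneously the stated uniform bound and, because this majorizing series is finite, the absolute convergence when tested on any fixed $\phi$. Since the final bound does not depend on $\phi$, the convergence is in fact uniform in the field configuration, and this same bound retroactively legitimizes the interchange of modulus, summation and integration performed at the outset.

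I expect no serious obstacle here: the genuinely hard inequality, namely the inverse-conditioning trick à la Fröhlich, has already been carried out in Lemma \ref{le:estimate}, so the remaining work is the bookkeeping of assembling the factorized bound and verifying that the Feynman kernel contributes only through its real part $w_s$. The single point requiring a little care is the exchange of the modulus with the $a$-integration before convergence is known; I would handle it cleanly by first establishing the pointwise-in-$a$ bound on the integrand and then invoking Tonelli's theorem, the integrand being nonnegative after passing to absolute values.
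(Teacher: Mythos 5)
Your proposal is correct and follows essentially the same route as the paper: bound the $n$-th order term by the modulus of the time ordered product of vertex operators, reduce it via \eqref{eq:stima-prod} to the commutative $w_s$-product evaluated at zero, apply Lemma \ref{le:estimate} to get the factorized bound $e^{\sum_i a_i^2 W/2}\|g\|_1^n$, and sum the resulting exponential series dominated by $(A\|g\|_1)^n/n!$. Your additional remark on justifying the interchange of modulus, sum and integration via Tonelli is a point the paper passes over as ``direct inspection,'' but it is handled exactly as you describe.
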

\begin{proof}
We observe that from the definition given in \eqref{eq:SV}
by direct inspection
\begin{align*}
|S_w(L_I(f,g))| &\leq  \sum_{n\geq 0} \frac{1}{n!} |\underbrace{L_I(f,g) \cdot_{T,w}\dots \cdot_{T,w} L_I(f,g)}_{n}|
\\
&\leq  \sum_{n\geq 0} \frac{1}{n!} \int_{\mathbb{R}^n}|f(a_1)\dots f(a_n)| |{V_{a_1} \cdot_{T,w}\dots \cdot_{T,w} V_{a_n}}| \dvol a_1\dots \dvol a_n
\end{align*}
where the smearing of the vertex operator with respect $g$ is kept implicit, namely, here and below in this proof $V_a=V_a(g)$.
Notice that, $|{V_{a_1} \cdot_{T,w}\dots \cdot_{T,w} V_{a_n}}(\phi)|\leq 
|\ev{V_{a_1} \cdot_{w_s}\dots \cdot_{w_s} V_{a_n}}_0|$ and
using Lemma \ref{le:estimate} to bound the latter, we get
\begin{align*}
|S_w(L_I(f,g))| 
&\leq  \sum_{n\geq 0} \frac{1}{n!} \int_{\mathbb{R}^n}|f(a_1)\dots f(a_n)| 
e^{ \sum_{i=1 }^n \frac{a_i^2 W}{2}} \|g\|_1^{n}
\dvol a_1\dots \dvol a_n.
\end{align*}
The right hand side does not depend on the field configuration, hence this estimate is uniform in $\phi$.
Furthermore, since  
\[
\int_{\mathbb{R}} |f(a)|e^{\frac{a^2}{2} W} \dvol a \leq A < \infty
\]
we have that the sum defining $S_w$ is uniformly convergent and
\begin{align*}
|S_w(L_I(f,g))| 
&\leq  \sum_{n\geq 0} \frac{1}{n!} |A|^n \|g\|_1^{n} \\
&\leq  e^{ |A| \|g\|_1} 
= \exp\left({ \|g\|_1\int_{\mathbb{R}} |f(a)|e^{\frac{a^2}{2} W} \dvol a }\right).
\end{align*}
\end{proof}
\smallskip
{\bf Remark} 
The factor $e^{\frac{a^2}{2}W }$ is present in the estimates of Lemma \ref{le:estimate} because the fields we are considering are normal ordered with respect to $w_s$.

\smallskip
{\bf Remark} 
We observe that $S_\Lambda$ satisfies some remnant of the standard properties satisfied by $S$ matrix usually proved in perturbation theory. 
In particular, $S_\Lambda$ is unitary, actually $S_{\Lambda}^{-1}=S_{\Lambda}^{*}$ (for a proof see Proposition 4.4 in \cite{Doplicher:2019qlb}).
The causal factorization property holds along the time direction, see Proposition 4.3 in
\cite{Doplicher:2019qlb}.

\smallskip
Having established weak convergence of the power series which defines the used $S$ matrices we introduce the following notation. In the subsequent part of the paper we shall denote by $S_{\Lambda}=S_{\Delta_{+,\Lambda}}$ and by $S=S_{\Delta_{+}}$.

We conclude this section observing that with similar techniques as those discussed in the proof of Theorem \ref{th:convergence-Smatrix} we can analyze the convergence of the power series which defines   
\[
S_w(L_I(f,g) + J) 
\]
where $J=\int \phi j \dvol x = \Phi(j)$ with $j\in C^{\infty}_0(M)$. We get that weak convergence of the series (convergence of the series of $S_w$ when evaluated on a generic field configuration) holds uniformly in $\mathcal{\phi}$.
With this at disposal we can obtain the generating functions of the interacting correlation function as the relative $S$-matrix
$S_w(L_I)^{-1}\star_w S_w(L_I+J)$.

\subsection{Representation as unitary operators on a suitable Hilbert space}
\label{se:operators-in-hilbertspace}

In this section we discuss how to turn the estimates of the weak convergence in the algebra of functionals of the regularized $S$ matrices to a convergence argument in the strong operator topology to an unitary operator on some representation of the ordinary Weyl algebra constructed over the associated linear theory hence on a the GNS representation of a  suitable state.

We start our analysis recalling few facts about the Weyl $C^*$-algebra.
The Weyl $C^*$-algebra $\mathcal{W}$ we are referring to is obtained starting from a set of abstract generators called Weyl operator labeled by suitable elements of a symplectic space.
More precisely, the generic Weyl operator is denoted by $\hat{W}(\tilde{f})$,
where $\tilde{f}$ is an element of the symplectic space $C^{\infty}_0(M;\mathbb{R})/PC^{\infty}_0(M;\mathbb{R})$ 
where $P$ is the differential operator which defines the Klein-Gordon equation and whose symplectic form is 
\[
\sigma([\tilde{f}],[\tilde{g}]) = \Delta(\tilde{f},\tilde{g}), \qquad \tilde{f},\tilde{g}\in C^\infty_0(M).
\] 
Weyl generators satisfy the following relations
\[
\hat{W}(\tilde{f}) \hat{W}(\tilde{g}) = e^{-\frac{\mathrm{i}}{2}\Delta(\tilde{f},\tilde{g})}\hat{W}(\tilde{f}+\tilde{g}), \qquad  \hat{W}(\tilde{f})^{*}=\hat{W}(-\tilde{f}).
\]
The Weyl $*$-algebra  $\mathcal{W}_0$ 
is then generated by all possible sum and products of $\hat{W}(\tilde{f})$ for various $\tilde{f}$.
This algebra is generated by the identity and $\hat{W}(f)$ with $f$ in $\mathcal{S}$ and where the product is $\star_{\Delta}$.

It exists an unique Weyl $C^*$-algebra $\mathcal{W}$ up to equivalent unitary representation which implements these relations, see e.g. 
\cite{Moretti} and reference therein.
The procedure works as follows. We equip $\mathcal{W}_0$ with a suitable $C^*$ norm. This $C^*$ norm is unique up to equivalences. The closure of $\mathcal{W}_0$ with respect to the topology induced by this norm produces the sought  $C^*$ algebra $\mathcal{W}$.

\bigskip
We discuss now how to concretely represent $\mathcal{W}_0$ as an algebra equipped with the product $\star_{\Delta_+}$.
The Weyl operator in $(\mathcal{W}_0,\star_{\Delta_+},*)$ is
\[
W_{+}(\tilde{f})=e^{\mathrm{i}\Phi(\tilde{f})}e^{-\frac{1}{2}\Delta_+(\tilde{f},\tilde{f})}=e^{\mathrm{i}\Phi(\tilde{f})}e^{-\frac{1}{2}\Delta_s(\tilde{f},\tilde{f})}
=e_{\star}^{\mathrm{i}\Phi(\tilde{f})}
\]
where $\Delta_s=\Delta_+
 -\frac{\mathrm{i}}{2}\Delta$ is the symmetric part of $\Delta_+$. The Weyl relation are satisfied. 

In order to concretely represent $\mathcal{W}$ as operator on a suitable Hilbert space, we select now the state $\omega$ and we consider its GNS representation. The state we are choosing is the Minkowski vacuum, namely the gaussian state whose two-point function is $\Delta_+$, in particular, on $\mathcal{W}_0$
\begin{equation}\label{eq:stateW+}
    \omega (W_+(f)) := e^{-\frac{1}{2} \Delta_+(f,f)}=e^{-\frac{1}{2} \Delta_s(f,f)}.
\end{equation}
This state coincides with the evaluation on the vanishing configuration $\omega(W_+(\tilde{f})) = \ev{W_+(\tilde{f})}_0$

\bigskip
The Weyl $*$-algebra with the deformed  product is now denoted by $\mathcal{W}_\Lambda=(\mathcal{W}_\Lambda,\star_\Lambda,*)$ and the Weyl generator is now 
\[
W_{\Lambda}(\tilde{f})=e^{\mathrm{i}\Phi(\tilde{f})}e^{-\frac{1}{2}\Delta_{+,\Lambda}(\tilde{f},\tilde{f})}
=e_{\star_\Lambda}^{\mathrm{i}\Phi(\tilde{f})},
\]
it satisfies the Weyl relations
\[
W_{\Lambda}(\tilde{f}) 
\star_\Lambda
W_{\Lambda}(g)
=
e^{\frac{1}{2}
\left(\Delta_{+,\Lambda}(\tilde{f},g)-\Delta_{+,\Lambda}(g,\tilde{f})\right)} W_{\Lambda}(\tilde{f}+g) 
\]
and
\[
W_{\Lambda}(\tilde{f})^{*} 
= 
W_{\Lambda}(-\tilde{f})
\]
where $\tilde{f}$ is real.
In view of the fact that it exists a $*$-homomorphisms $r_\Lambda$ introduced above in equation \eqref{eq:*-homomorphism} we can represent in $\mathcal{W}_0$ the Weyl $*$-algebra generated by the identity and by $W_\Lambda(\tilde{f})$ with $\tilde{f}\in\mathcal{S}$, constructed with the product $\star_{\Delta_{+,\Lambda}}$ and which we denote by $\mathcal{W}_\Lambda$.
Furthermore, we shall use below $\omega_{\Lambda}$
which is actually a state characterized by 
\begin{equation}\label{eq:stateWLambda}
\omega_{\Lambda} (W_{\Lambda}(h))=e^{-\frac{1}{2}\Delta_{+,\Lambda}(h,h)}.
\end{equation}
This state on $\mathcal{W}_\Lambda$ 
corresponds to the evaluation on the vanishing configuration.

The following lemma holds

\begin{lemma}\label{le:pullback}
Let $m$ be strictly positive.
Consider the map $r_\Lambda$ introduced in equation \eqref{eq:*-homomorphism}. Its action on the Weyl generators $W_\Lambda$ is
\[
r_{\Lambda }W_{\Lambda}(\tilde{f})= W_{+}(G_\Lambda *\tilde{f}).
\]
The map $r_{\Lambda}:\mathcal{W}_\Lambda\to \mathcal{W}_0$ is a $*$-homomorphism of Weyl algebras.
Consider the Minkowski vacuum $\omega$, which is the Gaussian state on $\mathcal{W}_0$ 
whose action on the Weyl generators is given in 
\eqref{eq:stateW+}
and its GNS triple $(\mathcal{H}_\omega,\pi_\omega,\Psi_\omega)$. It holds that $\pi_{\omega}(r_\Lambda (\mathcal{W}_\Lambda)) \Psi_\omega$ is dense in $\mathcal{H}_\omega$.
\end{lemma}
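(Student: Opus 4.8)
The plan is to dispatch the first two assertions by direct computation from the definitions and to reserve the real work for the cyclicity statement. For the formula $r_\Lambda W_\Lambda(\tilde f)=W_+(G_\Lambda*\tilde f)$ I would unwind $r_\Lambda$ as the pullback of $\iota_\Lambda\colon\phi\mapsto G_\Lambda*\phi$. Since $G_\Lambda$ is symmetric we already know $r_\Lambda\Phi(\tilde f)=\Phi(G_\Lambda*\tilde f)$ (as in the proof of Lemma \ref{le:*-homomorphism}), so $r_\Lambda e^{\mathrm i\Phi(\tilde f)}=e^{\mathrm i\Phi(G_\Lambda*\tilde f)}$; the numerical prefactor is inert under $r_\Lambda$, and the identity $\Delta_{+,\Lambda}(\tilde f,\tilde f)=\Delta_+(G_\Lambda*\tilde f,G_\Lambda*\tilde f)$ (a special case of the relation $\langle G_\Lambda*h,\Delta_+*G_\Lambda*g\rangle=\langle h,\Delta_{+,\Lambda}*g\rangle$ recorded earlier) converts $e^{-\frac12\Delta_{+,\Lambda}(\tilde f,\tilde f)}$ into exactly the prefactor of $W_+(G_\Lambda*\tilde f)$. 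This gives the formula.

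The $*$-homomorphism property then follows by the same mechanism as Lemma \ref{le:*-homomorphism}: applying $r_\Lambda$ to the Weyl relation for $W_\Lambda(\tilde f)\star_\Lambda W_\Lambda(\tilde g)$ and using the polarized identity $\Delta_{+,\Lambda}(\tilde f,\tilde g)=\Delta_+(G_\Lambda*\tilde f,G_\Lambda*\tilde g)$ shows that the cocycle phases on the two sides coincide, while $r_\Lambda W_\Lambda(\tilde f)^*=W_+(-G_\Lambda*\tilde f)=(r_\Lambda W_\Lambda(\tilde f))^*$ takes care of the involution. I would remark that, because the Weyl relations collapse products of generators back into single Weyl operators, $r_\Lambda\mathcal W_\Lambda$ is in fact the linear span of $\{W_+(h):h\in G_\Lambda*\mathcal S\}$, which is the form I use below.

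For the density of $\pi_\omega(r_\Lambda\mathcal W_\Lambda)\Psi_\omega$ I would pass to the one-particle picture. The GNS space $\mathcal H_\omega$ of the vacuum is the symmetric Fock space over the one-particle space $\mathcal H_1=L^2(\mathbb R^{d-1},\,\mathrm d^{d-1}p/2\omega)$, with one-particle map $K$ sending a test function to the restriction of its spacetime Fourier transform to the positive mass shell (here $m>0$ guarantees this is the standard massive one-particle space with no infrared obstruction). Under the Fock map $\pi_\omega(W_+(h))\Psi_\omega$ is, up to a scalar, the coherent vector $\mathrm{Exp}(Kh)$, so by the previous paragraph $\pi_\omega(r_\Lambda\mathcal W_\Lambda)\Psi_\omega$ is the span of $\{\mathrm{Exp}(K(G_\Lambda*\tilde f)):\tilde f\in\mathcal S\}$. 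I would then invoke the standard totality lemma: the exponential vectors indexed by a subspace $D\subset\mathcal H_1$ are total in Fock space iff $D$ is dense in $\mathcal H_1$ (since $\zeta\mapsto\langle\Psi,\mathrm{Exp}(\zeta)\rangle$ is continuous and anti-holomorphic, vanishing on a dense set forces $\Psi=0$). This reduces the claim to density of $\{K(G_\Lambda*\tilde f):\tilde f\in\mathcal S\}$ in $\mathcal H_1$.

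The computation $K(G_\Lambda*\tilde f)=M_{\hat\chi(\Lambda\,\cdot\,)}\,K\tilde f$ then finishes the reduction: since $G_\Lambda$ acts as a purely spatial convolution (its time component is $\delta(x_0)$) it is a Fourier multiplier independent of $p_0$, and the relation $\langle G_\Lambda*h,\Delta_+*G_\Lambda*g\rangle=\langle h,\Delta_{+,\Lambda}*g\rangle$ together with the factor $\hat\chi(\Lambda\mathbf p)^2$ in $\Delta_{+,\Lambda}$ identifies this multiplier as $\hat\chi(\Lambda\mathbf p)$, so restriction to the mass shell gives multiplication of $K\tilde f$ by the bounded real function $\mathbf p\mapsto\hat\chi(\Lambda\mathbf p)$. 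Cyclicity of the vacuum for the full Weyl algebra gives that the range of $K$ is dense; if $\eta\in\mathcal H_1$ is orthogonal to $M_{\hat\chi(\Lambda\,\cdot\,)}K\tilde f$ for every $\tilde f$, then by self-adjointness of the bounded operator $M_{\hat\chi(\Lambda\,\cdot\,)}$ we get $\langle M_{\hat\chi(\Lambda\,\cdot\,)}\eta,K\tilde f\rangle=0$ for all $\tilde f$, hence $\hat\chi(\Lambda\mathbf p)\eta(\mathbf p)=0$ almost everywhere. \emph{The main obstacle is precisely this last step}: I must argue that $\hat\chi(\Lambda\,\cdot\,)$ vanishes only on a Lebesgue-null set, which I would obtain from Paley--Wiener, namely $\chi\in C^\infty_0$ forces $\hat\chi$ to be the restriction of an entire function, and $\hat\chi(0)=1$ makes it not identically zero, so its real zero set has measure zero. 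Then $\eta=0$, proving the density. I would take care to state the totality lemma precisely and to justify the identification of $\mathcal H_1$ and of $K$ with the mass-shell restriction, since these underlie the whole reduction.
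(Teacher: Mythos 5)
Your proof is correct, and its overall architecture matches the paper's: both reduce the cyclicity statement to density of the one-particle wave functions $\{K(G_\Lambda*\tilde f)\}$ in the one-particle Hilbert space, using the Fock structure of the vacuum GNS triple (the paper invokes regularity of $\pi_\omega$ where you invoke the totality lemma for exponential vectors; these are interchangeable standard reductions, and your handling of the formula for $r_\Lambda W_\Lambda(\tilde f)$ and of the $*$-homomorphism property is the same direct computation as in the paper). Where you genuinely diverge is the final density step. The paper disposes of it by asserting that $C^\infty_0(\mathbb{R}^{d-1})$ is \emph{contained} in the image $\varphi(D)=\bigl\{\hat\chi(\Lambda p)(2w_p)^{-1/2}\hat f(w_p,p)\bigr\}$; unwinding that claim requires solving $\hat f(w_p,p)=\sqrt{2w_p}\,h(p)/\hat\chi(\Lambda p)$, i.e.\ dividing by $\hat\chi(\Lambda\,\cdot\,)$, which is delicate precisely because $\hat\chi$, being the Fourier transform of a compactly supported function, is real-analytic and in general \emph{does} have real zeros, so the quotient need not be Schwartz (or even defined). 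Your argument --- orthogonality, self-adjointness of the real multiplier $M_{\hat\chi(\Lambda\cdot)}$, and the Paley--Wiener/analyticity fact that the zero set of $\hat\chi$ is Lebesgue-null --- sidesteps the division entirely and in effect supplies the justification that the paper's containment claim silently needs (one could alternatively repair the paper's step by restricting to test functions supported off the zero set, but that again rests on your measure-zero fact). So: same skeleton, but your treatment of the crucial density step is different from, and cleaner than, the paper's.
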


\begin{proof}
    The first part of this lemma follows from direct inspection. 
    The fact that $r_\Lambda$ is a $*$-homomorphisms can be proved as in Lemma \ref{le:*-homomorphism}. 
The state $\omega$ is Gaussian and pure and $\mathcal{H}_{\omega}$ is a Fock space constructed over the one particle Hilbert space $\mathfrak{h}$. 
The representation $\pi_\omega$ is regular, to prove that 
$\pi_\omega (r_{\Lambda} (\mathcal{W}_\Lambda)) \Psi_\omega$ is dense in $\mathcal{H}_\omega$ it is thus sufficient to prove that 
$D=\{ \Delta_+(G_{\Lambda}*(f_1 + \mathrm{i}f_2))\,|\, f_i\in \mathcal{S}(M;\mathbb{R})\}$ is dense in $\mathfrak{h}$.
In the case of a $d$-dimensional Minkowski space, 
$\mathfrak{h}$ is isomorphic to $L^2(\mathbb{R}^{d-1}
)$.  
Let $\varphi:\mathfrak{h}\to L^2(
\mathbb{R}^{d-1})$ be the map which realizes this isomorphism, 
we have that 
\[
\varphi(D) = \left\{\left.\frac{\hat{G}(p)}{\sqrt{2w_p}}\hat{f}(w_p,p) \,\right|\, f\in \mathcal{S}(M;\mathbb{C})\right\}
\]
where $w_p=\sqrt{p^2+m^2}$.
Notice that the set 
$\varphi(D)$ is dense in $L^2(\mathbb{R}^{d-1})$
because compactly supported smooth functions over $\mathbb{R}^{d-1}$ are contained in 
$\varphi(D)$ and  are 
dense both in $L^2(\mathbb{R}^{d-1})$ and in 
$\varphi(D)$ w.r.t. the standard $L^2$ norm. Since $\varphi$ is an isomorphism of Hilbert spaces, we have that $D$ is dense in $\mathfrak{h}$ and this concludes the proof.
\end{proof}

We have now the following lemma which relates $\omega$ and $\omega_\Lambda$

\begin{lemma}\label{le:pullback-state}
Consider the Minkowski vacuum, which is the state $\omega$ on $\mathcal{W}_0$ 
characterized by equation 
 \eqref{eq:stateW+}.
 The pullback of $\omega$ under $r_\Lambda$ coincides with the state $\omega_\Lambda$ on $\mathcal{W}_\Lambda$ which is the state characterized by \eqref{eq:stateWLambda}.
\end{lemma}
\begin{proof}
    The proof of this statement follows directly from Lemma \ref{le:pullback}.
\end{proof}

The map $r_\Lambda$ given in \eqref{eq:rlambdapullback} intertwines the products $\star$ and $\star_\Lambda$, we notice that it also intertwines the corresponding time ordered products because the time ordering map commutes with the convolution with $G_{\Lambda}$ given in \eqref{eq:G-lambda}.
We actually have that 
\begin{align*}
\Delta_{F,\Lambda}(t,\mathbf{x})
&=
\theta(t)\Delta_{+,\Lambda}(t,\mathbf{x})
+
\theta(-t)\Delta_{+,\Lambda}(-t,\mathbf{x})
\\
&=
\theta(t)G_\Lambda*\Delta_{+}*G_\Lambda
(t,\mathbf{x})
+
\theta(-t)G_\Lambda*\Delta_{+}*
G_\Lambda(-t,\mathbf{x})
\\
&=
G_\Lambda* \Delta_{F}*G_\Lambda(t,\mathbf{x}).
\end{align*}
In the last step we used the fact that $G_\Lambda$ is nothing but the Dirac delta function in the time direction. 
With this observation, we can now represent the $S$-matrix with
\begin{equation}\label{eq:repS}
{r_\Lambda}(S_\Lambda(L_I)) = S({r_\Lambda}{L_I})
\end{equation} 
where $S=S_{\Delta_{+}}$ and $S_\Lambda=S_{\Delta_{+,\Lambda}}$.
This observation is useful in the analysis of 
the corresponding operations in the GNS representation of $\omega$. We actually have the following

\begin{theorem}\label{th:convergence-unitary}
Let $\Lambda$ be a strictly positive constant. Let $g$ be a  compactly supported smooth function on $M$ and let $f$ be a compactly supported smooth function on $\mathbb{R}$.
Consider the smeared interaction
 Lagrangian $L_I(f,g)$ of the form given in \eqref{eq:interaction-Lagrangian} with $f \in \mathcal{S}$ with $\overline{f(a)}=f(-a)$ and $g\in C^{\infty}_0(M)$.
Let $\omega$ be the vacuum state of the massive free scalar field and let $(\mathcal{H}_\omega, \pi_\omega, \Psi_\omega)$ its GNS triple. 
The series defining $S_{\Lambda}(L_I)$ is a strongly convergent series in $\mathcal{B}(
\mathcal{H}_\omega
)$.
\end{theorem}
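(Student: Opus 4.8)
The plan is to represent the $S$-matrix on $\mathcal{H}_\omega$ through the map $\pi_{\omega,\Lambda} := \pi_\omega \circ r_\Lambda$, which is legitimate because $r_\Lambda$ is the $*$-homomorphism of Lemma \ref{le:*-homomorphism} (extended to the Weyl generators in Lemma \ref{le:pullback}) and, by the intertwining relation \eqref{eq:repS}, it carries the time-ordered structure built with $\Delta_{F,\Lambda}$ into the one built with $\Delta_F$. Writing $S_{\Lambda,\leq N}(L_I) = \sum_{n=0}^N S_{\Lambda,n}(L_I)$ for the partial sums and $T_N := \pi_{\omega,\Lambda}(S_{\Lambda,\leq N}(L_I)) \in \mathcal{B}(\mathcal{H}_\omega)$, the goal is to show that $\{T_N\}_N$ converges in the strong operator topology. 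Since $f$ is compactly supported, $A := \int_{\mathbb{R}} |f(a)| e^{a^2 W/2}\dvol a < \infty$ automatically, so the functional bounds of Theorem \ref{th:convergence-Smatrix} are available with $w_s = \mathrm{Re}\,\Delta_{+,\Lambda}$ and $W = w_s(x,x)$.

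The key observation is that each order is not an operator product but a single superposition of Weyl unitaries. Indeed, by \eqref{eq:timeordered-product-vertex} the functional $S_{\Lambda,n}(L_I)$ is the integral over $(a,x)\in\mathbb{R}^n\times M^n$ of the one exponential $e^{\mathrm{i}\sum_i a_i \phi(x_i)}$, weighted by $\frac{\mathrm{i}^n}{n!}\prod_i f(a_i) g(x_i)$ and the scalar $e^{-\sum_{l<j} a_l a_j \Delta_{F,\Lambda}(x_l,x_j)}$. Applying $r_\Lambda$ replaces $\phi(x_i)$ by $(G_\Lambda * \phi)(x_i)$, so the exponential becomes $e^{\mathrm{i}\Phi(\mu)}$ with $\mu = G_\Lambda * \sum_i a_i \delta_{x_i}$, and $\pi_\omega(e^{\mathrm{i}\Phi(\mu)}) = \pi_\omega(W_+(\mu))\, e^{\frac{1}{2}\Delta_s(\mu,\mu)}$, where $\pi_\omega(W_+(\mu))$ is unitary. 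Because $\Lambda>0$ the one-particle norm $\Delta_s(\mu,\mu) = \sum_{l,j} a_l a_j w_s(x_l,x_j)$ is finite, so $\mu$ is a genuine vector of the one-particle space $\mathfrak{h}$ and $W_+(\mu)$ is represented as an honest unitary. Thus $T_N$ is a finite sum of absolutely convergent Bochner integrals of unitaries.

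I would then bound $\|T_N\|$ by the integral of the norms of the integrand. The $\Delta_{F,\Lambda}$-coefficient contributes $e^{-\sum_{l<j} a_l a_j w_s(x_l,x_j)}$ in modulus (since $\mathrm{Re}\,\Delta_{F,\Lambda} = w_s$), while the Weyl unitary contributes $e^{\frac{1}{2}\sum_{l,j} a_l a_j w_s(x_l,x_j)}$; the off-diagonal $l\ne j$ terms cancel exactly, leaving $\prod_i e^{a_i^2 W/2}$. This is the same cancellation that underlies Lemma \ref{le:estimate} and Theorem \ref{th:convergence-Smatrix}, and it yields the order-by-order estimate $\|\pi_{\omega,\Lambda}(S_{\Lambda,n}(L_I))\| \le \frac{1}{n!}(A\|g\|_1)^n$. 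Summing over $n$ gives $\sum_n \|\pi_{\omega,\Lambda}(S_{\Lambda,n}(L_I))\| \le e^{A\|g\|_1} < \infty$, so the series converges absolutely in the operator norm of the Banach space $\mathcal{B}(\mathcal{H}_\omega)$; in particular $T_N$ converges strongly, which is the assertion. That the limit is unitary then follows from the remnant properties recorded after Theorem \ref{th:convergence-Smatrix}.

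The hard part will be the two points that make the above rigorous rather than formal. First, the cancellation of the off-diagonal terms is the conceptual heart: it shows that time-ordering (the replacement $w_s \to \Delta_{F,\Lambda}$ inside the exponent) does not degrade the bound, precisely because the modulus of the Feynman coefficient reinstates the $-\sum_{l<j}$ term removed by normal ordering. Second, one must justify the extended Weyl operators $W_+(\mu)$ for the time-slice distribution $\mu = G_\Lambda * \sum_i a_i\delta_{x_i}$ and the existence of the Bochner integral: here $\Lambda>0$ guarantees $\mu\in\mathfrak{h}$, the Fock representation is regular, and $(a,x)\mapsto \pi_\omega(W_+(\mu))$ is strongly continuous and uniformly bounded, so integrability against the $L^1$-weight $\prod_i |f(a_i)g(x_i)|$ holds. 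Should one prefer to avoid the Bochner machinery, the same conclusion can be reached by combining the uniform bound $\sup_N\|T_N\| \le e^{A\|g\|_1}$ with strong convergence of $T_N\psi$ on the dense domain $\pi_\omega(r_\Lambda(\mathcal{W}_\Lambda))\Psi_\omega$ supplied by Lemma \ref{le:pullback}, via a standard $\varepsilon/3$ argument.
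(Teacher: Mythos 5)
Your proof is correct, but it takes a genuinely different route from the paper's. The paper never bounds operator norms of the individual orders: it fixes a vector $\pi_\omega(W_+(G_\Lambda h))\Psi_\omega$ in the dense domain supplied by Lemma \ref{le:pullback}, uses the $*$-homomorphism $r_\Lambda$ together with the pullback state $\omega_\Lambda$ of Lemma \ref{le:pullback-state} and the relation \eqref{eq:repS} to rewrite the squared Hilbert-space norm of the partial sums applied to that vector as an algebraic evaluation at the vanishing field configuration, bounds the resulting double series term by term with Lemma \ref{le:estimate}, and then invokes density to get strong convergence. You instead exploit the explicit form \eqref{eq:timeordered-product-vertex} to write each order, after applying $r_\Lambda$, as a (strong-sense) integral of scalar multiples of extended Weyl unitaries $W_+(\mu)$ with $\mu = G_\Lambda * \sum_i a_i\delta_{x_i}$, and you observe the exact cancellation between the modulus $e^{-\sum_{l<j} a_l a_j w_s(x_l,x_j)}$ of the Feynman coefficient and the normal-ordering factor $e^{\frac{1}{2}\sum_{l,j} a_l a_j w_s(x_l,x_j)}$, which is legitimate because $\mathrm{Re}\,\Delta_{F,\Lambda} = \mathrm{Re}\,\Delta_{+,\Lambda} = w_s$. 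This yields the per-order bound $\|\pi_{\omega,\Lambda}(S_{\Lambda,n})\| \le \frac{1}{n!}(A\|g\|_1)^n$ with $A=\int_{\mathbb{R}}|f(a)|e^{a^2 W/2}\dvol a$, hence absolute convergence in the operator norm of $\mathcal{B}(\mathcal{H}_\omega)$ --- a conclusion strictly stronger than the strong convergence asserted in the theorem, and one that incidentally supplies for free the uniform boundedness of the partial sums which the paper's dense-domain argument implicitly relies on (the paper only bounds the action on the generating vectors $\pi_\omega(W_+(G_\Lambda h))\Psi_\omega$, not the operator norms). What your route costs is the functional-analytic bookkeeping you correctly flag: $W_+(\mu)$ must be defined for time-slice distributions $\mu$, which works since $\Lambda>0$ places the associated one-particle vector in $\mathfrak{h}$; and the operator-valued integrals must be taken in the strong (vector-wise Bochner) sense, because $\xi\mapsto W_+(\xi)$ is strongly but not norm continuous, so the integrand is not norm-separably valued and a norm-Bochner integral is unavailable --- defining $T_n\psi$ vector-wise and estimating $\|T_n\psi\|\le \left(\int|\cdot|\right)\|\psi\|$ recovers the same norm bound. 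You should also record the easy Fubini-type consistency check that these integrals reproduce the algebraic matrix elements on the dense domain of Lemma \ref{le:pullback}, so that the norm-convergent series is indeed the representation of the paper's $S_\Lambda(L_I)$; with that noted, your argument is complete, and the unitarity of the limit follows exactly as in the paper from $S_\Lambda^{*}=S_\Lambda^{-1}$ and the $*$-homomorphism properties of $r_\Lambda$ and $\pi_\omega$.
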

\begin{proof}
To prove that the sum defining $S({r_\Lambda}{L_I})$ converges strongly on $\mathcal{H}_{\omega}$, we start considering $\tilde{f} = G_{\Lambda}h$ where $h$ is a generic element of $C^{\infty}_0(M)$ and we analyze
\[
\| \pi_{\omega }(\sum_n  S_n({r_\Lambda}{L_I}))\pi(W_+(\tilde{f})) \Psi_\omega \| , \qquad \tilde{f}=G_{\Lambda}h,\,
h\in C^{\infty}_0(M)
\] 
where the norm is the norm descending form the scalar product of $\mathcal{H}_\omega$.
We have that by Lemma \ref{le:pullback} that
$r_{\Lambda} W_{\Lambda}(h)= W_+(\tilde{f})$, and thus, recalling the form of the action of $r_{\Lambda}$ on $S$ given in equation \eqref{eq:repS} we obtain 
\begin{align*}
\| \pi_{\omega }(\sum_n  S_n({r_\Lambda}{L_I}))\pi(W_+(\tilde{f})) \Psi_\omega \|^2
&=
\omega(W_+(\tilde{f})^*(\sum_k  S_k({r_\Lambda}{L_I}))^* \sum_n  S_n({r_\Lambda}{L_I}))W_+(\tilde{f}))
\\
&=
\omega_{\Lambda}(W_\Lambda(h)^*(\sum_k  S_{k,\Lambda}({L_I}))^* \sum_n  S_{n,\Lambda}({L_I}))W_\Lambda(h))
\end{align*}

where in the last equation we used the state $\omega_\Lambda$ which is the pullback under $r_\Lambda$ of the vacuum state on the 
$(\mathcal{W}_0, \star_{\Delta_{+}},*)$, 
see Lemma \ref{le:pullback-state}.
Hence, recalling that on 
$(\mathcal{W}_\Lambda, \star_{\Delta_{+,\Lambda}},*)$,
 $\omega_\Lambda$ correspond to the evaluation on the vanishing configuration, 
\begin{align*}
\| \pi_{\omega }(\sum_n  S_n({r_\Lambda}{L_I}))\pi(W_+(\tilde{f})) \Psi_\omega \|^2
&=
| \ev{W_+(h)^*(\sum_k  S_{k,\Lambda}({L_I}))^* \sum_n  S_{n,\Lambda}({L_I}))W_+(h)}_0|
\end{align*}
denoting by $w_s$ the symmetric part of $\Delta_{+,\Lambda}$, computing the product present above,  we have
\begin{align*}
\| \pi_{\omega }(\sum_n  S_n({r_\Lambda}{L_I}))\pi(W_+(\tilde{f})) \Psi_\omega \|^2
&\leq 
\sum_{n,k} \frac{1}{n!k!} \left| e^{-w_s(h,h)} 
\ev{ e^{-\mathrm{i}\Phi(h)}\cdot_{w_s}  
\prod_k^{\cdot_{w_s}} {\overline{L_I}}
\cdot_{w_s}
\prod_n^{\cdot_{w_s}} {{L_I}}
\cdot_{w_s} e^{\mathrm{i}\Phi(h)}
}_0\right|
\end{align*}
Using Lemma \ref{le:estimate} we obtain
\begin{align*}
\| \pi_{\omega }(\sum_n  S_n({r_\Lambda}{L_I}))\pi(W_+(\tilde{f})) \Psi_\omega \|^2
&\leq 
\sum_{n,k} \frac{1}{n!k!} 
\|g\|_1^{n+k}
\left(\int_{\mathbb{R}} |f(a)|e^{\frac{a^2}{2} W} \dvol  a\right)^{n+k}
\\
&\leq 
\exp\left({ 2 \|g\|_1\int_{\mathbb{R}} |f(a)|e^{\frac{a^2}{2} W} \dvol a }\right)
\end{align*}
This estimate is uniform in $h$, hence it holds on every element of $\mathcal{W}_{\Lambda}$.
Furthermore, by Lemma \ref{le:pullback}
$\pi_{\omega}(r_\Lambda (\mathcal{W}_\Lambda)) \Psi_\omega$
 is dense in $\mathcal{H}_\omega$, hence, we have that the sum defining $S(r_{\Lambda}(L_I))$ converges in the strong operator topology to a bounded operator.
Finally, we observe that it converges to a unitary operator because $S_\Lambda^{-1}(L_I)$ is equal to $S_\Lambda^*(L_I)$ 
thanks to the fact that $L_I$ is local. Hence, 
$1 = S_\Lambda^{-1}(L_I) \star_\Lambda 
S_\Lambda(L_I) = S_\Lambda^{*}(L_I) \star_\Lambda 
S_\Lambda(L_I)$ and then taking $r_\Lambda$ on both sides
$1 = r_\Lambda(S_\Lambda^{*}(L_I) \star_\Lambda 
S_\Lambda(L_I)) = r_\Lambda(S_\Lambda(L_I))^{*} \star  
r_\Lambda(S_\Lambda(L_I))$. Finally, since $\pi_\omega$ is a $*$-homomotphisms, we have the thesis. 
\end{proof}

\section{Limit {$\Lambda \to 0$}, renormalizability, counter terms and perturbation theory}
\label{se:limits-no-cutoffs}

In this section we discuss how to recover known theory in suitable limits where some of the regulators are known.
We discuss two cases: the two dimensional theories where the UV problems can be treated as in the case of Sine-Gordon theories 
\cite{Froe76,FS}
(see also
\cite{BFM1, Bahns:2016zqj, BahnsPinamontiRejzner}) and 
we furthermore discuss how to recover the $\lambda\phi^4_3$ case on a three dimensional Minkowski spacetime and $\lambda\phi^4_4$ in the four dimensional case. 

\subsection{2D generic interactions of compact (and small) support in the $a$ domain}\label{se:two-d-limits}

In the two-dimensional case, the theory does not need to be renormalized. 
Furthermore interaction Lagrangians of Sine-Gordon type produce $S$ matrices which converge uniformly in the field configuration also in the limit where $\Lambda$ vanishes in $\Delta_{+,\Lambda}$ if the support of $f(a)$ in the interaction Lagrangian is sufficiently small. 

To this end, we consider
\[
L_{I} = \int  e^{\mathrm{i} a \phi(x) } f(a) \dvol a \; g(x) \dvol x
\]
where $f(a)$ is smooth and of compact support, furthermore, 
\[
\supp f\subset [-a_\mu,a_\mu]
\]
where $a_\mu^{2}  < 4\pi $
and 
\[
\supp g \subset D_{\mu}
\]
where $\mu$ is a fixed parameter. 
These bounds on $a$ and on $\mu$ permits to bound the $S$ matrix uniformly in $\Lambda$, in an equivalent way to the treatment of the ultraviolet finite regime in the Sine-Gordon theory \cite{Froe76, FS}.

The difference with the Sine-Gordon theory is in the fact that here products of vertex operator with different $a$ appears in the $S-$matrix. 
To estimate these products, we observe that the antisymmetric part of the products produce inessential phases, we shall thus use Hölder inequality which holds in the case of  commutative spacetimes to bound them, see \cite{BahnsPinamontiRejzner} for further details.
Afterwards each factor in the product can 
be bound uniformly in $\Lambda_2$ by means of a Cauchy determinant formula, see e.g. Lemma 2.7 in \cite{BahnsPinamontiRejzner} and reference therein.

In this way we get an estimate that works uniformly in $\Lambda$ (the regularizing parameter in $\Delta_{F,\Lambda}$).

We shall be more precise in the proof of the following Theorem. 
\begin{theorem}
Consider the two-dimensional Minkowski space $M_2$. Let $\Delta_+$ be the two-point function of the Minkowski vacuum of the massive Klein Gordon field of mass $m>0$. 
Consider the interaction Lagrangian
\[
L_I(g,f)(\phi)= \int_{M_2} \dvol x\, g(x)\int_{\mathbb{R}} \dvol a f(a) e^{\mathrm{i} a \phi(x)}
\]
where $f\in C^{\infty}_0((-a,a))$ with $a^2< 4\pi$, $\overline{f}(a)=f(-a)$
and with an adiabatic cutoff $g\in C^\infty_0(D_\mu)$ where $D_\mu=\{(t,x)\in M_2\, |\, -\mu<t+x<\mu, -\mu<t-x<\mu\}$. The sum defining the $S$ matrix 
\[
S(L_I)= \sum_{n\geq 0} \lambda^n S_n
= \sum_{n\geq 0} \frac{\mathrm{i}^n\lambda^n}{n!} \mathcal{T}(\underbrace{L_I\otimes \dots \otimes L_I)}_{n}
\]
is weakly convergent.  
Actually, there exist two positive constants $C$ and $K$ such that for $p\in [1,\frac{4\pi}{a\mu^2})$ and $\frac1p+\frac1q=1$,
the following estimate holds
\[ |S_n| \leq 
\frac{ \|f\|_1^n}{n!} e^{n\frac{K}{2} a_\mu^{2}}
 (2\mu)^{n \frac{a^2}{4\pi}} \|g\|_q^n (C^n n!)^{1/p}
\]
uniformly for $\phi \in \mathcal C$.
Hence, the series which defines $S(L_I)$ when tested on a field configuration $\phi$ is absolutely convergent uniformly in $\phi\in\mathcal{C}$.
\end{theorem}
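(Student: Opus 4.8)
The plan is to estimate the $n$-th order coefficient $S_n$ by first passing to the symmetric part of the propagator, then isolating the logarithmic conformal singularity of the two-dimensional two-point function from a bounded remainder, and finally controlling the resulting singular integral by Hölder's inequality together with a Cauchy determinant bound. First I would write $S_n$ out explicitly via the time ordered product rule \eqref{eq:timeordered-product-vertex}, so that the integrand carries the phases $\prod_i e^{\mathrm{i}a_i\phi(x_i)}$ and the factor $e^{-\sum_{l<j}a_la_j\Delta_{F,\Lambda}(x_l,x_j)}$. Taking the modulus kills the phases, which have unit modulus, and since $\Delta_{F,\Lambda}$ has real symmetric part $w_s=\mathrm{Re}\,\Delta_{+}$ and purely imaginary antisymmetric part, only $w_s$ survives in the exponent. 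This already produces a bound uniform in $\phi$ and reduces the task to estimating $\int_{\mathbb{R}^n}\prod_i|f(a_i)|\,\dvol a\int_{M_2^n}\prod_i g(x_i)\,e^{-\sum_{l<j}a_la_j w_s(x_l,x_j)}\,\dvol x$, where the $a$-integrations will factor out as $\|f\|_1^n$ once the $x$-integral is bounded by its worst case over $|a_i|\le a$.

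The crucial point is that in two dimensions $w_s$ carries only a logarithmic short-distance singularity. I would split it as $w_s(x,y)=-\frac{1}{2\pi}\log(m|x-y|)+h(x,y)$, with $h$ bounded on the compact region $D_\mu$ (uniformly in the regulator, since lowering $\Lambda$ only worsens the coincidence-point behaviour that the inverse conditioning step will never touch). Applying the inverse conditioning of Fröhlich, exactly as in the proof of Lemma \ref{le:estimate}, to the bounded positive-type part $h$ produces the diagonal factor $e^{n\frac{K}{2}a_\mu^2}$, while the singular part leaves behind $\prod_{l<j}(m|x_l-x_j|)^{a_la_j/(2\pi)}$. Because $|a_la_j|\le a^2$, the most singular pair factor is $|x_l-x_j|^{-a^2/(2\pi)}$, which is locally integrable on $M_2$ precisely when $a^2<4\pi$; this is the Sine-Gordon threshold and the reason no renormalization is required here.

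Next I would apply Hölder's inequality on $M_2^n$ with conjugate exponents $p,q$, separating the cutoffs $\prod_i g(x_i)$, which contribute $\|g\|_q^n$, from the singular product, whose $L^p(D_\mu^n)$ norm remains to be controlled. At this point I would invoke the Cauchy determinant formula (Lemma 2.7 of \cite{BahnsPinamontiRejzner}), which rewrites the symmetric product of pair singularities as a determinant whose $L^p$ norm can be estimated factor by factor, yielding the scaling $(2\mu)^{n\frac{a^2}{4\pi}}$ in the diameter of $D_\mu$ and growth of order $(C^n n!)^{1/p}$. Here $L^p$ integrability forces $p\,a^2/(2\pi)<2$, that is $p<4\pi/a_\mu^2$, which is exactly the admissible range in the statement; since $a^2<4\pi$ this range genuinely contains values $p>1$.

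Collecting these factors gives precisely the claimed bound for $|S_n|$, uniform in $\phi$ and, by construction of the splitting, in $\Lambda$, so that the limit $\Lambda\to0$ can subsequently be taken. Summing over $n$, the decisive gain is that $(n!)^{1/p}/n!=(n!)^{-1/q}$ decays superexponentially for any $p>1$, whence the series defining $S(L_I)$ converges absolutely and uniformly in $\phi\in\mathcal{C}$. I expect the hard part to be the step controlling the $L^p$ norm of the singular product through the Cauchy determinant formula: extracting simultaneously the correct factorial growth $(n!)^{1/p}$ and the $\mu$-scaling, while keeping everything uniform in the regulator, is the delicate technical heart, whereas the reduction to the symmetric part and the Hölder splitting are routine.
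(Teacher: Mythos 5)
Your overall strategy tracks the paper's closely: take the modulus to kill the phases and reduce to the symmetric part $w_s$, split $w_s$ into a logarithmic part plus a continuous remainder, pay a factor $e^{n\frac{K}{2}a_\mu^2}$ by conditioning/inverse conditioning, invoke the Cauchy determinant bound of Lemma 2.7 in \cite{BahnsPinamontiRejzner} to get $(2\mu)^{n\frac{a^2}{4\pi}}\|g\|_q^n(C^n n!)^{1/p}$, and conclude via $(n!)^{1/p}/n!=(n!)^{-1/q}$. However, there is a genuine gap at the step you yourself flag as the technical heart: you apply the Cauchy determinant lemma directly to the \emph{mixed-charge} singular product $\prod_{l<j}|x_l-x_j|^{a_la_j/(2\pi)}$ with distinct charges $a_1,\dots,a_n$. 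That lemma does not cover this situation: the determinant structure (and hence the $(C^n n!)^{1/p}$ growth rather than something catastrophically worse) is available only for configurations of charges of \emph{equal magnitude}, arranged in a neutral pattern $\pm a_j$. Nor can you reduce to that case by taking a pointwise worst case over $|a_i|\le a$, as you suggest when factoring out $\|f\|_1^n$: replacing each pair exponent $a_la_j$ by its most singular value $-a^2$ destroys exactly the sign cancellations (repulsive numerator factors versus attractive denominator factors) that make clusters of many points integrable; the all-attractive integral is not controlled by $(C^nn!)^{1/p}$ for large $n$.

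The missing idea — which is how the paper bridges this gap — is a generalized H\"older inequality applied not on $M_2^n$ but with respect to the positive state $\ev{\cdot}_0$ on the commutative algebra $(\mathcal{V}^{w_s},\cdot_{w_s})$:
\begin{equation*}
|\ev{V_{a_1}\cdot_{w_s}\dots\cdot_{w_s}V_{a_n}}_0|^{\,n}
\;\leq\;
\prod_{j=1}^n \ev{\,|V_{a_j}|^{n}\,}_0 ,
\end{equation*}
which reduces the mixed-charge correlator to a product of correlators each involving a \emph{single} charge magnitude $a_j$ (in neutral $\pm a_j$ pairs). Only after this reduction are conditioning and the Cauchy determinant lemma applied, factor by factor; the spacetime H\"older split producing $\|g\|_q^n$ lives inside that per-factor estimate, not at the level of the full mixed integral. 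Your ordering (conditioning first, then a spacetime H\"older, then the determinant bound on the mixed product) therefore cannot be completed as written; inserting the state-level H\"older step before invoking Lemma 2.7 of \cite{BahnsPinamontiRejzner} repairs it and recovers precisely the claimed bound. A minor additional point: in Minkowski signature the logarithmic singularity of $w_s$ sits on the full light cone, $w_{s0}\propto-\log|{-t^2+x^2}|$, not at the Euclidean coincidence point $|x-y|=0$ as in your splitting; the integrability threshold $a^2<4\pi$ is unchanged, but the splitting should be stated in terms of the Minkowski interval.
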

\begin{proof}
We observe that  
\begin{align*}
    S_n &= 
\frac{\mathrm{i}^n}{n!} \mathcal{T}(\underbrace{L_I\otimes \dots \otimes L_I)}_{n}
=\frac{\mathrm{i}^n}{n!} (\underbrace{L_I\cdot_T \dots \cdot_T L_I)}_{n}
\\
&=
\frac{\mathrm{i}^n}{n!}  \int_{[-a_\mu,a_\mu]^n} \dvol a\, 
f(a_1)\dots f(a_n)
V_{a_1}\cdot_{T}\dots \cdot_{T} V_{a_n}
\end{align*}
where $V_{a_i} = V_{a_i}(g)$.
Denoting the symmetric part of the two-point function by  
$w_s = \Delta_+-\frac{i}
{2}\Delta$, we observe that its integral Kernel is  
\begin{align*}
w_s(t,{x}) &= \frac{1}{2\pi} \text{Re}(K_0(m\sqrt{-t^2+x^2}))
\\
&= -\frac{1}{4\pi^2} 
\log(\frac{|-t^2+x^2|}{4\mu^2}) + r(t,{x})\\
&= w_{s0}(t,{x}) + r(t,{x})
\end{align*}
where $r$ is a continuous reminder.  
With this mild singularity, and with the choices of $g$ and $f$ in $L_I$, the time ordered products are well defined. To prove it, we observe that 
\[
|V_{a_1}\cdot_{T}\dots \cdot_{T} V_{a_n}|
\leq 
|\langle V_{a_1}\cdot_{w_s}\dots \cdot_{w_s} V_{a_n}\rangle_0|,
\]
furthermore, H\"older inequality holds with respect to the product $\cdot_{w_s}$ because it is commutative and $\ev{\cdot}_0$ is a positive state, hence we have 
\begin{align*}
 |\langle V_{a_1}\cdot_{w_s}\dots \cdot_{w_s} V_{a_n}\rangle_0|^{n}
 &\leq 
 \prod_{j=1}^n 
 |\langle V_{a_j}\cdot_{w_s}\dots \cdot_{w_s} V_{a_j}\rangle_0|
\end{align*} 
Furthermore, using results already known in the literature, see e.g. \cite{BahnsPinamontiRejzner} 
we get
\begin{align*}
 |\langle V_{a_1}\cdot_{w_s}\dots \cdot_{w_s} V_{a_n}\rangle_0|^{n}
  &\leq 
  \prod_{j=1}^n 
 \left(e^{n\frac{K}{2} a_j^{2}}
 |\langle V_{a_j}\cdot_{w_{s0}}\dots 
 \cdot_{w_{s0}} V_{a_j}\rangle_0| 
 \right)\\
  &\leq e^{2 n\frac{K}{2} a_\mu^{2}}
  \prod_{j=1}^n 
  |\langle V_{a_j}\cdot_{w_{s0}}\dots 
  \cdot_{w_{s0}} V_{a_j}\rangle_0|
 \\
 &\leq  
 \left(e^{ n\frac{K}{2} a_\mu^{2}}
 (2\mu)^{n \frac{a^2}{4\pi}} \|g\|_q^n (C^n n!)^{1/p}
 \right)^n.
\end{align*}
Here the constants 
$K$ and $C$ are suitable constants which depends on $\sup{\{a_j,j\in\{1,\dots, n\}\}}$, on $\mu$,  on $m$.
These constants are obtained in the following way.
First of all, we apply conditioning and inverse conditioning, see e.g. Theorem 2.2 in \cite{BahnsPinamontiRejzner}, to bound 
\[
|\langle V_{a_j}\cdot_{w_s}\dots \cdot_{w_s} V_{a_j}\rangle_0|
\leq
e^{a_j \frac{K}{2}}
|\langle V_{a_j}\cdot_{w_{s0}}\dots \cdot_{w_{s0}} V_{a_j}\rangle_0|
\]
where 
\[
K = \lim_{x\to 0} \left(w_s(x)-w_{s0}(x)\right) 
= -\frac{1}{4\pi} \left(
2\gamma+\log\frac{m^2\mu^2}{2}
\right).
\]
We can then use the Cauchy Determinant Lemma on every factor of the product indexed by $j$ to find that 
\[
|\langle V_{a_j}\cdot_{w_{s0}}\dots \cdot_{w_{s0}} V_{a_j}\rangle_0| \leq  (2\mu)^{n \frac{a^2}{4\pi}} \|g\|_q^n (C^n n!)^{1/p}
\]
for a suitable constant $C$
(see e.g. Lemma 2.7 in \cite{BahnsPinamontiRejzner}).
Combining these estimates we get the statement of the theorem.
\end{proof}

\subsection{The three dimensional case for an  interaction Lagrangian
 $\lambda \phi^4$}\label{se:limit3d}

In this section we discuss how to recover the $\lambda \phi^4_3$ theory in suitable limits. 
To this end we add to the interaction Lagrangian a suitable parameter that later will be removed.

The interaction Lagrangian we are working with, is assumed to have the form
\[
L_{I,\Lambda_1,\Lambda_2} := \lambda \int_M   e^{-\Lambda_1 \phi^2}\left(\frac{\phi^4}{4} + a_1(\Lambda_2) \frac{\phi^2}{2} + a_2(\Lambda_2)\right)
g(x)\dvol x
\]
where the coefficients $a_i$ depends on the cut offs and will be fixed later. 
Furthermore $L_I$ is assumed to be an element of
$\mathcal{V}^{\Delta_{+,\Lambda_2}}$.

\subsubsection{Renormalization of $\lambda \phi^4_3$}
In $d=3$ the theory with the interaction Lagrangian density of the form $\lambda \phi^4$ is superrenormalizable, actually in the expansion of the $S$ matrix of the theory as a sum over all possible Feynman diagrams, there is only one connected graph with two external legs which is non vanishing and two connected graphs with no external legs which needs to be renormalized. 
The corresponding distributions that needs to be renormalized in the limit of vanishing $\Lambda$ are
\begin{align*}
G_1 &= \int \dvol x\dvol y\;\phi(x)\phi(y)\Delta_{F,\Lambda}(x,y)^3g(x)g(y),
\\
G_2 &=
\int \dvol x\dvol y \;\Delta_{F,\Lambda}(x,y)^4g(x)g(y)
, 
\\
G_3  &=
\int \dvol x\dvol y\dvol z \; \Delta_{F,\Lambda}(x,y)^2\Delta_{F,\Lambda}(x,z)^2\Delta_{F,\Lambda}(y,z)^2g(x) g(y) g(z).
\end{align*}
The degree of divergence of $G_1$ and $G_3$ is $0$, while the degree of divergence of $G_2$ is 1.
The graph with two external legs is the sunrise diagram.
We can add a mass counter term to $L_I$ to make $S(L_I)$ regular in the limit of vanishing $\Lambda$.
The two graphs with vanishing external legs are of order 2 and 3 in powers of $L_I$.
The renormalization in this case amounts to add a constant term in the fields to the lagrangian. 
This term has no effect in the Bogoliubov map and amounts to a phase in $S(L_I)$. 
Notice that in this discussion we have discarded graphs containing tadpoles because they do not appear in the construction we have presented.

Hence, we consider an interaction Lagrangian 
\begin{equation}\label{eq:LIPphi43}
L_{I,\Lambda_1,\Lambda_2}(g) :=  \int_{M}
e^{-\Lambda_1^2 \phi^2(x)}  \left( \lambda \frac{ \phi^4(x)}{4} + \delta m(\Lambda_2)(x) \frac{\phi^2(x)}{2}  +c(\Lambda_2) \right) g(x) \dvol x
\end{equation}
according to the Bogoliubov-Parasiuk-Hepp-Zimmermann (BPHZ) renormalization scheme \cite{BogoPara,Hepp,Zimmermann}, see also \cite{Collins} and \cite{BogoliubovShirkov}, we have that 
\begin{align*}
 \delta m(\Lambda_2)(x) = & -6 \mathrm{i}\lambda^2 \int \Delta_{F,\Lambda_2}^3(x,y)g(y)\dvol y
\\
c(\Lambda_2) = & \frac{3}{4}\lambda^2 \int \Delta_{F,\Lambda_2}^4(x,y)g(x)g(y)\dvol x \dvol y
\\
& +\frac{9}{2}
\lambda^3\int \Delta_{F,\Lambda_2}^2(x,y)\Delta_{F,\Lambda_2}^2(x,z)\Delta_{F,\Lambda_2}^2(y,z)g(x)g(y)g(z)\dvol x \dvol y \dvol z 
\end{align*}
The factor $\mathrm{i}$ is present in $\delta m$ because of the non standard convention used for $\Delta_{F,\Lambda}$ given in \eqref{eq:DeltaF} which is assumed to be equal to $\Delta_{+,\Lambda}$ given in \eqref{eq:Delta+} up to time ordering.
With this choice, the limit $\Lambda_1$ to $0$ can be taken and, as we shall see below, one recovers ordinary perturbation theory.
In the limit of vanishing $\Lambda_1$  we obtain an interaction Lagrangian which is already known in the literature, see e.g. equation (2.1.1) in \cite{MagnenSenor}.

We recall here that there is an extra renormalization freedom in the choice of the parameter $\delta{m}$ and $c$, which here we have implicitly fixed. Other choices of the renormalization constants would results in a change of both $\delta m$ and $c$ by some finite constant which however does not alter the construction.

Finally we observe that thanks to the fact that $\delta m$ is quadratic in the coupling parameter $\lambda$ and $c$ is the sum of a quadratic and cubic contribution in $\lambda$ we have that $L_I$ can be expanded in the following way with respect to the coupling parameter
\begin{equation}\label{eq:expansionV}
L_{I,\Lambda_1,\Lambda_2} = \lambda L_{1;\Lambda_1,\Lambda_2} + \lambda^2 L_{2;\Lambda_1,\Lambda_2} + \lambda^3 L_{3;\Lambda_1,\Lambda_2} .
\end{equation}
With this in mind we have that the time ordered exponential of products of $L_{I,0,0}$ is order by order in perturbation theory finite.
We furthermore, have the following corollary of Theorem \ref{th:convergence-Smatrix} whose proof is an adaptation to the three dimensional case and can thus be omitted 
\begin{corollary}\label{cor:convergenceP34}
    For ever $\Lambda_1>0$ and $\Lambda_2>0$, the perturbative series defining $S(L_{I,\Lambda_1,\Lambda_2})$ is absolutely convergent, uniformly on $\phi$.
\end{corollary}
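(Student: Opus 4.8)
The plan is to reduce Corollary \ref{cor:convergenceP34} directly to Theorem \ref{th:convergence-Smatrix}, by exhibiting $L_{I,\Lambda_1,\Lambda_2}$ of \eqref{eq:LIPphi43} as an interaction Lagrangian of the vertex form \eqref{eq:interaction-Lagrangian}, the only new feature being a mild dependence of the Schwartz profile on the spacetime point. First I would fix $x$ and read off the field profile
\[
\hat f_x(s) := e^{-\Lambda_1^2 s^2}\left(\lambda\frac{s^4}{4} + \delta m(\Lambda_2)(x)\frac{s^2}{2} + c(\Lambda_2)\right),
\]
which is a polynomial times a Gaussian, hence a Schwartz function of $s$. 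Its inverse Fourier transform $f_x$ is again Schwartz; explicitly it is a degree-four polynomial in $a$ multiplied by the Gaussian $e^{-a^2/(4\Lambda_1^2)}$, whose coefficients depend on $x$ only through $\delta m(\Lambda_2)(x)$. This rewrites
\[
L_{I,\Lambda_1,\Lambda_2} = \int_M\int_{\mathbb{R}} e^{\mathrm{i}a\phi(x)}\,f_x(a)\,g(x)\,\dvol a\,\dvol x = \int_{\mathbb{R}} V_a(g_a)\,\dvol a, \qquad g_a(x) := f_x(a)\,g(x),
\]
that is, a point-dependent version of \eqref{eq:interaction-Lagrangian}.

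Next I would control the profiles uniformly in $x$. Since $\Lambda_2>0$ the propagator $\Delta_{F,\Lambda_2}$ is continuous, so the counterterm $\delta m(\Lambda_2)(x) = -6\mathrm{i}\lambda^2\int \Delta_{F,\Lambda_2}^3(x,y)g(y)\,\dvol y$ is continuous and bounded for $x$ in the compact set $\mathrm{supp}\,g$, while $c(\Lambda_2)$ is a finite constant. Consequently the polynomial coefficients of $f_x$ are bounded uniformly over $\mathrm{supp}\,g$, and the standing assumption $L_{I,\Lambda_1,\Lambda_2}\in\mathcal{V}^{\Delta_{+,\Lambda_2}}$ — which is precisely the integrability requirement of Definition \ref{def:Vw} — guarantees that
\[
A := \sup_{x\in\,\mathrm{supp}\,g}\int_{\mathbb{R}} |f_x(a)|\,e^{\frac{a^2}{2}W}\,\dvol a < \infty, \qquad W = \text{Re}\,\Delta_{+,\Lambda_2}(x,x).
\]

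With this in hand I would transcribe the estimate of Theorem \ref{th:convergence-Smatrix}, the only change being that Lemma \ref{le:estimate} is now applied to the $V_{a_i}(g_{a_i})$, producing $\prod_i\|g_{a_i}\|_1$ in place of $\|g\|_1^n$. Using $\|g_{a_i}\|_1 = \int_M |f_x(a_i)|\,g(x)\,\dvol x$ together with $g\geq 0$, Fubini yields
\[
|S(L_{I,\Lambda_1,\Lambda_2})(\phi)| \leq \sum_{n\geq 0}\frac{1}{n!}\left(\int_M g(x)\left(\int_{\mathbb{R}} |f_x(a)|\,e^{\frac{a^2}{2}W}\,\dvol a\right)\dvol x\right)^{\!n} \leq e^{A\|g\|_1},
\]
a bound independent of $\phi$, which gives absolute convergence uniformly on $\mathcal{C}$.

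I expect the only genuine point of care — the \emph{adaptation to the three dimensional case} — to be the bookkeeping of the $x$-dependent profile together with the verification that the standing hypothesis $L_{I,\Lambda_1,\Lambda_2}\in\mathcal{V}^{\Delta_{+,\Lambda_2}}$ is tenable. The real content sits here: one must check that the large-field damping $e^{-\Lambda_1^2 s^2}$ is still inherited by the full interaction term after the quadratic and constant counterterms are adjoined, so that $f_x$ decays like $e^{-a^2/(4\Lambda_1^2)}$ and thereby remains integrable against the normal-ordering weight $e^{a^2W/2}$; this is exactly what membership in $\mathcal{V}^{\Delta_{+,\Lambda_2}}$ encodes, and it is what selects the admissible interplay of the two cutoffs. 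Once that assumption is granted, no analytic input beyond continuity of $\Delta_{F,\Lambda_2}$ for $\Lambda_2>0$ is required, and the proof is a direct repetition of the argument already carried out for Theorem \ref{th:convergence-Smatrix}.
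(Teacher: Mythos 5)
Your proposal is correct and takes essentially the same route the paper intends: the paper omits the proof of Corollary \ref{cor:convergenceP34}, stating only that it is an adaptation of Theorem \ref{th:convergence-Smatrix}, and your rewriting of $L_{I,\Lambda_1,\Lambda_2}$ in vertex form with an $x$-dependent Gaussian-times-polynomial profile, followed by the uniform bound via Lemma \ref{le:estimate} and Fubini, is precisely that adaptation. Your emphasis that the standing assumption $L_{I,\Lambda_1,\Lambda_2}\in\mathcal{V}^{\Delta_{+,\Lambda_2}}$ is what guarantees $A<\infty$ (i.e., that the decay $e^{-a^2/(4\Lambda_1^2)}$ beats the normal-ordering weight $e^{a^2 W/2}$, constraining the admissible interplay of the two cutoffs) is also faithful to the paper, which imposes exactly this hypothesis when introducing the Lagrangian.
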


\subsubsection{Restoring perturbation theory}

Consider the expansion of the $S$ matrix as a power series in the coupling constant.
\[
S(\Lambda_1,\Lambda_2) = \sum_{n\geq 0} S_n(\Lambda_1,\Lambda_2) \lambda^n
\]
for every $n$, $S_n$ are suitable elements of $\mathcal{F}$, which depends on the parameters $\Lambda_1, \Lambda_2$. 
We also observe that, in view of the finite expansion of $L_{\Lambda_1,\Lambda_2}$ in powers of $\lambda$ given in \eqref{eq:expansionV}, we have that  $S_0=1$ and for $n\geq 0$ 
\begin{equation} \label{eq:Sordern}
S_{n}(\Lambda_1,\Lambda_2) = 
\sum_{p\geq 0} \sum_{J\in \{1,2,3\}^p, \|J\|=n}
\frac{1}{p!} 
\mathcal{T}_{\Lambda_2}(\underbrace{L_{J_1;\Lambda_1,\Lambda_2} \otimes \dots \otimes L_{J_p;\Lambda_1,\Lambda_2} }_{p}),  
\end{equation}
where $\mathcal{T}_{\Lambda_2}$  is the map which realizes the time ordering and it coincides with the product $\cdot_{T,\Lambda_2}$ which is equal to $\star_{\Delta_{F,\Lambda}}$ in \eqref{eq:prodcut}. 
We consider now the limit where $\Lambda_i$ tends to $0$.
We observe that, for every $S_n$ the limit can be taken in suitable orders and that one obtains the ordinary perturbation theory with a precise choice of the renormalziation constants.

The coefficients of the power series which we obtain with the BPHZ renormalization procedure discussed above is
\[
S_{BPHZ}(L_{I,0,0}) = \sum_{n}
S_{BPHZ}(L_{I,0,0})_n
\lambda^n
\]
with
\begin{equation} \label{eq:S(BPHZ)ordern}
S_{BPHZ}(L_{I,0,0})_n = \lim_{\Lambda_2 \to 0}
\sum_p \sum_{J\in \{1,2,3\}^p, \|J\|=n}
\frac{1}{p!} 
\mathcal{T}_{\Lambda_2}(\underbrace{L_{J_1;0,\Lambda_2} \otimes \dots \otimes L_{J_p;0,\Lambda_2} }_{p}).  
\end{equation}
where $\|J\|=\sum_i J_i$.
In this way we obtain the description of $S_n$ in terms of suitable limit of regularized Feynman diagrams.
We also observe that the limit of vanishing  $\Lambda_2$ can be taken without encountering divergences because the terms $L_2$ and $L_3$ in $L_I$ are precisely the counterterms which cancel the singularities at the coinciding point limits. 

\begin{proposition}
For some choice of the renormalization parameters, the asymptotic expansion of $S$ obtained with the BPHZ method is given in equation \eqref{eq:S(BPHZ)ordern}.
\end{proposition}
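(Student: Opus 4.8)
The plan is to prove the identity order by order in the coupling constant $\lambda$, taking first the limit $\Lambda_1\to 0$ and then $\Lambda_2\to 0$. The starting point is the observation that, because $L_{I,\Lambda_1,\Lambda_2}$ has the \emph{finite} expansion \eqref{eq:expansionV} with only three terms of orders $\lambda^1,\lambda^2,\lambda^3$, the coefficient $S_n(\Lambda_1,\Lambda_2)$ in \eqref{eq:Sordern} is a \emph{finite} sum: every multi-index $J\in\{1,2,3\}^p$ with $\|J\|=n$ has length $p\leq n$, so both the sum over $p$ and, for each fixed $p$, the sum over admissible $J$ are finite. By Corollary \ref{cor:convergenceP34} each summand is a well-defined regularized time-ordered product, so no convergence issue in $\lambda$ arises at this stage and the whole analysis reduces to controlling finitely many Feynman-type integrals for each $n$.

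First I would take the limit $\Lambda_1\to 0$ at fixed order $n$. Since $e^{-\Lambda_1^2\phi^2}\to 1$, each regularized vertex $L_{J_i;\Lambda_1,\Lambda_2}$ converges to the polynomial vertex $L_{J_i;0,\Lambda_2}$, and because $\Lambda_2>0$ keeps $\Delta_{F,\Lambda_2}$ continuous and bounded, the associated time-ordered products are represented by integrals of products of $\Delta_{F,\Lambda_2}$ against the (now polynomial) vertex functions. Dominated convergence on these finitely many integrals then yields $\lim_{\Lambda_1\to 0}S_n(\Lambda_1,\Lambda_2)=S_n(0,\Lambda_2)$, which is exactly the summand appearing inside the limit in \eqref{eq:S(BPHZ)ordern}.

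The decisive step is the limit $\Lambda_2\to 0$, where renormalization enters. Here I would invoke the super-renormalizability of $\phi^4_3$: among all the graphs generated by expanding $S_n(0,\Lambda_2)$, the only primitively divergent structures whose $\Lambda_2$-regularized integrals fail to converge as $\Lambda_2\to 0$ are the sunrise self-energy (built from $\Delta_{F,\Lambda_2}^3$, degree $0$) and the two vacuum graphs built from $\Delta_{F,\Lambda_2}^4$ and $\Delta_{F,\Lambda_2}^2\Delta_{F,\Lambda_2}^2\Delta_{F,\Lambda_2}^2$ (degrees $1$ and $0$). The counterterm vertices $L_2,L_3$, whose coefficients $\delta m(\Lambda_2)$ and $c(\Lambda_2)$ from \eqref{eq:LIPphi43} are precisely the integrals of these three kernels, are inserted through the very same structure \eqref{eq:Sordern}; one checks that each divergent subgraph appearing at order $n$ is matched by a counterterm insertion of the correct multiplicity and sign, so the divergences cancel and the limit in \eqref{eq:S(BPHZ)ordern} exists and is finite.

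The main obstacle is the combinatorial bookkeeping of this cancellation at higher orders: although only three primitively divergent structures occur, each can appear as a subgraph of a larger diagram, and the associated subdivergence must be cancelled by inserting the corresponding counterterm vertex at that location. One must verify that the finite expansion \eqref{eq:expansionV}, when fed into \eqref{eq:Sordern}, generates exactly these insertions with the multiplicities and signs prescribed by the BPHZ forest formula — that the factors $-6\mathrm{i}\lambda^2$, $\tfrac34\lambda^2$ and $\tfrac92\lambda^3$ emerge with the right symmetry weights once dressed by the remaining $\phi^4$ vertices. Super-renormalizability keeps this tractable, since the counterterms $\delta m,c$ are of fixed finite order in $\lambda$ and only the three listed structures require subtraction. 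Once the matching is established, the identification with the standard BPHZ expansion is immediate, because $\delta m$ and $c$ were \emph{defined} as the BPHZ counterterms; the residual freedom in their finite, $\Lambda_2$-independent parts is precisely the clause ``for some choice of the renormalization parameters'' in the statement, and such shifts do not affect existence of the limit.
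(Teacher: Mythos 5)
The first thing to note is that the paper does \emph{not} prove this Proposition at all: immediately after stating it, the author writes that the statement is known in the literature and cites Collins' textbook, deferring all actual work in this section to Theorem~\ref{th:3d-limits-noLambda} and Corollary~\ref{cor:perturbation-equal}. So your attempt cannot be matched against a paper proof and must stand on its own. Judged that way, its first half is misdirected: equation~\eqref{eq:S(BPHZ)ordern} contains only the vertices $L_{J;0,\Lambda_2}$, i.e.\ it is \emph{already} at $\Lambda_1=0$, so your two paragraphs on taking $\Lambda_1\to 0$ by dominated convergence concern a different statement --- that limit is precisely the content of Theorem~\ref{th:3d-limits-noLambda}, which the paper proves separately (by writing $e^{-\Lambda_1^2\phi^2}$ through its Fourier transform $r(a/\Lambda_1)/\Lambda_1$ and letting it converge to a Dirac delta), and it plays no role in the Proposition under discussion.

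The genuine gap is in what you correctly identify as ``the decisive step'' and ``the main obstacle'' but then leave entirely as ``one checks'' / ``one must verify.'' The whole content of the Proposition is the claim that the standard BPHZ asymptotic expansion coincides, for some admissible choice of the finite parts of the renormalization constants, with the $\Lambda_2\to 0$ limit of the cutoff expansion built from the three vertices of \eqref{eq:expansionV} fed into \eqref{eq:Sordern}. Establishing this requires exactly the bookkeeping you postpone: showing that the insertions of $\delta m(\Lambda_2)$ and $c(\Lambda_2)$ generated by \eqref{eq:Sordern} reproduce the BPHZ subtractions with the correct multiplicities and signs at every order; handling the mismatch from tadpole graphs, which are present in a naive BPHZ expansion but absent here because the products \eqref{eq:timeordered-product-vertex} are normal ordered (the exponent runs over $l<j$ only, a point the paper flags explicitly when it says tadpole graphs ``do not appear in the construction''); and accounting for the finite, $\Lambda_2$-independent differences between the paper's counterterms --- which are defined as \emph{full} regularized diagram values, not zero-momentum Taylor subtractions --- and the strict BPHZ prescription, which is where the clause ``for some choice of the renormalization parameters'' earns its keep. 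None of this is carried out, so what you have is a plan for a proof in which the one step that constitutes the proof is left open; the paper, for its part, sidesteps the same step by invoking the literature on super-renormalizable $\phi^4_3$.
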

The statement of this proposition is known in the literature, see e.g. \cite{Collins}. 

We prove here that at each perturbative order
$S(L_I)_n$
converges to $S_{BPHZ}(L_{I,0,0})_n$ when the limits 
of vanishing  $\Lambda_i$ are taken.
 
\begin{theorem}\label{th:3d-limits-noLambda}
At each perturbative order, namely for every $n$
\[
\lim_{\Lambda_2\to 0}
\lim_{\Lambda_1\to 0} 
S_n(\Lambda_1,\Lambda_2) = 
S_{BPHZ}(L_{I,0,0})_n
\]
\end{theorem}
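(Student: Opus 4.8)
The plan is to exploit the fact that, for every fixed order $n$, the coefficient $S_n(\Lambda_1,\Lambda_2)$ in \eqref{eq:Sordern} is a \emph{finite} sum: the constraint $\|J\|=n$ with $J\in\{1,2,3\}^p$ forces $\lceil n/3\rceil \le p\le n$, and for each such $p$ there are only finitely many multi-indices $J$. Since the right-hand side $S_{BPHZ}(L_{I,0,0})_n$ in \eqref{eq:S(BPHZ)ordern} is defined as $\lim_{\Lambda_2\to 0}$ of exactly the same finite sum but with the vertices taken at $\Lambda_1=0$, the whole statement reduces to the single interchange
\[
\lim_{\Lambda_1\to 0}\,\mathcal{T}_{\Lambda_2}\big(L_{J_1;\Lambda_1,\Lambda_2}\otimes\cdots\otimes L_{J_p;\Lambda_1,\Lambda_2}\big)
=
\mathcal{T}_{\Lambda_2}\big(L_{J_1;0,\Lambda_2}\otimes\cdots\otimes L_{J_p;0,\Lambda_2}\big),
\]
carried out for each individual term at \emph{fixed} $\Lambda_2>0$, followed by the definitional identification of the $\Lambda_2\to 0$ limit with $S_{BPHZ}(L_{I,0,0})_n$. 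First I would isolate the only place where $\Lambda_1$ enters: comparing \eqref{eq:LIPphi43} with \eqref{eq:expansionV}, each $L_{J_i;\Lambda_1,\Lambda_2}$ has the form $\int_M e^{-\Lambda_1^2\phi^2(x)}P_{J_i}(\phi(x))\,g(x)\,\dvol x$ for a fixed polynomial $P_{J_i}$ (the coefficients $\delta m(\Lambda_2)$ and $c(\Lambda_2)$ are $\Lambda_1$-independent), so the entire $\Lambda_1$-dependence sits in the bounded Gaussian factor $e^{-\Lambda_1^2\phi^2}$.

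The crucial simplification is that $\Lambda_2>0$ is kept frozen throughout this first limit. Then $\Delta_{F,\Lambda_2}$ is a continuous (indeed smooth) function, so the time-ordered product $\cdot_{T,\Lambda_2}=\star_{\Delta_{F,\Lambda_2}}$ of these local functionals is computed by the standard finite contraction sum: expanding the functional differential operator in \eqref{eq:star-product} one obtains a finite sum of Feynman graphs, each contributing an integral over $M^p$ of a product of propagators $\Delta_{F,\Lambda_2}(x_k,x_l)$ times vertex factors $V^{(m)}_{J_i,\Lambda_1}(\phi(x_i))$, where $V_{J_i,\Lambda_1}(s)=e^{-\Lambda_1^2 s^2}P_{J_i}(s)$. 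The sum terminates because each vertex has finite polynomial degree. I would then pass the limit $\Lambda_1\to 0$ inside this finite sum and inside each $M^p$-integral. The pointwise convergence is elementary: by Leibniz, every derivative $V^{(m)}_{J_i,\Lambda_1}(s)$ converges to $P^{(m)}_{J_i}(s)$ uniformly for $s$ in compact sets, since all derivatives of $e^{-\Lambda_1^2 s^2}$ tend to those of the constant $1$.

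To justify the interchange of limit and integral I would invoke dominated convergence. Tested on a fixed configuration $\phi\in\mathcal{C}$, the field $\phi(x)$ is bounded on the compact set $\supp g$, hence the vertex factors $V^{(m)}_{J_i,\Lambda_1}(\phi(x_i))$ are bounded uniformly in $\Lambda_1\in[0,1]$ and $x_i\in\supp g$ (the Gaussian is $\le 1$ and the polynomials are evaluated on a bounded set); the propagators $\Delta_{F,\Lambda_2}$ are continuous and therefore bounded on $(\supp g)^2$, and $\prod_i|g(x_i)|$ is integrable. This furnishes an integrable dominating function independent of $\Lambda_1$, so the limit passes through, giving convergence to the same contraction sum built from the polynomials $P_{J_i}$, that is, to $\mathcal{T}_{\Lambda_2}(L_{J_1;0,\Lambda_2}\otimes\cdots\otimes L_{J_p;0,\Lambda_2})$. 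The same domination argument applied to the finitely many functional derivatives of the product upgrades this to convergence in the topology of $\mathcal{F}$. Summing the finitely many terms then yields $\lim_{\Lambda_1\to 0}S_n(\Lambda_1,\Lambda_2)=\sum_{p}\sum_{J}\frac{1}{p!}\mathcal{T}_{\Lambda_2}(L_{J_1;0,\Lambda_2}\otimes\cdots)$ for every fixed $\Lambda_2>0$, which is precisely the quantity whose $\Lambda_2\to 0$ limit defines $S_{BPHZ}(L_{I,0,0})_n$ in \eqref{eq:S(BPHZ)ordern}. The existence of that final limit is guaranteed by super-renormalizability: by construction $\delta m(\Lambda_2)$ and $c(\Lambda_2)$ are exactly the counterterms cancelling the only divergent subgraphs $G_1,G_2,G_3$, as recorded in the preceding Proposition.

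I expect the main obstacle to be the dominated-convergence step, namely controlling the $M^p$-integrals uniformly in $\Lambda_1$; but the difficulty is genuinely mild here precisely because $\Lambda_2$ is frozen at a positive value, so the propagators never become singular, the contraction sum stays finite, and the only moving part is the uniformly bounded Gaussian factor. The delicate renormalization (the $\Lambda_2\to 0$ cancellations) is entirely deferred to, and absorbed into, the definition of $S_{BPHZ}$ on the right-hand side, so it does not need to be redone here.
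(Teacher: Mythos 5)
Your reduction of the theorem to the single interchange $\lim_{\Lambda_1\to 0}\mathcal{T}_{\Lambda_2}(L_{J_1;\Lambda_1,\Lambda_2}\otimes\cdots\otimes L_{J_p;\Lambda_1,\Lambda_2})=\mathcal{T}_{\Lambda_2}(L_{J_1;0,\Lambda_2}\otimes\cdots\otimes L_{J_p;0,\Lambda_2})$ at fixed $\Lambda_2>0$ is sound, and matches the structure of the paper's argument (finitely many terms at each order in $\lambda$, thanks to \eqref{eq:expansionV}; the $\Lambda_2\to 0$ limit deferred to the definition \eqref{eq:S(BPHZ)ordern}). However, there is a genuine gap in how you compute that single interchange. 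You assert that, since $\Delta_{F,\Lambda_2}$ is continuous, the time-ordered product is ``the standard finite contraction sum'' and that ``the sum terminates because each vertex has finite polynomial degree.'' This is false for $\Lambda_1>0$: the vertex function $V_{J_i,\Lambda_1}(s)=e^{-\Lambda_1^2 s^2}P_{J_i}(s)$ is \emph{not} a polynomial, so all of its derivatives of every order are nonvanishing, and the expansion of the exponential differential operator in \eqref{eq:star-product} is an \emph{infinite} series of contraction terms. (This is precisely why the paper never works with the graph expansion for $\Lambda_1>0$, but instead defines these products through the closed-form vertex-operator formula \eqref{eq:timeordered-product-vertex}.) Consequently your dominated-convergence argument, which is carried out term by term over the $M^p$-integrals, never addresses the real issue: interchanging $\lim_{\Lambda_1\to 0}$ with an infinite sum over contraction orders. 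The terms of that series involve derivatives $V^{(m)}_{J_i,\Lambda_1}$ of arbitrarily high order $m$; these do vanish pointwise as $\Lambda_1\to 0$ for $m>\deg P_{J_i}$, but their suprema grow with $m$ (roughly like Gaussian moments, i.e.\ factorially in $m$ before the $\Lambda_1^m$ suppression), so a summable, $\Lambda_1$-uniform dominating sequence must be exhibited, and this is exactly the nontrivial estimate your proposal skips by treating the sum as finite.

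The paper circumvents this entirely by staying in the Fourier ($a$-variable) representation: it writes $L_{I,\Lambda_1,\Lambda_2}=\int r(a/\Lambda_1)\Lambda_1^{-1}D_a(x)e^{\mathrm{i}a\phi(x)}g(x)\,\dvol x\,\dvol a$, where $D_a$ is a polynomial differential operator in $a$ and $r$ is the Fourier transform of the Gaussian. The time-ordered product of $n$ such terms then has the closed form of an integral over $(a_1,\dots,a_n)$ against the resummed exponential $e^{\sum_{i<j}a_ia_j\Delta_{F,\Lambda_2}(x_i,x_j)}$, with no infinite contraction series left to control; after computing the finitely many $a$-derivatives in the $D_{a_i}$, the limit $\Lambda_1\to 0$ is simply the statement that $\Lambda_1^{-1}r(\cdot/\Lambda_1)$ is a nascent Dirac delta in each $a_i$. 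If you want to salvage your graph-expansion route, you would need to prove a uniform-in-$\Lambda_1$ bound on the full contraction series (effectively re-deriving the convergence estimate of Theorem \ref{th:convergence-Smatrix} with uniformity in the regulator), which is considerably more work than the paper's one-line delta-function limit.
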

\begin{proof}
Let us start observing that 
\begin{align*}
L_{I,\Lambda_1,\Lambda_2} &= 
\int e^{-\Lambda_1^2 \phi^2}  \left( \frac{\phi^4}{4} + \delta m(\Lambda_2) \frac{\phi^2}{2}  +c(\Lambda_2) \right) g(x)\dvol x 
\\
&= 
\int 
r\left(\frac{a}{\Lambda_1}\right)
\frac{1}{\Lambda_1}
\left( \frac{1}{4}\frac{\partial^4}{\partial a^4} - \frac{1}{2}\delta m(\Lambda_2)(x) \frac{\partial^2}{\partial a^2}  +c(\Lambda_2)(x) \right)
e^{\mathrm{i} \phi(x)a}
g(x)\dvol x 
\dvol a 
\\&= 
\int 
r\left(\frac{a}{\Lambda_1}\right)
\frac{1}{\Lambda_1}
D_a(x)
e^{\mathrm{i} \phi(x)a}
g(x)\dvol x 
\dvol a
\end{align*}
where $r$ is the Fourier transform of the factor $e^{-\Lambda_1^2 \phi^2}$, namely  
\[
r(a) = \frac{1}{2\pi} \int e^{- p^2 } e^{-\mathrm{i} p  a}\dvol p
\]
and $D_a(x)$ is the differential operator in $a$ with coefficients depending on $x$ implicitly defined in the last equation. Hence
\begin{gather*}
\mathcal{T}_{\Lambda_2}(\underbrace{L_{I,\Lambda_1,\Lambda_2} \!\otimes\! \dots \!\otimes L_{I,\Lambda_1,\Lambda_2} }_{n})
=
\\
=\! \int \!\!\dvol  a^n \!\!\! 
\int \!\!\dvol x^n \!\left(\!\prod_{i} \!\frac{1}{\Lambda_1}r\left(\frac{a_i}{\Lambda_1}\right) g(x_i)  D_{a_i}(x_i) \!\right) e^{\sum\limits_{j}\mathrm{i} a_j \phi(x_j)} 
\!
e^{\sum\limits_{1\leq i<j\leq n} \!\!\!\! a_ia_j \Delta_{F,\Lambda_2}(x_i,x_j)}. 
\end{gather*}
After having computed all the derivatives present in $D_{a_i}$ for all $a_i$, keeping $\Lambda_2 >0$, we obtain an expression in which, for every $\phi\in \mathcal{C}$ we can take the limit $\Lambda_1\to 0 $.
In the limit of vanishing $\Lambda_1$ the integrals over $r \dvol a$ converges to the Dirac delta function, hence, the result is 
\begin{align*}
\lim_{\Lambda_1\to 0} \mathcal{T}_{\Lambda_2}(\underbrace{L_{\Lambda_1,\Lambda_2} \otimes \dots \otimes L_{\Lambda_1,\Lambda_2} }_{n})
=
\mathcal{T}_{\Lambda_2}(\underbrace{L_{I,0,\Lambda_2} \otimes \dots \otimes L_{I,0,\Lambda_2} }_{n}). 
\end{align*}
Using this relation in the comparison of equations \eqref{eq:Sordern} with 
 equation \eqref{eq:S(BPHZ)ordern}
after taking the appropriate limits in the appropriate order we get the thesis.
\end{proof}

\begin{corollary}\label{cor:perturbation-equal}
In the hypothesis of Theorem \ref{th:3d-limits-noLambda}, similar results holds taking the limits of vanishing $\Lambda_i$ in another direction, namely the following holds 
\[
\lim_{\Lambda\to 0}
S_n(\Lambda,\frac{1}{-\log(\Lambda)}) = 
S_{BPHZ}(L_{I,0,0})_n
\]
\end{corollary}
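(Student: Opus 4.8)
The plan is to reduce the single joint limit along the curve $(\Lambda_1,\Lambda_2)=(\Lambda,\tfrac{1}{-\log\Lambda})$ to the iterated limit already established in Theorem \ref{th:3d-limits-noLambda}, paying explicit attention to the price incurred by letting the two parameters vanish simultaneously rather than successively. Writing $S_n(0,\Lambda_2):=\lim_{\Lambda_1\to 0}S_n(\Lambda_1,\Lambda_2)$ for the inner limit identified in that proof, I would split
\[
\left|S_n\!\left(\Lambda,\tfrac{1}{-\log\Lambda}\right) - S_{BPHZ}(L_{I,0,0})_n\right|
\leq
\left|S_n\!\left(\Lambda,\tfrac{1}{-\log\Lambda}\right) - S_n\!\left(0,\tfrac{1}{-\log\Lambda}\right)\right|
+ \left|S_n\!\left(0,\tfrac{1}{-\log\Lambda}\right) - S_{BPHZ}(L_{I,0,0})_n\right|.
\]
Since $\tfrac{1}{-\log\Lambda}\to 0$ as $\Lambda\to 0$, the second term is just the outer ($\Lambda_2\to 0$) limit of Theorem \ref{th:3d-limits-noLambda} evaluated along this particular sequence; because that limit genuinely exists, it is attained along every sequence tending to zero, so the second term vanishes. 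The entire content therefore lies in the first term, i.e.\ in the error of replacing $\Lambda_1=0$ by $\Lambda_1=\Lambda$ while keeping $\Lambda_2=\tfrac{1}{-\log\Lambda}$ fixed.

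For this first term I would start from the explicit representation derived in the proof of Theorem \ref{th:3d-limits-noLambda}, in which each vertex carries the factor $\frac{1}{\Lambda_1}r(a_i/\Lambda_1)$ acting as an approximate identity in $a_i$ together with the differential operator $D_{a_i}(x_i)$. Since $r$ is an even Schwartz function with $\int r = 1$, the standard mollifier estimate gives, for any smooth $\psi$,
\[
\left|\int \frac{1}{\Lambda_1}r\!\left(\tfrac{a}{\Lambda_1}\right)\psi(a)\,\dvol a - \psi(0)\right|\leq C\,\Lambda_1^2\,\sup_{a}|\psi''(a)|,
\]
the full power $\Lambda_1^2$ arising from the vanishing first moment of $r$. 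Applying this factor by factor, with $\psi$ equal to $D_{a_i}$ acting on $e^{\sum_j \mathrm{i}a_j\phi(x_j)}e^{\sum_{i<j}a_ia_j\Delta_{F,\Lambda_2}(x_i,x_j)}$, the difference between the $\Lambda_1>0$ integrand and its $\Lambda_1=0$ evaluation is bounded by $\Lambda_1^2$ times suprema of finitely many $a$-derivatives of this integrand (up to order six per vertex). These derivatives produce polynomials in $\phi(x_i)$, in $\delta m(\Lambda_2)(x_i)$, in $c(\Lambda_2)$ and in $\Delta_{F,\Lambda_2}(x_i,x_j)$, times the Gaussian in $a$ and the exponential $e^{\sum_{i<j}a_ia_j\Delta_{F,\Lambda_2}}$; the $x$-integrals are finite for every $\Lambda_2>0$ because $\Delta_{F,\Lambda_2}$ is continuous and $g$ is compactly supported.

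The decisive step, which I expect to be the main obstacle, is to track the $\Lambda_2$-dependence of these suprema and show that it diverges only polynomially as $\Lambda_2\to 0$. In $d=3$ the coincidence value $W=w_s(x,x)$ and the counterterms $\delta m(\Lambda_2),\,c(\Lambda_2)$ all grow at most like a fixed negative power of $\Lambda_2$ (logarithmically or linearly, never exponentially), so one obtains a bound of the form
\[
\left|S_n\!\left(\Lambda_1,\Lambda_2\right) - S_n\!\left(0,\Lambda_2\right)\right|\leq C_n\,\Lambda_1^2\,\Lambda_2^{-N}
\]
for some $N=N(n)$ and a constant $C_n$ independent of the parameters. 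One must also verify that the $a$-integrals defining the regularized vertices converge uniformly along the curve: by Lemma \ref{le:estimate} the Gaussian weight $e^{a^2 W/2}$ with $W\sim \Lambda_2^{-1}$ must be dominated by the Gaussian of width $\Lambda_1$, i.e.\ $\Lambda_1^2\lesssim \Lambda_2$, and the chosen relation $\Lambda_2=\tfrac{1}{-\log\Lambda}$ satisfies this for small $\Lambda$. Substituting $\Lambda_1=\Lambda$ and $\Lambda_2=\tfrac{1}{-\log\Lambda}$ turns the bound into $C_n\,\Lambda^2\,(-\log\Lambda)^{N}$, which tends to zero since $\Lambda\to 0$ beats every power of $-\log\Lambda$. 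This is precisely why the logarithmic profile is the correct direction: it drives $\Lambda_2$ to zero slowly enough that its inverse powers are still annihilated by the $\Lambda^2$ gained from the $\Lambda_1$-mollification. Combining the two terms yields the claim.
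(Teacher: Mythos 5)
Your proposal is correct and rests on essentially the same mechanism as the paper's proof: in both, the error introduced by the $a$-mollification carries positive powers of $\Lambda$ (the paper expands $S_n$ in powers of $a$, each of which is accompanied by a factor $\Lambda$; you use the second-order mollifier remainder giving $\Lambda^2$), while every $\Lambda_2$-dependent quantity ($W$, $\delta m$, $c$, powers of $\Delta_{F,\Lambda_2}$) diverges only polynomially in $\Lambda_2^{-1}=-\log\Lambda$, so that $\Lambda^{2}(-\log\Lambda)^{N}\to 0$ closes the argument. Your triangle-inequality split through the iterated limit of Theorem \ref{th:3d-limits-noLambda}, together with the explicit compatibility check $\Lambda_1^2\lesssim\Lambda_2$ for the Gaussian-weighted $a$-integrals (left implicit in the paper, and needed since the literal $\sup_a\abs{\psi''}$ in your mollifier bound must in fact be the Gaussian-dominated version you invoke), is just a cleaner packaging of the paper's term-by-term estimate $C_{nk}\,\Lambda^{n}\abs{\log\Lambda}^{k}$ for the orders $n\geq 1$.
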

\begin{proof}
Notice that
\[
S_n(\Lambda,\frac{1}{-\log(\Lambda)}) 
\]
can be expanded in powers of $a$. 
Arguing as in the proof of Theorem \ref{th:3d-limits-noLambda} we get that the order $0$ converges to $S_{BPHZ}(L_{I,0,0})_n$.
The order greater than $0$, are of the form $\sum_{n,k}a^{n}\Delta_{F,\Lambda}^{k} C_{nk}$.
However $\Delta_{F,\Lambda }^{k}$ is a continuous function and it is bounded by a  $C |\log(\Lambda)|^{k}$ where $C$ is a suitable constant. At the same time $a$ gets multiplied by a factor $\Lambda$ .

The $n$th order contribution is thus bounded by $C_{nk} \Lambda^n |\log(\Lambda)|^{k}$ and it tends to $0$ in the limit of vanishing $\Lambda$ for every $k$ in the case $n\geq 1$.
\end{proof}

\subsubsection{Extracting subsequences which converge in the weak-* topology}

The following result permits now to obtain convergent sequences of unitary operators in the GNS representation of the vacuum. 
These unitary operators are various regularizations of the sought $S$-matrix of the theory. We stress that we do not have infrared problems in this analysis because the interaction Lagrangian is smeared with a compactly supported smooth function $g$.

\begin{theorem}\label{eq:th-*weak-convergence}
Fix two strictly positive constants, $\Lambda_1$ and $\Lambda_2$. 
Consider an interaction lagrangian $L_{I,\Lambda_1,\Lambda_2}$ of the form given in \eqref{eq:LIPphi43}, in the three dimensional Minkowski spacetime.
Let $(\mathcal{H},\pi,\psi)$ the GNS triple of the vacuum state of the free theory.

Let $\Lambda_1,\Lambda_2$ be two strictly positive constants, and consider $L_{I,\Lambda_1,\Lambda_2}$ as in \eqref{eq:LIPphi43}. The corresponding  $S$-matrix, denoted by 
$S(L_I)(\Lambda_1,\Lambda_2)$,
is described by a suitable unitary operator in $\mathcal{B}(\mathcal{H})$.

Consider a sequence of couples of constants $I=\{(\Lambda_1/n,\Lambda_2/(\log(n))\}_n$ it is possible to extract a subsequence of $I$, labeled by $\{n_k\}_k$ which makes the sequence of unitary operator 
\[
\{
S(L_I)(\frac{\Lambda_1}{n_k},\frac{\Lambda_2}{\log(n_k)})
\}_{k\in \mathbb{N}}
\]
weakly-$*$ convergent in $\mathcal{B}(\mathcal{H})$ to an unitary operator $S$.
\end{theorem}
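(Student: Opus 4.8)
The plan is to establish the statement in three steps: unitarity at fixed strictly positive cutoffs, a weak-$*$ compactness argument producing a convergent subsequence, and the more delicate identification of the limit as a unitary.

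For the first step, fix $\Lambda_1,\Lambda_2>0$ and write $U^{(n)} := S(L_I)(\Lambda_1/n,\Lambda_2/\log n)$. Since $L_{I,\Lambda_1,\Lambda_2}$ can be expressed in terms of vertex operators (as in the rewriting used in the proof of Theorem \ref{th:3d-limits-noLambda}), Corollary \ref{cor:convergenceP34} guarantees that its $S$-matrix is an absolutely convergent series, and Theorem \ref{th:convergence-unitary} shows that in the GNS representation of the vacuum this series converges in the strong operator topology to a bounded operator. That this operator is unitary follows from $S_\Lambda^{-1}=S_\Lambda^{*}$ transported through the $*$-homomorphism $r_\Lambda$ of Lemma \ref{le:*-homomorphism}, exactly as in the last lines of the proof of Theorem \ref{th:convergence-unitary}. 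Hence each $U^{(n)}$ is unitary and $\|U^{(n)}\|=1$.

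For the second step I would invoke Banach--Alaoglu. The space $\mathcal{H}$ is the Fock space over the separable one-particle space $L^2(\mathbb{R}^{d-1})$, so it is separable and the predual of $\mathcal{B}(\mathcal{H})$, the trace-class ideal, is separable as well; therefore the closed unit ball of $\mathcal{B}(\mathcal{H})$ is weak-$*$ compact and metrizable, hence sequentially compact. Since $\{U^{(n)}\}$ lies in this ball one extracts a subsequence $\{n_k\}$ with $U^{(n_k)}\to S$ in the weak-$*$ topology, and lower semicontinuity of the norm gives $\|S\|\le 1$. As adjunction is weak-$*$ continuous, the same subsequence also yields $U^{(n_k)*}\to S^{*}$.

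The third step, showing that the contraction $S$ is in fact unitary, is where the real work lies, because weak-$*$ convergence does not respect products, so $U^{(n_k)*}U^{(n_k)}=1$ does not pass directly to $S^{*}S=1$, and a weak limit of unitaries may be a strict contraction when mass escapes to infinity in Fock space. The route I would follow is to prove that the vectors $\{U^{(n_k)}\psi\}$, for $\psi$ in the dense domain $\pi_\omega(r_\Lambda(\mathcal{W}_\Lambda))\Psi_\omega$ of Lemma \ref{le:pullback}, are uniformly tight, i.e. that for every $\varepsilon$ there is a finite-rank projection $P$ with $\sup_k\|(1-P)U^{(n_k)}\psi\|<\varepsilon$. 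Tightness together with weak convergence upgrades $U^{(n_k)}\psi\to S\psi$ to norm convergence, so that $\|S\psi\|=\lim_k\|U^{(n_k)}\psi\|=\|\psi\|$ and $S$ is isometric on a dense set, hence on all of $\mathcal{H}$. Running the identical argument on the adjoint sequence $U^{(n_k)*}=U^{(n_k)-1}$, which is an anti-time-ordered $S$-matrix obeying the same modulus estimates because the bounds of Theorem \ref{th:convergence-Smatrix} depend only on the symmetric part $w_s$, shows $S^{*}$ is isometric as well, whence $S^{*}S=SS^{*}=1$ and $S$ is unitary. I expect the uniform tightness estimate to be the main obstacle, since controlling the regularized $S$-matrices uniformly along the chosen direction requires balancing the divergence of the coincidence value $W=w_s(x,x)$ as $\Lambda_2\to 0$ against the concentration of $f$ as $\Lambda_1\to 0$; this is precisely the role of the coupled scaling $\Lambda_1\sim 1/n$, $\Lambda_2\sim 1/\log n$, which keeps $\int_\mathbb{R}|f(a)|e^{a^2 W/2}\dvol a$ and the higher moments entering the tightness bound under control, in the same spirit as Corollary \ref{cor:perturbation-equal}. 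That corollary also furnishes a consistency check: every weak-$*$ limit point shares the same asymptotic expansion in $\lambda$, with coefficients equal to the BPHZ ones, so the perturbative content of $S$ is independent of the extracted subsequence.
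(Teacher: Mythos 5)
Your first two steps reproduce the paper's proof exactly: unitarity of each $S(L_I)(\Lambda_1/n,\Lambda_2/\log n)$ at fixed strictly positive cutoffs follows from Theorem \ref{th:convergence-Smatrix}, Corollary \ref{cor:convergenceP34} and Theorem \ref{th:convergence-unitary}, and the subsequence is then extracted by weak-$*$ sequential compactness of the unit ball of $\mathcal{B}(\mathcal{H})$ (your separability/metrizability justification of that compactness is a detail the paper leaves implicit). At that point the paper's proof ends: it simply states that the limit of the extracted subsequence is a unitary operator, with no further argument.

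Your third step is therefore not a reproduction of the paper but a critique of it, and the critique is correct: a weak-$*$ limit of unitaries is in general only a contraction --- the powers of the bilateral shift on $\ell^2(\mathbb{Z})$ converge weakly to $0$ --- so unitarity of the limit point $S$ genuinely requires an argument that neither the paper nor your proposal supplies. Your proposed route (uniform tightness of the vectors $U^{(n_k)}\psi$ on the dense domain of Lemma \ref{le:pullback}, upgrading weak to norm convergence, hence isometry of $S$ and of $S^{*}$) is a reasonable strategy, but the decisive estimate is missing and is not merely technical: the bounds available from Theorem \ref{th:convergence-Smatrix} are controlled by $\int_{\mathbb{R}}|f(a)|e^{a^2W/2}\dvol a$ with $W=w_s(x,x)$, and along the coupled scaling $\Lambda_1\sim 1/n$, $\Lambda_2\sim 1/\log n$ the coincidence value $W$ diverges, so these bounds are not uniform in $n$ and tightness does not follow from anything established in the paper. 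In short, your proposal matches the paper on everything the paper actually proves, and where it goes beyond the paper it identifies --- but does not close --- a real gap in the claim that the weak-$*$ limit is unitary rather than merely a contraction of norm at most one.
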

\begin{proof}
We have that 
$S(L_I)(\Lambda_1,\Lambda_2)$ are described by unitary operators in the vacuum Fock space by Theorem, \ref{th:convergence-Smatrix}, Corollary \ref{cor:convergenceP34} and Theorem \ref{th:convergence-unitary}.
The unit ball is sequentially complete in the weak-$*$  topology.
Hence, considering a suitable subsequence 
$\{ S(L_I)({\frac{\Lambda_1}{n_k},\frac{\Lambda_2}{\log(n_k)}}) \}_{n_k}$ of $I$ we obtain sequences of unitary operators acting on the vacuum Fock space which converge to an unitary operator. 
\end{proof}

{\bf Remark}
Theorem \ref{eq:th-*weak-convergence} states that it exists at least one subsequence which converges. It is however possible that various subsequences with different limit points exist. 
We have thus that the limit elements of various  subsequences, are not necessarily equivalent, in the sense that different choices of subsequences could have different limit points.
\bigskip

But, the coefficients of the expansion in powers of $\lambda$ of the limiting element, coincides with the coefficients of the perturbative series by an application of Corollary \ref{cor:perturbation-equal} independently on the chosen subsequence. In other words, the formal power series in $\lambda$ corresponding to the various limit points coincides. Even if the limit points are potentially different they have the same asymptotic series expansion.

We observe that the result of Theorem \ref{eq:th-*weak-convergence} are not in contrast with Haag's theorem because the interaction Lagrangian which is considered here are always of compact support because of the presence of the smearing function $g$.

\subsection{The four dimensional case for an  interaction Lagrangian
 $\lambda \phi^4$}\label{se:limit4d}

In order to analyze $\lambda \phi_4^4$ theory in the framework presented above we need to solve two problems which become manifest analyzing the form of the renormalized Lagrangian density with counter terms.

Using e.g. the BPHZ renormalization scheme, the renormalized Lagrangian density with counter terms of a $\lambda \phi^4_4$ theory takes the form
\[
\tilde{\mathcal{L}}  =  
-\frac{1}{2}\partial_\mu \phi \partial^\mu \phi
 - m^2 \frac{\phi^2}{2} - \lambda\frac{\phi^4}{4} 
- \delta Z \frac{1}{2} \partial_\mu \phi\partial^\mu\phi  - \delta M \frac{\phi^2}{2} -\delta{\lambda} \frac{\phi^4}{4} + \delta C
\]
where $\delta Z$, $\delta M$ and $\delta \lambda$ are respectively the coefficients which express the wave function renormalization, the mass renormalization and the coupling constant renormalization. Furthermore $\delta C$ is constant in the field, see e.g. \cite{Collins} and it is inserted in order to render the contributions with external leg finite.
These parameters depend on the used regularization of the propagators, namely they depend on the parameter $\Lambda_2$ in the notation of the present paper, furthermore, they are divergent in the limit of vanishing $\Lambda_2$. They are chosen in such a way to make the various Feynman graphs which appears in the perturbative expansion of the $S$-matrix well defined to all orders, see also \cite{BruDutFre09}.

We now observe, that the Feynman graphs which needs to be renormalization are those which have superficial degree of divergence bigger or equal to $0$. 
In $\lambda \phi^4_4$, the superficial degree of divergence of a graph is $4-E$, where $E$ is the number of external lines and it does not depend on the number of internal vertices. Since the graphs with odd number of external lines cancels because of symmetry reasons, we have that the graphs which needs to be regularized have $0$, $2$ or $4$ external legs.
The regularization of these graphs  influence the renormalized constants $\delta Z$, $\delta M$ and $\delta \lambda$. 
In other words, contrary to the case $\lambda \phi^4_3$, we do not have an a priori closed expression for $\delta Z$, $\delta M$ and $\delta \lambda$ but we know them as asymptotic power series only because an infinite number of graphs needs to be addressed. This is the first problem we have to tackle in applying the methods discussed in the present paper.

We furthermore observe that the wave function renormalization proportional to $\delta Z$ introduces  counter terms in  $L_I$ which are proportional to $\phi \Box \phi$ and so this interaction Lagrangian is not of the form given in \eqref{eq:interaction-Lagrangian} and analyzed in this paper.
This is the second problem which forbids a direct application of the methods discussed in section \ref{se:limit3d} for $\lambda \phi^4_4$.

We discuss now how to modify the description of the the theory in order to put the problem in a form where theorems \ref{th:convergence-Smatrix} and theorems \ref{th:convergence-unitary} can be applied.
To this end we assume to have obtained the form of the counter terms to all order in  the powers of the coupling constant in one of the available renormalization schemes
e.g. in the zero momentum regularization namely according to the BPHZ scheme.
Truncating the series at order $N$ we thus have the following functions
\[
\delta\lambda^{N}(\Lambda_2), \qquad \delta M^N(\Lambda_2), \qquad \delta Z^N(\Lambda_2), \qquad \delta C^N(\Lambda_2).
\]
In this paper we have seen that when the parameters $\Lambda_i$ are non vanishing, the corresponding $S$-matrix is described by a well defined unitary operator, we then want to construct sequences of unitary operators whose limit points agrees with perturbation theory.
The starting sequence is $\pi(S(\Lambda_{I,\Lambda_1/n, \Lambda_2/(\log(n))}))$ for various $n$,  we thus modify the definition of the Lagrangian making $N$ in the interaction Lagrangian
\[
L_{I,\Lambda_1,\Lambda_2,N} := \int e^{-\Lambda_1^2 \phi^2} (\delta C^N -\delta Z^{N} \phi\Box\phi - \delta M^{N} \phi^2 -\lambda^{N} \phi^4  ), 
\]
to grow with $n$.
This solves the first problem.

To solve the second problem, we observer that the wave function renormalization can be treated with a redefinition of the free field theory together with a rescaling of the field configurations. 
In other words, the rescaled field is
\[
\phi_0 = \sqrt{1+\delta Z} \phi
\]
and with this rescaling, the free lagrangian density of the theory is unchanged and equal to 
\[
\mathcal{L}_0(\phi_0) = -\frac{1}{2} \partial_\mu\phi\partial^\mu\phi - m^2\frac{\phi^2}{2}.
\]
The obtained Lagrangian density, with respect to the rescaled field, with truncated renormalization constant at order $N$, is now:
\[
\mathcal{L}_{I} 
=  - \left(\frac{m^2+\delta M^{N}}{1+\delta Z^{N}}-m^2 \right) \frac{\phi_0^2}{2} -\left(\frac{\lambda+\delta \lambda^{N}}{1+\delta Z^N} \right)   \frac{\phi_0^4}{4}. 
\]
Where, as noticed above, $\delta Z^N$, $\delta \lambda^N$ and $\delta M^N$ are known for every $N$ and every $\Lambda_2$.

We introduce the following new functions of $N$ and the cutoff constant $\Lambda_2$
\begin{align*}
\tilde{M}(N,\Lambda_2) &:= \frac{m^2+\delta M^{N}}{1+\delta Z^{N}}-m^2\\
\tilde{\lambda}(N,\Lambda_2) &:= \frac{\lambda+\delta \lambda^{N}}{1+\delta Z^N} \\
\tilde{C}(N,\Lambda_2) &:= \frac{C^{N}}{1+\delta Z^N}. 
\end{align*}
With these at disposal, the interaction Lagrangian we have obtained is 
\begin{equation}\label{eq:LIPphi44}
L_{I,\Lambda_1,\Lambda_2,N}(g) :=  \int_{M}
e^{-\Lambda_1^2 \phi_0^2(x)}  \left( \tilde\lambda(N) \frac{ \phi_0^4(x)}{4} + \tilde{M}(N,\Lambda_2)(x) \frac{\phi_0^2(x)}{2}  +\tilde{C}(N,\Lambda_2)(x) \right) g(x) \dvol x
\end{equation}
and, for every $N$, it takes a form of one of the Lagrangian studied above in \eqref{eq:interaction-Lagrangian}.
This procedure and the identifications of relevant renormalization constants is called in the literature essential renormalization.

The free interaction Lagrangian $\mathcal{L}_0(\phi_0)$ is unchanged. We thus have that for fixed $N$ and fixed $\Lambda_i$,  $S(L_{I, \Lambda_1,\Lambda_2,N})$ can be obtained applying Theorem \ref{th:convergence-Smatrix}. 
Furthermore by Theorem \ref{th:convergence-unitary},
$\pi (S(L_{I, \Lambda_1,\Lambda_2,N}))$ is an unitary operator in the GNS representation of the Minkowski vacuum of the free theory $(\mathcal{H},\pi,\Psi)$.
The removal of the cutoffs can be obtained constructing sequences of unitary operators for vanishing $\Lambda_i$ and growing $N$ which are convergent in the weak-$*$ topology over $\mathcal{B}(\mathcal{H})$. 
This can be done keeping fixed the asymptotic form of limit point.
We have actually the following 
\begin{theorem}\label{eq:th-*weak-convergence-4d}
Consider a four dimensional Minkowski spacetime.
Let $\Lambda_1,\Lambda_2$ be two strictly positive constants, and consider $k\mapsto L_{I,\Lambda_1/k,\Lambda_2/\log(k+1), k}$ where $L_I$ is given as in \eqref{eq:LIPphi43}. 
Let $(\mathcal{H},\pi,\psi)$ be the GNS triple of the vacuum state of the free theory.
The corresponding  $S$-matrix, denoted by 
$S(L_I)(k)$,
is described by a suitable unitary operator in $\mathcal{B}(\mathcal{H})$.
Then

(i) Consider $S_n(L_I)(k)$, the $n$-th order contribution in $\lambda$ of $S(L_I)(k)$, it holds that 
\[
\lim_{k\to 0}
S_n(L_I)(k) = 
S_{BPHZ}(L_{I,0,0,n})_n.
\]

(ii) Consider the sequence of unitary operators $I=\{U_k = \pi(S(L_{I}(k)))\} _{k\in \mathbb{N}}$. It is possible to extract a subsequence of $I$, labeled by $\{k_j\}_j$, which makes the sequence of unitary operator 
\[
\{U_{k_j} \} _{j\in \mathbb{N}}
\]
weakly-$*$ convergent in $\mathcal{B}(\mathcal{H})$ to an unitary operator $S$.
\end{theorem}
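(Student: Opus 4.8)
The plan is to reduce both assertions to the fixed-cutoff results of Sections \ref{se:regularzedSmatrix} and \ref{se:operators-in-hilbertspace}, using that for every finite $k$ the regularized theory is one of the models already treated. Indeed, for fixed $N=k$ and fixed strictly positive cutoffs $(\Lambda_1/k,\Lambda_2/\log(k+1))$, the essential-renormalization rescaling $\phi_0=\sqrt{1+\delta Z^N}\,\phi$ turns \eqref{eq:LIPphi44} into an interaction Lagrangian of exactly the form \eqref{eq:interaction-Lagrangian}, with a Schwartz weight in the vertex parameter $a$ supplied by the Gaussian factor $e^{-\Lambda_1^2\phi_0^2}$. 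Hence Theorem \ref{th:convergence-Smatrix}, together with Corollary \ref{cor:convergenceP34}, gives absolute, field-uniform convergence of the series defining $S(L_I)(k)$, and Theorem \ref{th:convergence-unitary} shows $\pi(S(L_I)(k))$ is a unitary operator on $\mathcal{H}$. This is the input common to (i) and (ii).

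For (ii) I would argue verbatim as in Theorem \ref{eq:th-*weak-convergence}. The operators $U_k$ all lie in the unit ball of $\mathcal{B}(\mathcal{H})$; since the Fock space $\mathcal{H}$ over the separable one-particle space is separable, this ball is weak-$*$ metrizable and, by Banach--Alaoglu, weak-$*$ sequentially compact, so a subsequence $U_{k_j}$ converges weak-$*$ to a contraction $S$. The delicate point is that a weak-$*$ limit of unitaries need not be unitary; to obtain unitarity of $S$ I would pass to a subsequence along which both $U_{k_j}$ and $U_{k_j}^{*}$ converge and invoke the structural identity $S_\Lambda^{-1}=S_\Lambda^{*}$ and the causal factorization recorded in the Remarks after Theorem \ref{th:convergence-Smatrix}, exactly as the unitarity in Theorem \ref{th:convergence-unitary} was inherited from $r_\Lambda$ being a $*$-homomorphism.

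For (i) the key observation is that truncation at order $N$ leaves the low perturbative orders untouched. Expanding the redefined couplings $\tilde\lambda(N,\Lambda_2)$, $\tilde M(N,\Lambda_2)$, $\tilde C(N,\Lambda_2)$ as formal power series in $\lambda$, their coefficients up to order $n$ agree with the untruncated ones as soon as $N\ge n$, because truncating $\delta Z^N,\delta M^N,\delta\lambda^N$ at order $N$ only modifies contributions of order strictly larger than $N$, and the ratios defining $\tilde\lambda,\tilde M,\tilde C$ propagate this stability order by order. Thus for $k\ge n$ the coefficient $S_n(L_I)(k)$ depends on $k$ only through the cutoffs $(\Lambda_1/k,\Lambda_2/\log(k+1))$, and on this stabilized piece I would run the argument of Theorem \ref{th:3d-limits-noLambda}: rewriting each vertex through the factor $\tfrac1{\Lambda_1}r(a/\Lambda_1)$, the collapse of these to Dirac deltas restores the polynomial $\phi_0^4$ interaction with the regularized propagator $\Delta_{F,\Lambda_2}$, after which the remaining $\Lambda_2\to0$ limit is finite precisely because the $\tilde M$ and $\tilde C$ terms are the BPHZ counterterms cancelling the coinciding-point singularities. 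Taking the two limits jointly along the diagonal $(\Lambda_1/k,\Lambda_2/\log(k+1))$ as in Corollary \ref{cor:perturbation-equal}, and using that at fixed order $n$ only finitely many propagators occur, so the powers $\Delta_{F,\Lambda_2}^{\,p}$ grow only polylogarithmically in $k$ and are dominated by the $k^{-1}$ decay carried by $\Lambda_1/k$, yields $\lim_{k\to\infty} S_n(L_I)(k)=S_{BPHZ}(L_{I,0,0})_n$.

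The hard part will be coordinating the three simultaneous limits — the order-$n$ stabilization $N=k\ge n$, the adiabatic collapse $\Lambda_1/k\to0$, and the renormalization limit $\Lambda_2/\log(k+1)\to0$ — along the single sequence in $k$, and in particular verifying in four dimensions that the essentially renormalized, rescaled Lagrangian \eqref{eq:LIPphi44} reproduces the standard BPHZ coefficients at each order despite now requiring infinitely many counterterms of worse, power-law divergence. A secondary but genuinely analytic obstacle is upgrading the weak-$*$ limit in (ii) from a contraction to a bona fide unitary.
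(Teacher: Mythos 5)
Your proposal takes essentially the same route as the paper: part (ii) is obtained from the fixed-cutoff results (Theorem \ref{th:convergence-Smatrix}, Theorem \ref{th:convergence-unitary}) together with weak-$*$ sequential compactness of the unit ball, exactly as in Theorem \ref{eq:th-*weak-convergence}, and part (i) by rerunning Theorem \ref{th:3d-limits-noLambda} and Corollary \ref{cor:perturbation-equal} after noting that in four dimensions the propagators are bounded by $1/\Lambda_2^2$ and the counterterms grow at most polynomially in $1/\Lambda_2$, hence polylogarithmically in $k$, which is dominated by the power decay carried by $\Lambda_1/k$. The two obstacles you flag at the end --- upgrading the weak-$*$ limit from a contraction to a unitary, and the order-by-order control of the infinitely many power-law-divergent counterterms --- are precisely the points the paper's own proof also passes over without further argument, so your proposal matches it in both substance and level of detail.
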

\begin{proof}
The second part of the theorem is a direct consequence Theorem \ref{th:convergence-Smatrix} and of Theorem \ref{th:convergence-unitary} which can be applied thanks to the form of $L_{I,\Lambda_1,\Lambda_2,N}$ given in \eqref{eq:LIPphi44} and because the free theory is the ordinary massive Klein-Gordon field.
To prove the first part of the Theorem we observe that the propagators are bounded by $1/\Lambda_2^2$ and that in each Feynman diagram there are finitely many propagators. Furthermore, the counterterms $\delta Z^N, \delta M^N, \delta \lambda^N, C^N$ are used to regularize certain Feynman diagrams, hence, up to finite reminder, in the limit of vanishing $\Lambda_2$ they can be bounded by the bounds satisfied by the Feynman diagram they regularize. Hence they also grow at most polynomially in $1/\Lambda_2$. 
With this observation, this part of the Thesis can bo proved along the line of the proof of Theorem \ref{th:3d-limits-noLambda} and of Corollary \ref{cor:perturbation-equal}.
\end{proof}

As for the $\lambda\phi^4_3$ case, the limit point obtained in part (ii) of Theorem \ref{eq:th-*weak-convergence-4d} is not unique because it could depend on particularly chosen subsequence of $I$ actually the existence of different subsequences of $I$ with different limit points cannot be excluded.
Despite of this fact, part (i) of Theorem \ref{eq:th-*weak-convergence-4d} guaranties that the asymptotic expansions of every limit point agrees and coincide with the perturbative expansion.

\section{Towards the adiabatic limits: Mayer expansion in the regularized theory}\label{se:adiabatic-limits}

In this section we discuss the adiabatic limit of the expectation values of certain observables. The adiabatic limit is the limit where the interaction Lagrangian turns out to be non vanishing everywhere in space, namely the limit where $g$ tends to $1$ taken in a suitable sense.

Furthermore, in this section we shall work in the theory where the cutoffs $\Lambda_i$ are strictly positive.

This is not done at the level of the $S$-matrix because we already know from Haag's Theorem that this limit cannot give a well defined operator there. 
In particular, coefficients of the expansion of $S$ in powers of the coupling constant $\lambda$ diverge in that limit. 

At the algebraic level and in perturbation theory, the adiabatic limits of the Bogoliubov map can be treated using Einstein causality and at least for local observables this gives origin to a well defined algebra of observables \cite{BrunettiFredenhagen00, HollandsWald2001}.

In view of the fact that causal factorisation holds in the time direction (see, Proposition 4.3 in \cite{Doplicher:2019qlb}) and of the fact that the employed cutoff $\chi$ in $G_\Lambda$ is of compact support, the very same result can be taken in this paper.

\subsection{Adiabatic limit and Mayer expansion}

To analyze the long distance behavior (adiabatic limit) and the level of $S$ matrices we may use Penrose Ruelle theorem \cite{Penrose, Ruelle} to bound the Mayer series independently of the volume, see the lecture notes \cite{Procacci} and references therein for further details. Here we adapt those results to the case studied in this paper. 
In order to study the adiabatic limit, it is useful to study directly 
\[
\mathcal{P}(L_I) := \frac{1} {\text{vol}}\log(|S_\Lambda(L_I)(\phi)|) 
\]
where in this section $w = \Delta_{+,\Lambda}$ and where  $\log$ is taken after evaluating on a generic field configuration $\phi\in\mathcal{C}$.
Furthermore, $\text{vol} = \|g\|_1$.
In this section we shall consider 
\[
L_I= \lambda\int f(a)e^{\mathrm{i}\phi(x)a} \dvol a g(x) \dvol x.
\]
where $f$ is smooth and of compact support and that its support is contained in the interval $[-A,A]$. $g$ is a compactly supported smooth function and we want to take the limit where $g$ tends to $1$ for $\left.\mathcal{P}(L_I)\right|_{\phi=0}$.

In order to get good estimate for $\mathcal{P}(L_I)$ and hence for 
$\log{|S_\Lambda|}$, 
we observe that for positive $\Lambda$
\[
|S_\Lambda(\varphi)| \leq \tilde{S} 
\]
where now
\[
\tilde{S} = 
1+\sum_{n\geq 1} \frac{ \lambda^n }{n!} \int_{M^{n} } \dvol x^n \int_{\mathbb{R}^n}\dvol a^n 
\prod_{j=1}^{n} 
\left( g(x_j) |f(a_j)|
\right) e^{-\sum_{i<j}a_ia_j w_s(x_i,x_j)}
\]
where $w_s=\Re{ \Delta_{+,\Lambda}}=\Re{ \Delta_{F,\Lambda}}$.
This expression if very close to the logarithm of a classical partition function. 
Actually, up to the integrals over $a$, 
$\tilde{S}$ is the gran canonical partition function of a gas of classical particles with unfixed charge subjected to a two body force descending from the classical potential $w_s$.
The main differences are in the fact that $w_s$ depends also on time and furthermore, there is no sharp choice of the charge $a$ and integrals over the charges needs to be employed.
For this reasons the results already present in the literature cannot be applied straightforwardly and there is the need to adapt them.
This is the task we accomplish below.
We thus study 
\[
\tilde{\mathcal{P}}=\frac{1}{\|g\|_1\|f\|_1}\log{\tilde{S}} 
\]
which can be used to get upper bounds for $\mathcal{P}$ uniformly in $\phi$.
We expand it in powers of the coupling constant $\lambda$
\[
\tilde{\mathcal{P}} = \sum_{n\geq 1} \tilde{C}_n \lambda^{n}
\]
In the context of statistical mechanics, this power expansion is called {\bf Mayer expansion} and $\tilde{C}_n$ are called Mayer coefficients \cite{Procacci}. 
We recall that $\tilde{\mathcal{P}}(L_I)$ admits an expansion as a sum over graphs, to present it we recall few basic fact about graphs.
We recall that a graph $G$ is formed by a couple $(V,E)$ where $V$ is the set of vertices usually depicted as points, and $E$ is the set of edges usually depicted as lines which couple of vertices. An edge is thus characterized by a couple of vertices.
The set of vertices of a graph is denoted by $V(G)$ and the set of edges by $E(G)$.
An edge with equal endpoints is called loop (or tadpole in the physical literature).
A graph has multiple edges if in his edge set there are lines which have the same endpoints. 
A graph which has no loops nor multiple edges is called simple.  A graph is connected if for every couple of vertices it exists a sequence of edges which joins the two vertices possibly passing through various other vertices. 

\begin{proposition}
The following holds
\[
\tilde{C}_n = \frac{1}{n!}\frac{1}{\|g\|_1\|f\|_1} \int\dvol x_1\dots \dvol x_n
\dvol a_1 \dots \dvol a_n
\prod_{j=1}^n \left(
g(x_j)f(a_j)
\right)
\Phi_T(x_1,a_1;\dots ; x_n,a_n)
\]
where for $n=1$ $\Phi_T(x_1,a_1)=1$ 
while for
$n>1$
\[
\Phi_T(x_1,a_1;\dots ; x_n,a_n) = \sum_{G\in\mathcal{G}_n} 
\prod_{\{i,j\}\in E(G)}  
\left(e^{-a_ia_jw_s(x_i,x_j)}
-1\right)
\]
where $\mathcal{G}_n$  is the set of connected simple graphs in  $\{1,\dots,n\}$. The edges $\{i,j\}\in E(G)$ are always ordered in such a way that $i<j$.
\end{proposition}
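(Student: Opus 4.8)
The plan is to recognise the statement as the standard Mayer (Ursell) cluster expansion: $\tilde{S}$ is, up to the integration over the charges $a$, the grand-canonical partition function of a classical gas with two-body weight $e^{-a_ia_jw_s(x_i,x_j)}$, so $\log\tilde{S}$ is generated by the \emph{connected} graphs. I would carry this out order by order in $\lambda$, so that no analytic subtlety beyond absolute convergence is required. Write $\mathrm{d}\mu(x,a):=g(x)|f(a)|\,\dvol x\,\dvol a$ for the single-vertex weight, $e_{ij}:=e^{-a_ia_jw_s(x_i,x_j)}$ for the two-body Boltzmann factor, and $\zeta_{ij}:=e_{ij}-1$ for the associated Mayer function, so that by definition
\begin{equation*}
\tilde{S}=\sum_{n\geq 0}\frac{\lambda^n}{n!}\int\prod_{j=1}^n\mathrm{d}\mu(x_j,a_j)\prod_{1\leq i<j\leq n}e_{ij}.
\end{equation*}

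First I would expand each factor as $e_{ij}=1+\zeta_{ij}$ and multiply out. Since every unordered pair $\{i,j\}$ contributes at most one factor $\zeta_{ij}$, one has $\prod_{1\leq i<j\leq n}(1+\zeta_{ij})=\sum_{G}\prod_{\{i,j\}\in E(G)}\zeta_{ij}$, the sum running over all \emph{simple} graphs $G$ on the vertex set $\{1,\dots,n\}$. This rewrites the $n$-th term of $\tilde{S}$ as a sum over all, not necessarily connected, simple graphs, each weighted by $\int\prod_j\mathrm{d}\mu_j\prod_{\{i,j\}\in E(G)}\zeta_{ij}$.

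The key step is the passage from all graphs to connected graphs via the exponential formula. Any simple graph $G$ on $\{1,\dots,n\}$ induces a partition $P$ of $\{1,\dots,n\}$ into its connected components, and because $\mathrm{d}\mu$ is a product measure and every edge joins two vertices of the same component, the integrated weight of $G$ factorises over the blocks of $P$. Collecting graphs according to $P$ and setting $\phi_k:=\int\prod_{j=1}^k\mathrm{d}\mu_j\sum_{G\in\mathcal{G}_k}\prod_{\{i,j\}\in E(G)}\zeta_{ij}$ for the connected integral, one obtains
\begin{equation*}
\tilde{S}=\sum_{n\geq 0}\frac{\lambda^n}{n!}\sum_{P\vdash\{1,\dots,n\}}\prod_{B\in P}\phi_{|B|}.
\end{equation*}
This is exactly the shape to which the exponential formula for exponential generating functions applies, giving $\tilde{S}=\exp\!\big(\sum_{k\geq 1}\tfrac{\lambda^k}{k!}\phi_k\big)$ and hence $\log\tilde{S}=\sum_{k\geq 1}\tfrac{\lambda^k}{k!}\phi_k$. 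Reading off the coefficient of $\lambda^n$ in $\tilde{\mathcal{P}}=(\|g\|_1\|f\|_1)^{-1}\log\tilde{S}$ yields $\tilde{C}_n=\tfrac{1}{n!\,\|g\|_1\|f\|_1}\phi_n$, which is precisely the asserted formula with $\Phi_T=\sum_{G\in\mathcal{G}_n}\prod_{\{i,j\}\in E(G)}\zeta_{ij}$; for $n=1$ the only graph in $\mathcal{G}_1$ has no edges, so the empty product gives $\Phi_T=1$, as stated.

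The main obstacle is purely the justification that these manipulations are legitimate rather than merely formal. For $\Lambda>0$ the kernel $w_s$ is bounded and positive, so by the positivity argument already used in the proof of Lemma \ref{le:estimate} one has $\prod_{1\leq i<j\leq n}e_{ij}=e^{-\sum_{i<j}a_ia_jw_s(x_i,x_j)}\leq e^{\frac{1}{2}\sum_i a_i^2 W}$, whence the $n$-th term of $\tilde{S}$ is bounded by $\tfrac{\lambda^n}{n!}\big(\|g\|_1\int_{\mathbb{R}}|f(a)|e^{\frac{a^2}{2}W}\dvol a\big)^n$ and $\tilde{S}\leq\exp\!\big(\lambda\|g\|_1\int_{\mathbb{R}}|f(a)|e^{\frac{a^2}{2}W}\dvol a\big)$. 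Since $f$ has compact support this is finite, and as every term of $\tilde{S}$ is nonnegative, $\tilde{S}$ is an entire, strictly positive function of $\lambda$ with $\tilde{S}\geq 1$; thus $\log\tilde{S}$ is analytic near $\lambda=0$ with convergent power series, the combinatorial identity holds order by order, and the interchange of $\log$ with the graph sum is justified. The only genuine content is then the partition/exponential-formula bookkeeping, which I would either cite from the cluster-expansion literature (e.g. \cite{Procacci}) or verify directly by induction on $n$.
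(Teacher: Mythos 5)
Your proposal is correct and follows essentially the same route as the paper's proof: expand each Boltzmann factor as $1+\zeta_{ij}$, organise the resulting sum over all simple graphs according to their connected components, and use the exponential formula to identify $\log\tilde{S}$ with the generating series of connected-graph integrals. The only differences are cosmetic --- the paper carries out the partition/multinomial bookkeeping of the exponential formula explicitly rather than citing it, and your added analyticity/convergence justification (via the positivity bound of Lemma \ref{le:estimate}) is a point the paper leaves implicit.
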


\begin{proof}
Notice that 
\begin{align*}
\tilde{S} &= 
1+\sum_{n\geq 1} \frac{\lambda^n }{n!} \int_{M^{n} } \dvol x^n \int_{\mathbb{R}^n}\dvol a^n 
\prod_{j=1}^{n} 
\left( g(x_j) f(a_j)
\right) e^{-\sum_{i<j}a_ia_j w_s(x_i,x_j)}
\\
&= 
1+\sum_{n\geq 1} \frac{ \lambda^n }{n!}
\int_{M^{n} } \dvol x^n \int_{\mathbb{R}^n}\dvol a^n 
\prod_{j=1}^{n} 
\left(g(x_j) f(a_j)
\right) 
\prod_{i<j}
\left(e^{-a_ia_j w_s(x_i,x_j)}-1+1\right)
\\
&= 
1+\sum_{n\geq 1} \frac{ \lambda^n }{n!} \int_{M^{n} } \dvol x^n \int_{\mathbb{R}^n}\dvol a^n 
\prod_{j=1}^{n} 
\left( g(x_j) f(a_j)
\right) 
\sum_{G\in\tilde{\mathcal{G}}_n}
\prod_{\{i,j\}\in E(G)}
\left(e^{-a_ia_j w_s(x_i,x_j)}-1\right)
\end{align*}
where in the last equality we have expanded the product over couples of edges and 
where $\tilde{\mathcal{G}}_n$ is the set of all simple (unoriented) graphs (connected and not connected). $\tilde{\mathcal{G}}_n$ contains also the empty graph (the graph without edges). Its corresponding factor is $1$. Each graph in the sum over $\tilde{\mathcal{G}}_n$ is a collection of connected sub graphs over the elements of a suitable partition of $V(G)$. 
Hence expanding the sum over $\tilde{\mathcal{G}}_n$ as a sum over all possible partitions of its vertices and a sum over connected graphs we get
\begin{align*}
\tilde{S} 
&= 
\sum_{n\geq 0} \frac{ \lambda^n }{n!} \!\!\!\int_{M^{n} } \!\!\!\!\!\dvol x^n\!\! \int_{\mathbb{R}^n}\!\!\!\!\!\dvol a^n \!\!
\prod_{j=1}^{n} \!
\left( g(x_j) f(a_j)
\right)\!\!
\sum_{k=1}^n \!\sum_{\{I_1,\dots, I_k\}\in\Pi_n^{k}}
\!
\prod_{l=1}^k
\sum_{G\in{\mathcal{G}}_{I_l}}
\!\prod_{\{i,j\}\in E(G)}
\!\!\!\!\left(e^{-a_ia_j w_s(x_i,x_j)}\!-\!1\right)
\\
&= 
1+\sum_{n\geq 1} \frac{ \lambda^n  }{n!} \int_{M^{n} } \dvol x^n \int_{\mathbb{R}^n}\dvol a^n 
\prod_{j=1}^{n} 
\left( g(x_j) f(a_j)
\right)
\sum_{k=1}^n \sum_{\{I_1,\dots, I_k\}\in\Pi_n^{k}}
\prod_{l=1}^k
\Phi_T((x,a)_{I_l})
\end{align*}
where $\Pi_n^k$ is the set of all possible partitions of $n$ elements into $k$ non empty subsets 
and $\mathcal{G}_{I}$ is the set of simple connected graphs joining the vertices contained in $I$. 
If $I=\{1\}$, namely it contains a single element then $\mathcal{G}_{\{1\}}$ is formed by  the empty graph only. In the previous formula the factor corresponding to the empty graph is $1$. 
Furthermore $(x,a)_{I_l} = (x_{I_l^1},a_{I_l^1};\dots ; x_{I_l^{|I|}},a_{I_l^{|I|}})$.
Distributing the integrals over 
$x^n$ and $a^n$ in the sum over possible partitions, 
we obtain that
\begin{align*}
\tilde{S} 
&= 
1+\sum_{n\geq 1} \frac{ \lambda^n }{n!} 
\sum_{k=1}^n \sum_{\{I_1,\dots, I_k\}\in\Pi_n^{k}}
\prod_{l=1}^k
B(|I_l|)
\end{align*}
where now 
\[
B(1) =  \int_{M } \dvol x \int_{\mathbb{R}}\dvol a\; 
 g(x) f(a)
\]
while for $k>1$
\[
B(k) = \sum_{G\in\mathcal{G}_k} \int_{M^{k} } \dvol x^k \int_{\mathbb{R}^k}\dvol a^k 
\prod_{j\in V(G)} 
\left( g(x_j) f(a_j)
\right)
\prod_{\{i,j\}\in E(G)} \left(e^{-a_ia_j w_s(x_i,x_j)}-1\right).
\]
Notice that $B$  depends only on $k$, hence the sum over $k$, $n$ and the partitions $\Pi_n^k$ can be rearranged.
Actually, observing that 
the number of partitions in $\Pi_n^k$ of $n$ elements into $k$ subsets $\{I_j\}$ where the number of elements in each $I_j$ is fixed by $|I_j|$ is $n!/(|I_1|!\dots |I_k|!)$
and that in the rearrangement the same partitions $\{I_j\}$ and the corresponding $\prod_l B(|I_l|)$ can appear $k!$ times, we have that
\begin{align*}
\tilde{S} 
&= 
1+\sum_{n\geq 1} \frac{ \lambda^n }{n!} 
\sum_{k=1}^n \sum_{\{I_1,\dots, I_k\}\in\Pi_n^{k}}
\prod_{l=1}^k
B(|I_l|)
\\
&= 
1+\sum_{k\geq 1} \frac{1}{k!} 
\left(\sum_{m=1}^{\infty} \frac{ \lambda^m}{m!}  B(m)\right)^k
\\
&= \exp
\left(\sum_{m=1}^{\infty} \frac{ \lambda^m}{m!}  B(m)
\right).
\end{align*}
We have thus proved that  
\begin{align*}
\log\tilde{S} &=
\sum_{n=1}^{\infty} \frac{ \lambda^n}{n!}  B(n)
\\
&=
\sum_{n=1}^{\infty} 
\frac{ \lambda^n}{n!}\int\dvol x_1\dots \dvol x_n
\dvol a_1 \dots \dvol a_n
\prod_j \left(
g(x_j)f(a_j)
\right)
\Phi_T(x_1,a_1;\dots ; x_n,a_n),
\end{align*}
from which the thesis follows.
\end{proof}
We may estimate the coefficients $\tilde{C}_n$ of the expansion of $\tilde{\mathcal{P}}$ in various way.
If $f$ is of compact support and if $\supp f \subset [-A,A]$, we can bound 
$|e^{-a_ia_jw_s(x_i,x_j)}-1|$ 
by a constant $K$ because $w_s$ is continuous and $(x,a)$ are contained in a compact region, the Cartesian product of the support of $g$ with the support of $f$.  
Furthermore, if the support of $g$ and $f$ is sufficiently small, $K$ can be chosen to be $1$.
With this choices we have that 
\[
|\tilde{C}_n| = \leq \frac{(K\|g\|_1\|f\|_1)^{n-1}}{n!}
\sum_{G\in\mathcal{G}_n} 1 
\]
taking the sum over $\mathcal{G}_n$ we observe that we get an estimate which grows too strongly with $n$. The corresponding series cannot be proven to converge. Furthermore the bound depends on the support of $g$. 

In the next we discuss how to remove this constraint.

\subsection{Kirkwood-Salsburg equation and their estimate}

In this section we discuss how to obtain an efficient bound for $\tilde{\mathcal{P}}$.
This discussion is an adaptation of the  theorems of Penrose and Ruelle which holds for a gran canonical ensemble of particles.
We adapt to the setup of this paper the review presented in \cite{Procacci}. 
We start introducing the following set of correlation functions
\[
\rho_{n}(x_1,a_1;\dots ; x_n,a_n) := \frac{1}{\tilde{S}} \sum_{m\geq 0} \frac{\lambda^{m+n}}{m!} \int \dvol^m y \dvol^m b\; \prod_{j=1}^m g(y_j) |f(b_j)|  e^{-U(x_1,a_1;\dots ; x_n,a_n;y_1,b_1;\dots ; y_m,b_m)}
\]
where 
\begin{align*}
U(x,a,y,b)
&=
U(x_1,a_1;\dots ; x_n,a_n;y_1,b_1;\dots ; y_m,b_m)
\\&= \sum_{1\leq i<j\leq n} a_ia_j w_s(x_i,x_j)
+\sum_{i=1}^n\sum_{j=1}^m a_ib_j w_s(x_i,y_j)
+\sum_{1\leq i<j\leq m} b_ib_j w_s(y_i,y_j).
\end{align*}
The expansion of $\rho_n$ in power series is
\[
\rho_n(x_1,a_1;\dots ; x_n, a_n)
=
\sum_{m=0}^{\infty}\rho_{n,j}(x_1,a_1;\dots ; x_n, a_n) \lambda^{n+j}
\]
In this framework, we have the following proposition \cite{Procacci}
\begin{proposition}
The correlation functions
$\rho_{n}$ satisfy the Kirkwood-Salsburg equations
\begin{align*}
\rho_{n}(x_1,a_1;\dots ; x_n,a_n) 
=&  
\lambda e^{-W(x_1,a_1;x_2,a_2;\dots ;x_n,a_n)}
\\
&\cdot\sum_{s\geq 0}\frac{1}{s!}
\int \dvol y_1 g(y_1) \int \dvol b_1 |f(b_1)| 
\dots 
\int \dvol y_s g(y_s) \int \dvol b_s |f(b_s)|   
\\
&\cdot
\prod_{k=1}^s \left(e^{-a_1b_k w_s(x_1,y_k)}-1\right)
\rho_{n-1+s}(x_2,a_2;\dots ; x_n,a_n
;
y_1,b_1;\dots ; y_s,b_s).
\end{align*}
Where
\[
W(x_1,a_1;x_2,a_2;\dots ;x_n,a_n)
:=
\sum_{j=2}^n a_1a_j w_s(x_1,x_j).
\]
\end{proposition}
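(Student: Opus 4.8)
The plan is to follow the classical derivation of the Kirkwood--Salsburg equations from the theory of the grand canonical ensemble, adapted to the present situation in which the ``charge'' $a$ is a continuous integration variable rather than a discrete parameter. The idea is to single out the first particle $(x_1,a_1)$ and to resum the series defining $\rho_n$ by classifying its terms according to which of the dummy integration variables $(y_j,b_j)$ are directly coupled to $(x_1,a_1)$ through a nontrivial bond. Throughout, since we work with $\Lambda_i>0$, the kernel $w_s=\Re\Delta_{+,\Lambda}$ is continuous and bounded, $|f|$ is integrable with compact support, and $g$ is compactly supported, so all the series and integrals below converge absolutely and the rearrangements are legitimate.

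The first step is to split the potential $U$ appearing in the definition of $\rho_n$ into the contribution of $x_1$ and the remainder. Collecting the terms with $i=1$ gives
\[
U(x_1,a_1;\dots;x_n,a_n;y_1,b_1;\dots;y_m,b_m)
=
W+\sum_{j=1}^m a_1 b_j\, w_s(x_1,y_j)+\widetilde U,
\]
where $W=\sum_{j=2}^n a_1 a_j\, w_s(x_1,x_j)$ is exactly the quantity in the statement, and $\widetilde U$ is, by direct inspection of the three groups of pair interactions, precisely the potential $U$ of the reduced configuration $(x_2,a_2;\dots;x_n,a_n;y_1,b_1;\dots;y_m,b_m)$ carrying $n-1$ fixed particles. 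Consequently
\[
e^{-U}=e^{-W}\Big(\prod_{j=1}^m e^{-a_1 b_j w_s(x_1,y_j)}\Big)e^{-\widetilde U},
\]
and $e^{-W}$ factors out of the sum over $m$, producing the $e^{-W}$ of the statement; the single leftover power of $\lambda$ will be accounted for in the index matching below.

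The crux is then the Mayer-type expansion of the coupling of $x_1$ to the integration variables,
\[
\prod_{j=1}^m e^{-a_1 b_j w_s(x_1,y_j)}
=\prod_{j=1}^m\Big(1+\big(e^{-a_1 b_j w_s(x_1,y_j)}-1\big)\Big)
=\sum_{S\subseteq\{1,\dots,m\}}\ \prod_{j\in S}\big(e^{-a_1 b_j w_s(x_1,y_j)}-1\big).
\]
I would classify the terms by the cardinality $s=|S|$ of the set of variables bonded to $x_1$; by symmetry of the integrand under permutation of the pairs $(y_j,b_j)$ there are $\binom{m}{s}$ identical contributions, which allows one to fix $S=\{1,\dots,s\}$ and relabel the remaining $m-s$ variables as fresh integration variables. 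Using $\tfrac{1}{m!}\binom{m}{s}=\tfrac{1}{s!\,(m-s)!}$ and setting $m'=m-s$, the double sum over $(m,S)$ factorizes into a sum over $s$ carrying the bond factors and the integrations over $(y_1,b_1),\dots,(y_s,b_s)$, times a residual sum over $m'$. That residual sum, together with the surviving $e^{-\widetilde U}$, the integrations over the extra variables, and the normalization $1/\tilde S$, reconstitutes exactly $\rho_{n-1+s}(x_2,a_2;\dots;x_n,a_n;y_1,b_1;\dots;y_s,b_s)$. Matching the powers, $\lambda^{n}\cdot\lambda^{s}\cdot\lambda^{-(n-1+s)}=\lambda$, so one factor of $\lambda$ survives and the $1/\tilde S$ of $\rho_n$ cancels against the $\tilde S$ hidden in $\rho_{n-1+s}$, yielding the claimed equation.

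The manipulations are purely algebraic once absolute convergence is in hand, so the main effort is bookkeeping rather than analysis. The two points I expect to require the most care are, first, verifying that $\widetilde U$ coincides term-by-term with the potential of the reduced $(n-1+s)$-particle configuration (a routine but error-prone matching of the three interaction sums), and second, correctly tracking the factorials and the powers of $\lambda$ through the reindexing $m\mapsto(s,m')$, where the identity $\tfrac{1}{m!}\binom{m}{s}=\tfrac{1}{s!(m-s)!}$ together with the permutation symmetry of the integrand is what converts the nested sum into the product of an $s$-sum and the series defining $\rho_{n-1+s}$. Neither step is conceptually hard, but they are the places where the proof can go wrong, so I would write them out explicitly rather than relegate them to inspection.
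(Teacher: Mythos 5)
Your derivation is, in its combinatorial core, correct, and it is exactly the classical Kirkwood--Salsburg argument: split off the interaction of the distinguished particle $(x_1,a_1)$ by writing $U = W + \sum_j a_1 b_j w_s(x_1,y_j) + \widetilde U$ with $\widetilde U$ the potential of the reduced configuration; Mayer-expand $\prod_j e^{-a_1 b_j w_s(x_1,y_j)}$ over subsets; use the permutation symmetry of the integrand and $\frac{1}{m!}\binom{m}{s}=\frac{1}{s!\,(m-s)!}$ to reindex $m\mapsto (s,m')$; and reabsorb the residual sum into $\rho_{n-1+s}$, with $\lambda^{n+s+m'}=\lambda\cdot\lambda^{(n-1+s)+m'}$ leaving the single surviving power of $\lambda$ and the $1/\tilde S$ cancelling. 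Note that the paper does not actually prove this proposition -- it cites \cite{Procacci} -- and your argument coincides with the standard one found there, correctly adapted to the continuous charge variable $a$.

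The one step whose justification would fail as written is the opening claim that continuity and boundedness of $w_s$, together with the compact supports of $f$ and $g$, already give absolute convergence of all the series, legitimizing the rearrangements. They do not: $U$ contains $\binom{m}{2}$ pair terms among the integration variables, so boundedness of $w_s$ and $|b_j|\le A$ only yield $e^{-U}\le e^{C(n+m)^2}$, and $\sum_m \lambda^m e^{Cm^2}\left(\|g\|_1\|f\|_1\right)^m/m!$ diverges for every $\lambda\neq 0$. What actually saves the argument is that $w_s$ is a \emph{positive} (positive-definite) kernel, which gives the stability bound $-U\le (n+m)B$ with $B=\sup_{|a|\le A} a^2 w_s(x,x)/2$: writing $U=\frac{1}{2}w_s(\Psi,\Psi)-\frac{1}{2}\sum_i c_i^2 w_s(z_i,z_i)$ for $\Psi$ a sum of weighted Dirac deltas and discarding the nonnegative term $w_s(\Psi,\Psi)$, exactly as in the proof of Lemma \ref{le:estimate} and in the $e^{2B}$ step of the proof of Theorem \ref{th:estimate-adiabatic}. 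With this linear-in-$m$ bound every series you manipulate converges absolutely and your proof goes through verbatim. Alternatively, you could sidestep convergence entirely by reading the Kirkwood--Salsburg identity as an equality of formal power series in $\lambda$: at each order $\lambda^{n+\ell}$ both sides involve only finitely many terms, which is in fact all the paper subsequently uses through the recursion \eqref{eq:SK-pert}. Either repair is short, but one of them is needed; as stated, the convergence claim is the weak link.
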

The perturbative expansion of these equations are such that 
\begin{equation}\label{eq:SK-pert}
\begin{aligned}
\rho_{n,\ell}(\tilde{x}_1,\dots,\tilde{x}_n)
&= e^{-W(\tilde{x}_1,\tilde{x}_2,\dots, \tilde{x}_n)}
\\
&\cdot
\sum_{s = 0}^\ell \frac{1}{s!} \int \dvol \mu_{\tilde{y}}^s \prod_{k=1}^s \left(e^{-a_1 b_k w_s(x_1,y_k)}-1\right)
\rho_{n-1+s,\ell-s}(\tilde{x}_2,\dots,\tilde{x}_n,\tilde{y}_1,\dots, \tilde{y}_s)
\end{aligned}
\end{equation}
where $\tilde{x}_j = (x_j,a_j)$
$\tilde{y}_j = (y_j,b_j)$
and 
$\dvol\mu_{\tilde{y}}^s = \prod_{j=1}^s g(y_j)  \dvol y_j |f(b_j)| \dvol b_j $.
Equations \eqref{eq:SK-pert} can be solved recursively, starting from 
\[
\rho_{n,0}(\tilde{x}_1,\dots, \tilde{x}_n) = e^{-U(\tilde{x}_1,\dots, \tilde{x}_n)}
\]
and 
\[
\rho_{0,\ell} = \delta_{0\ell}
\]
where here $\delta$ is the Kronecker delta.

There is a relation between the correlation functions of $\rho_n$ and the coefficients $\tilde{C}_\ell$. In particular, we have by direct inspection that 
\[
\int \rho_1(x,a) g(x)|f(a)| \dvol a \dvol x = \lambda\frac{d}{d\lambda} 
\log \tilde{S} 
= \|g\|_1\|f\|_1\sum_{\ell=1}^\infty \ell\tilde{C}_\ell   p^{\ell}
\]
and this implies that, 
\begin{equation}
    \label{eq:tildeCl}
 \,\tilde{C}_{\ell+1}= \frac{1}{(\ell+1) \|g\|_1\|f\|_1}\int \rho_{1,\ell}(x,a) g(x)|f(a)| \dvol a \dvol x 
.   
\end{equation}
Having established the connection between the coefficients of the power expansion of $\log \tilde{S}$ with the correlation function $\rho$, we can now obtain the following estimates for the coefficients.

\begin{theorem}\label{th:estimate-adiabatic}
With $g\geq 0$ and $V$ of the form given above and $\supp{f(a)} \subset [-A,A]$ with $A$ sufficiently large
\[
|\tilde{C}_n| \leq  e^{2 B (n-2)}\frac{n^{n-2}}{n!} E^{n-1} 
\]
where
\[
E = \sup_{a\in[-A,A]} \sup_{x} \int_{M,\mathbb{R}} \left|e^{-a b w_s(x,y) }-1 \right| g(y) |f(b)| \dvol y \dvol b
\]
and
\[
{B} = 
\sup_{a\in[-A,A]} \sup_x  \frac{a^2} {2} w_s(x,x).
\]
\end{theorem}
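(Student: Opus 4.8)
The plan is to reduce the whole estimate to the Penrose--Ruelle tree--graph bound on the Ursell coefficient $\Phi_T$, and then simply to integrate that bound. The natural entry point is the representation of $\tilde{C}_n$ in terms of $\Phi_T$ established above, equivalently the relation \eqref{eq:tildeCl} which expresses $\tilde{C}_n$ through the correlation function $\rho_{1,n-1}$. Concretely, I would aim to establish the tree--graph bound
\[
|\Phi_T(x_1,a_1;\dots;x_n,a_n)|
\le e^{2B(n-2)} \sum_{\tau \in \mathbb{T}_n} \prod_{\{i,j\}\in E(\tau)} \left| e^{-a_i a_j w_s(x_i,x_j)} - 1 \right| ,
\]
where $\mathbb{T}_n$ denotes the set of labelled trees on $\{1,\dots,n\}$, and then to feed it into $\tilde{C}_n = \frac{1}{n!\,\|g\|_1\|f\|_1}\int \prod_j\big(g(x_j)f(a_j)\big)\,\Phi_T\,\dvol x\,\dvol a$.

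The first ingredient I would isolate is \emph{stability} of the regularized gas, which is the only structural consequence of the positivity of $w_s$ that the argument needs. Writing $\Psi = \sum_{i=1}^n a_i\,\delta_{x_i}$ and recalling $U=\sum_{i<j}a_ia_j w_s(x_i,x_j)$, positivity of $w_s$ gives
\[
2\,U(x_1,a_1;\dots;x_n,a_n)
= w_s(\Psi,\Psi) - \sum_{i=1}^{n} a_i^2\, w_s(x_i,x_i)
\ge -\sum_{i=1}^{n} a_i^2\, w_s(x_i,x_i)
\ge -2nB ,
\]
using translation invariance $w_s(x_i,x_i)=W$ and the definition of $B$. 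Hence $U \ge -nB$: the gas is stable with stability constant $B$, and in particular the order--zero correlation $\rho_{n,0}=e^{-U}$ is bounded by $e^{nB}$. The restriction $\supp f\subset[-A,A]$ enters here and in the definition of $E$, since it confines every charge $a_i$ to the compact interval over which the suprema defining $B$ and $E$ are taken, keeping both finite because $w_s$ is continuous and bounded.

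The core of the proof, and the step I expect to be the main obstacle, is to derive the tree--graph bound from the Kirkwood--Salzburg equations \eqref{eq:SK-pert} together with stability, following the Penrose--Ruelle scheme reviewed in \cite{Procacci}. The plan is to argue by induction on the perturbative order, starting from $\rho_{n,0}=e^{-U}$ and $\rho_{0,\ell}=\delta_{0\ell}$: each application of the recursion attaches the new integration vertices $(y_k,b_k)$ to the distinguished vertex $(x_1,a_1)$ through the factors $e^{-a_1 b_k w_s(x_1,y_k)}-1$, which play the role of the edges incident to vertex $1$ in a growing tree, while the remaining factor $\rho_{n-1+s,\ell-s}$ carries the subtrees. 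The delicate points are (i) taming the prefactor $e^{-W}$, with $W=\sum_{j\ge 2}a_1a_j w_s(x_1,x_j)$, by pairing it with the stability bound of the configuration it multiplies, so that the accumulated exponential is controlled uniformly by $e^{2B(n-2)}$, and (ii) checking that the symmetry factors $1/s!$ in \eqref{eq:SK-pert} combine with the branching of the recursion to reproduce exactly the labelled spanning trees and the Cayley count, rather than overcounting. This bookkeeping of the combinatorial and stability constants is where all the work sits; the absolute values $|f|$ and the positivity of $w_s$ ensure that every term in the recursion obeys the same estimate, so no cancellations are required.

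Granting the tree--graph bound, the assembly is routine. Inserting it into the $\Phi_T$--representation of $\tilde{C}_n$ and passing to absolute values (using $g\ge 0$), I would integrate the vertices of each fixed tree $\tau$ from the leaves inward: integrating a leaf $v$ with parent $p$ yields $\int g(x_v)|f(a_v)|\,\big|e^{-a_p a_v w_s(x_p,x_v)}-1\big|\,\dvol x_v\,\dvol a_v \le E$ by the very definition of $E$, and the final root integration yields $\|g\|_1\|f\|_1$. Since a tree on $n$ vertices has $n-1$ edges and $|\mathbb{T}_n|=n^{n-2}$, this gives
\[
|\tilde{C}_n|
\le \frac{1}{n!\,\|g\|_1\|f\|_1}\, e^{2B(n-2)}\, n^{n-2}\, \|g\|_1\|f\|_1\, E^{n-1}
= e^{2B(n-2)}\,\frac{n^{n-2}}{n!}\,E^{n-1} ,
\]
which is the claimed bound.
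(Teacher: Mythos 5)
Your stability estimate ($U\ge -nB$ from positivity of $w_s$) and your final assembly are both fine: given the claimed tree--graph bound, integrating each tree from the leaves inward does give one factor $E$ per edge, the root gives $\|g\|_1\|f\|_1$, and Cayley's count $|\mathbb{T}_n|=n^{n-2}$ produces exactly the stated estimate. The genuine gap is that the lemma everything rests on, the pointwise inequality
\[
|\Phi_T(x_1,a_1;\dots;x_n,a_n)|
\le e^{2B(n-2)}\sum_{\tau\in\mathbb{T}_n}\ \prod_{\{i,j\}\in E(\tau)}\left|e^{-a_ia_jw_s(x_i,x_j)}-1\right|,
\]
is never proved, and the route you propose for it cannot work as described. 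The Kirkwood--Salsburg hierarchy \eqref{eq:SK-pert} is a system of identities for the \emph{integrated} correlation functions $\rho_{n,\ell}$; iterating it yields numerical bounds $|\rho_{n,\ell}|\le K_{n,\ell}$, and the Ursell function $\Phi_T$ --- a pointwise algebraic sum over connected graphs --- never appears in that recursion. Your sketch conflates two distinct classical arguments: Penrose's graph-partition scheme, which does give pointwise tree bounds but needs a \emph{repulsive} interaction so that the non-tree factors $e^{-V_{ij}}$ are $\le 1$, and Ruelle's KS induction, which bypasses $\Phi_T$ entirely. Neither applies off the shelf here, because this gas is stable but not repulsive: the charges $a_i$ take both signs, so $a_ia_jw_s(x_i,x_j)$ can be negative; the classical stable-case tree--graph inequality (Brydges--Battle--Federbush) carries edge factors $|a_ia_jw_s(x_i,x_j)|$ rather than $|e^{-a_ia_jw_s(x_i,x_j)}-1|$, hence a different constant than the $E$ of the statement, and pointwise tree bounds with Mayer edge factors for merely stable potentials are a delicate, much more recent refinement. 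Your prefactor $e^{2B(n-2)}$ is reverse-engineered from the answer rather than derived.

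For comparison, the paper never bounds $\Phi_T$ pointwise. It bounds the $\rho_{n,\ell}$ by induction directly through \eqref{eq:SK-pert}: the dangerous prefactor $e^{-W}$ is controlled by Penrose's cyclic-permutation trick (permutation invariance of $\rho_{n,\ell}$ plus positivity of $w_s$ guarantee some ordering of the arguments with $e^{-W}\le e^{2B}$ --- essentially your stability computation, but used \emph{inside} the recursion), the resulting recursion
\[
K_{n,\tilde l-n}=e^{2B}\sum_{s}\frac{E^s}{s!}\,K_{n-1-s,\tilde l-n-s}
\]
is solved explicitly as $K_{n,\ell}=e^{2B(n+\ell-1)}\,n\,(n+\ell)^{\ell-1}E^{\ell}/\ell!$ --- this is where the Cayley-type factor actually comes from --- and the bound on $\rho_{1,\ell}$ is converted into the bound on $\tilde C_{\ell+1}$ via \eqref{eq:tildeCl}. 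If you carried out the induction you sketch (``attaching new vertices to the distinguished one''), the bookkeeping would collapse into precisely this argument, with trees appearing only as counting devices, never as a pointwise inequality for $\Phi_T$. To repair your proposal, either reproduce that KS induction, or prove (not assert) a stable-case tree--graph inequality with Mayer edge factors, following \cite{Penrose, Ruelle, Procacci}; the latter is substantially harder than the sentence you devote to it.
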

\begin{proof}
This proof follows closely the original proof of Penrose and Ruelle \cite{Penrose, Ruelle}. 
We start discussing the bounds satisfied by the correlation functions.  
We look for suitable constant $K_{n,l}$ such that  
\begin{equation} \label{eq:bound}
|\rho_{n,l}(\tilde{x}_1,\dots, \tilde{x}_n)| \leq K_{n,l} 
\end{equation}
We observe that $|\rho_{1,0}|\leq K_{1,0}=1$ we then use the Kirkwood-Salsburg  recursive relations given in \eqref{eq:SK-pert} to obtain the other $K_{n,l}$. 
We can proceed by induction on $l$ to find the constant $K_{n,l-n}$. If we have obtained all the constants $K_{n,l-n}$ for $l\leq \tilde{l}-1$ and for $1\leq n\leq \tilde{l}$, we get 
\[
|\rho_{n,\tilde{l}-n} (\tilde{x}_1,\dots, \tilde{x}_n)| \leq e^{2 B} \sum_{s=0}^{n-\tilde{l}}\frac{1}{s!} E^s K_{n-1-s,\tilde{l}-n-s}
\]
here we have used the fact that $\rho_{n,\tilde{l}-n} (\tilde{x}_1,\dots, \tilde{x}_n)$ is invariant under permutations of the entries to show that it exsits at least one permutation $j$ such that  
\begin{equation}\label{eq:step-permut}
e^{-W(\tilde{x}_{j_1},\dots, \tilde{x}_{j_{n}})}  \leq e^{2 B}
\end{equation}
This estimate can be obtained using an argument similar to the one used in the proof of Lemma \ref{le:estimate}. We start observing that $w_s$ is a positive product.
Considering $\Pi_n$ the set of all possible cyclic permutations of $\{1,\dots, n\}$ we have that 
$\sum_{j\in\Pi_n} W(\tilde{x}_{j_1},\dots, \tilde{x}_{j_n})=  \frac{w_s(\Psi, \Psi)}{2} - 2nB$ where $\Psi(y) = \sum_{i=1}^n  a_i \delta(y-x_i)$. Hence, 
$\sum_{l\in\Pi_n} W(\tilde{x}_{l_1}\dots, \tilde{x}_{l_n})\geq  - 2nB$, thus it exists at least one permutation $j$ such that the inequality \eqref{eq:step-permut} holds. 
We thus have that \eqref{eq:bound} holds for $K_{n,\tilde{l}-n}$ if 
\[
K_{n,\tilde{l}-n} = e^{2 B} \sum_{s=0}^{n-\tilde{l}}\frac{1}{s!} E^s K_{n-1-s,\tilde{l}-n-s}.
\]
A solution of this recursive relations for $K_{n,\tilde{l}-n}$ with $K_{1,0}=1$ is now
\[
K_{n,\tilde{l}}=e^{2B(n+\tilde{l}-1)}n(n+\tilde{l})^{\tilde{l}-1} \frac{E^{\tilde{l}}}{\tilde{l}!}.
\]
Having found these constants we have that
\[
|\rho_{1,l}(\tilde{x})| \leq e^{-2B} K_{1,l} =e^{2B(l-1)}(1+l)^{l-1} \frac{E^l}{l!}
\]
From this and from equation \eqref{eq:tildeCl} where the coefficients $\tilde{C}_{l+1}$ are related to $\rho_{1,l}$ we get the thesis.
\end{proof}

{\bf Remark}
We observe that if the state we are considering is the Minkowski vacuum then $w_s(x_1,x_2)$ used in the estimates of $E$ in Theorem \ref{th:estimate-adiabatic} is a function which decays exponentially for large spatial separations.
Furthermore, in the four dimensional case, $w_s$ decays as $1/(t_1-t_2)^{3/2}$ in the temporal direction. With this observation we have that in the limit where $g$ tends to $1$ everywhere on $M$ the constant $E$ stays finite.  
\bigskip

With the estimates given in Theorem \ref{th:estimate-adiabatic} we have that the sum which defines $\tilde{\mathcal{P}}$ converges if $\lambda$ is sufficiently small. 
The absolute convergence is actually established if $\lambda < \frac{1}{e^{2B-1}E}$. Furthermore, in view of the previous Remark, this estimate is uniform in the support of $g$.

\section*{Acknowledgments}
The author is grateful to Klaus Fredenhagen, Nicolò Drago and Simone Murro for many useful discussions about the subject of this paper. 
The author is  grateful for the support of the National Group of Mathematical Physics (GNFM-INdAM).
This research was supported in part by the MIUR Excellence Department Project 2023-2027 awarded to the Dipartimento di Matematica of the University of Genova, CUPD33C23001110001.

\printbibliography
\end{document}